\definecolor{darkblue}{rgb}{0,0,0.5}
\newcommand{\defeq}{\vcentcolon=}
\DeclareMathOperator*{\argmin}{arg\,min}
\newcommand\mc[1]{\mathcal{#1}}
\newcommand\bs[1]{\boldsymbol{#1}}
\newcommand\bff[1]{\text{\textbf{#1}}}
\newtheorem{theorem}{Theorem}
\newtheorem{lemma}{Lemma}
\newtheorem{corollary}{Corollary}
\newtheorem{defin}{Definition}
\newenvironment{proof}[1][Proof]{\noindent\textbf{#1.} }{\ \rule{0.5em}{0.5em}}
\begin{document}

\title{End-to-End Capacities of Hybrid Quantum Networks}
\author{Cillian Harney}
\author{Alasdair I.~Fletcher}
\author{Stefano Pirandola}
\affiliation{Department of Computer Science, University of York, York YO10 5GH, United Kingdom}
\begin{abstract}
Future quantum networks will be hybrid structures, constructed from complex architectures of quantum repeaters interconnected by quantum channels that describe a variety of physical domains; predominantly optical-fiber and free-space links. 
In this hybrid setting, the interplay between the channel quality within network sub-structures must be carefully considered, and is pivotal for ensuring high-rate end-to-end quantum communication. 
In this work, we combine recent advances in the theory of point-to-point free-space channel capacities and end-to-end quantum network capacities in order to develop critical tools for the study of hybrid, free-space quantum networks. We present a general formalism for studying the capacities of arbitrary, hybrid quantum networks, before specifying to the regime of atmospheric and space-based quantum channels. 
We then introduce a class of modular quantum network architectures which offer a realistic and readily analysable framework for hybrid quantum networks. By considering a physically motivated, highly connected modular structure we are able to idealize network performance and derive channel conditions for which optimal performance is guaranteed. This allows us to reveal vital properties for which distance-independent rates are achieved, so that the end-to-end capacity has no dependence on the physical separation between users. Our analytical method elucidates key infrastructure demands for a future satellite-based global quantum internet, and for hybrid wired/wireless metropolitan quantum networks. 
\end{abstract}
\maketitle

\section{Introduction}

The current internet is a vast classical network, designed to facilitate global communication and distributed information processing \cite{SlepianNets,CoverThomas, TanenbaumNets, GamalNets}. The inherent robustness of classical information allows for hybrid, flexible network architectures which operate in both optical-fiber and free-space, befitting an environment and mode of application. A future quantum internet will aim to play an analogous role for quantum information technologies \cite{KimbleQI,UniteQInt,RazaviQNet,AdvCrypt}, but the inherent fragility of quantum information makes achieving high-rates over long distances much more challenging. In pursuit of this goal, theoretical and experimental progress in the study of hybrid quantum networks is fundamental and necessary. 

The ultimate limits of fiber-networks are well understood. By means of the Pirandola-Laurenza-Ottaviani-Banchi (PLOB) bound, it is known that the capacity of a fiber-link decays exponentially with respect to the link-length with a precise law \cite{PirPatron09,PLOB}. The PLOB bound has been used to understand the end-to-end network capacities of fiber-based quantum architectures \cite{End2End}, to assess the limits of realistic, random network structures \cite{QuntaoRandQNets, ZhangQInt} and idealized, highly-connected, analytical architectures \cite{OPGQN}. 
These investigations have provided essential insight and motivation for the construction of high performance quantum networks, elucidating key physical properties and network characteriztics. Chiefly, to develop a high-performance, fiber-based quantum network one must carefully consider not only connectivity, but nodal density and maximum link length; leading to high resource demands for large-scale designs.

However, quantum networks will not be limited to just optical-fiber but will collaborate with free-space methods of communication. On the ground, the flexibility of free-space links are obviously more suitable for mobile quantum devices and short-range connections. Meanwhile, the ability to establish ground-to-satellite and intersatellite free-space connections offers remarkable short-cuts for global quantum communications \cite{LiaoSQC,Micius2017,RenSQC,Micius2020,VillarSQC,Micius2021,AdvSQC}. Such connections bypass many decibels of loss that would be otherwise experienced on the ground, and utilize the dynamic nature of satellites to achieve high-rates over global distances. 

Determining the ultimate limits of free-space quantum channels is difficult, requiring tools from quantum information theory \cite{Mike_Ike, GaussRev, SerafiniCV}, optics \cite{GoodmanOptics,SveltoLasers,BohrenScatter} and turbulence theory \cite{TatarskiiTurb, MajumdarTurb, KaushalTurb,AndrewsTurb}. Recent advancements have placed tight upper-bounds on the quantum capacities of point-to-point free-space channels, using a modified PLOB bound that accounts for atmospheric fading processes \cite{FS, SQC}. With these results in hand, we have the ingredients to go beyond the point-to-point scenario and quantitatively study the ultimate limits of communications in free-space quantum networks. 

In this work, we combine results from Refs.~\cite{End2End} and \cite{FS,SQC} in order to place bounds on the end-to-end capacities of generally hybrid quantum networks. In particular, we put forward a formalism for studying the capacities of quantum networks whose channels are described by free-space, fiber, or any medium that can be generalized as a fading channel. This treatment is then specified to fading processes that are experienced by optical transmissions through the atmosphere, or in space.

Furthermore, we introduce a framework to investigate the ultimate limits of hybrid, modular quantum networks. We focus on a modular network design which consists of disjoint sub-networks (or communities) connected to a large-scale backbone network used to mediate intercommunity quantum communication. For the first time, this provides the tools to investigate highly relevant quantum network models, such as: Globally distant fiber sub-networks connected to a satellite backbone network, offering insight into the resource requirements of a satellite-based quantum internet; and wireless, free-space sub-networks on the ground interconnected via a fiber backbone, presenting a useful model for studying hybrid metropolitan networks. 

Extending the techniques of Ref.~\cite{OPGQN} we employ ideally connected structures within different parts of the modular network. In doing so, we are able to derive simple, yet powerful analytical constraints which promise distance-independent rates for modular quantum architectures, and thus optimal performance. These results provide valuable insight into the ultimate limits of hybrid networks,
can help to motivate future quantum network designs and provide a valuable platform upon which to further develop realistic free-space quantum networks.

\subsection{Paper Structure}
This paper is structured as follows: In Section \ref{sec:Prelims} we provide a comprehensive review of quantum networks, point-to-point capacities of general fading channels and the end-to-end network capacities of quantum networks generally composed of fading channels. We then specify this review to optical free-space quantum communications, summarizing recent progress in the determination of ultimate limits for a number of key settings. In Section \ref{sec:Mod} we formalize a network architecture for the study of hybrid, modular quantum networks. We further specify an idealized network architecture which allows us to establish properties that guarantee optimal end-to-end performance and distance-independence. Finally, Section \ref{sec:HybridFS} applies the machinery from the previous sections to investigate the optimal performance of hybrid quantum architectures. In particular, we establish network constraints for communication between remote fiber-based sub-networks connected to a satellite backbone, and for ground-based free-space sub-networks connected to a fiber backbone. Concluding remarks and future investigative paths are then discussed.

\section{Preliminaries \label{sec:Prelims}}

\subsection{Quantum Networks}

An arbitrary quantum network can be described as a finite, undirected graph $\mc{N} = (P,E)$ where $P$ is the set of all network nodes (points/vertices) on the graph, and $E$ is the set of all edges within the network. Each node $\bs{x} \in P$ represents a local register of quantum systems which can be exchanged with connected neighbors. Meanwhile each edge in the network is denoted by the unordered pair ${(\bs{x},\bs{y})\in E}$, and is used to represent a quantum channel $\mc{E}_{\bs{xy}}$ through which users can exchange quantum systems. Two nodes $\bs{x}$ and $\bs{y}$ are connected if the edge $(\bs{x}, \bs{y})$ exists within $E$. \par

It is important to note that the physical orientation of each channel $\mc{E}_{\bs{xy}}$ in the network can be a forwards or backwards and need not be specified. Under the assistance of two-way classical communications (CCs), the optimal transmission of quantum information is connected with optimal entanglement distribution. It does not depend on the physical direction of system exchange but the Local Operations (LOs) that are applied at each point, and thus the direction of teleportation. This refers to the \textit{logical direction} of quantum communication. If an undirected edge represents a physically asymmetric channel (i.e.~$\mc{E}_{\bs{xy}} \neq \mc{E}_{\bs{yx}}$) then the users can always enable a teleportation protocol that uses the most efficient physically directed channel. Hence, the logical flow of quantum information can always be made independent of the physical flow of quantum systems.

Consider two end-users Alice and Bob, who reside at remote nodes within the quantum network, $\bs{\alpha}$ and $\bs{\beta}$ respectively. An end-to-end communication protocol between Alice and Bob can be most broadly captured by a general adaptive protocol. Alice and Bob propagate the exchange of quantum systems between nodes throughout the network in accordance with an overarching protocol in order to establish some global, target quantum state. This may involve point-to-point, or point-to-multi-point exchanges dependent on the nature of the protocol. Quantum interactions between any two nodes is alternated with adaptive network-wide LOCCs, allowing for consistent optimization of the network protocol. Communication is complete when the target state is eventually established between the end-users after a number of uses of the network. \par

The optimal performance over any such network protocol is captured via the generic two-way assisted network capacity $\mc{C}(\mc{N})$ which describes the ultimate rate with which a desired target state can be established. If the target is a private state, this refers to the network secret-key capacity $K(\mc{N})$ measured in secret-bits per network use. If it is a maximally entangled state then this becomes the network entanglement distribution capacity $E(\mc{N})$ measured in entanglement bits (ebits) per network use. 

Note that we consider a general information-theoretic definition of a quantum repeater as a middle third-party helping the quantum communication between a sender and a receiver (therefore not connected by a direct link). In practice, there are many possible physical realizations, e.g., see Refs.~\cite{Borregaard_1WRep,Childress_FTQComms,Simon_Repeaters} among others.

\subsubsection{Network Routing}

Thanks to the interconnectivity of quantum networks, there is no unique path that network interactions must follow in order to establish end-to-end quantum communication. However, there exist two fundamental classes of routing strategy under which all protocols can be described: \textit{Single-path} or \textit{multi-path} routing.\par

Single-path routing is the simplest network communication method which utilizes point-to-point communications in a sequential manner. Quantum systems are exchanged from node-to-node followed by LOCC operations after each transmission until eventually communication has been established between the end-users. One may define a single-path network capacity ${\mc{C}}^s(\mc{N})$ which describes the optimal performance obtained via a sequential end-to-end strategy. A repeater chain forms a particular instance of quantum network under single-path routing where there only exists one unique end-to-end route.\par

A more powerful strategy is multi-path routing, which properly exploits the multitude of possible end-to-end routes available in a quantum network. A user may exchange an initially multi-partite quantum state with a number of neighboring receiver nodes, who may each then perform their own point-to-multi-point exchanges along their unused edges. Again, every exchange of quantum systems can be interleaved with adaptive network LOCCs, and this process continues until multi-point interactions are carried out with the end-users. A multi-path routing strategy in which all channels in the network are used precisely once per end-to-end transmission is known as a \textit{flooding protocol}. This is achieved via non-overlapping point-to-multi-point transmissions at each network node, such that receiving nodes only choose to transmit along unused edges for subsequent connections. Therefore, there exists a multi-path network capacity (or flooding capacity) ${\mc{C}}^m(\mc{N})$ which describes the optimal network performance out of all possible strategies (single or multi-path).

\subsection{Quantum Networking over Fading Channels}

The effect of \textit{fading} refers to the temporal variation of transmissivity along a bosonic lossy channel. The transmissivity along a fading channel is not fixed, but instead follows a probability distribution described by the dynamics of the environment. For example, the propagation of bosonic modes through low-altitude free-space instigates a fading channel thanks to chaotic processes in the atmosphere. 
The impact of fading on a communications channel is described via its \textit{speed}, i.e.~the ability for a receiver to resolve the dynamics of the transmissivity fluctuations. Slow-fading implies that the users can resolve the fading dynamics and accurately perform channel estimation because either the fading process is weak or the users possess sufficiently fast detectors. On the other hand, fast-fading refers to the situation where the users cannot reconcile the dynamics of the channel and can only estimate the statistical distribution of the channel transmissivity \cite{UsenkoFading,PanosFading}. It is clear that fast-fading poses a more formidable task for communicators. \par

More precisely, a bosonic lossy fading channel is defined as an ensemble of lossy channels in accordance with some probability density function $F(\tau)$ which describes the instantaneous transmissivity along the channel. We denote a lossy fading channel as the ensemble
\begin{equation}
\mc{E}_{F}(\eta) \defeq \{ F(\tau) ; \mc{E}_{\tau}\}
\end{equation}
where $\mc{E}_{\tau}$ is a lossy channel with fixed, instantaneous transmissivity $\tau \in [0,\eta]$ and $\eta$ is the maximum transmissivity that is attainable along the channel.

\subsubsection{Capacities of Fading Channels\label{sec:FadeCaps}}

The absolute maximum rate that two parties can transmit qubits, establish secret-keys, or distribute entanglement
over bosonic lossy channels is known exactly via the PLOB bound \cite{PLOB}.
This states that generic two-way assisted capacity of a quantum channel is precisely
\begin{equation}
\mc{C}(\mc{E}_{\eta}) =\mc{B}(\eta) \defeq -\log_2(1-\eta) \label{eq:PLOBb}
\end{equation}
measured in bits per channel use, and where we introduce $\mc{B}(\eta)$ as the capacity function for stable lossy channels. While this assumes a fixed transmissivity $\eta$, 
the PLOB bound can be readily employed to study fading channels \cite{PLOB, CondChanSim}. Thanks to convexity properties of the relative entropy of entanglement (REE) over ensembles of channels, the capacity of a lossy fading channel can be bounded according to,
\begin{equation}
\mc{C}[\mc{E}_F(\eta)] \leq \mc{B}_{F}(\eta) \defeq  \int_{0}^\eta d\tau~F(\tau)\> \mc{B}(\tau). \label{eq:P2Pfad}
\end{equation}
where we have defined $\mc{B}_{F}(\eta)$ as the capacity function for lossy fading channels.
This can be interpreted as a generalization of Eq.~(\ref{eq:PLOBb}), modified to include potential fading processes. Indeed, it is simple to retrieve the standard bound $-\log_2(1-{\eta})$ for fixed lossy channels by considering a trivial probability distribution where only one transmissivity value is possible, $\eta$. Hence, this format is conveniently general and allows one to describe any lossy bosonic channel (with or without fading). \par

More generally, lossy channels will also be exposed to thermal-noise, resulting in a thermal-loss channel $\mc{E}_{\tau,\bar{n}}$. This channel equates to mixing an input mode with a thermal mode of mean photon number $\bar{n}_{\text{e}} = \bar{n}/(1-\tau)$ on a beam splitter of transmissivity $\tau$, effectively adding $\bar{n}$ photons to the signal mode. The capacity of thermal-loss channels is not known exactly, but upper-bounds have been derived through the techniques developed for the PLOB bound. For a fixed thermal-loss channel $\mc{E}_{\tau,\bar{n}}$ the capacity can be upper-bounded via \cite{PLOB}
\begin{align}
\mc{C}(\mc{E}_{\tau,\bar{n}})\leq \mc{L}({\tau,\bar{n}}) \defeq -\log_2\left(\tau^{\bar{n}_{\text{e}}} (1-\tau)\right) - h(\bar{n}_{\text{e}}), \label{eq:ThB}
\end{align}
where $h(x) \defeq (x+1)\log_2(x+1) - x\log_2(x)$. Otherwise $\mc{L}({\tau,\bar{n}}) = 0$ when $\tau < \bar{n}$, meaning that there exists a minimum transmissivity at which communication can be reliably secured. 

Analogous to the pure-loss setting, a thermal-lossy fading channel can be described by the ensemble 
\begin{equation}
\mc{E}_F(\eta,\bar{n}) \defeq \{ F(\tau, \bar{n}) ; \mc{E}_{\tau,\bar{n}}\},
\end{equation}
 where it is possible that both transmissivity and thermal noise are probabilistic and described within a probability density function $F(\tau, \bar{n})$. Typically, thermal noise can always be considered constant by either assuming stable operational conditions, or by minimising (maximising) its potential value for best-case (worst-case) rates. This allows us to consider the simpler ensemble ${\mc{E}_F(\eta,\bar{n})= \{ F(\tau) ; \mc{E}_{\tau,\bar{n}}\}}$ on which we place the following upper-bound of its capacity \cite{FS,SQC},
\begin{equation}
\mc{C}[\mc{E}_F(\eta,\bar{n})] \leq {\mc{L}}_F(\eta,\bar{n}) \defeq  \int_{\bar{n}}^\eta d\tau~F(\tau)\> \mc{L}({\tau,\bar{n}}).\label{eq:P2PTLfad}
\end{equation}
Here we have defined ${\mc{L}}_F(\eta,\bar{n})$ as a tight capacity bounding-function for thermal-lossy fading channels \footnote{The word \textit{tight} in this context refers to how close the upper-bound is from its best known lower-bound. Indeed, there exists a lower-bound on the capacity of a point-to-point thermal-loss channel based on its reverse coherent information (RCI) \cite{RCInfo}. Hence, throughout our work we implicitly refer to tight upper-bounds on thermal-loss channel capacities (and subsequently, network capacities) as those which in conjunction with the RCI can tightly sandwich the exact capacity.}. Intuitively, one can never outperform the pure-loss PLOB bound in the presence of thermal-noise, hence we can always write 
\begin{equation}
\mc{C}[\mc{E}_F(\eta,\bar{n})]  \leq {\mc{L}}_F(\eta,\bar{n}) \leq {\mc{B}}_F(\eta).
\end{equation}

\subsubsection{Capacities of Fading Networks \label{sec:GenFadNet}}

We can combine the theory from these previous sections in order to provide a general model for quantum networks with fading channels. Indeed, we may construct a quantum network $\mc{N} = (P,E)$ such that all edges $(\bs{x},\bs{y})\in E$ are generally associated with a unique thermal-lossy fading channel, 
\begin{equation}
\mc{E}_{\bs{xy}} =  \mc{E}_{F_{\bs{xy}}}({\eta_{\bs{xy}}}, \bar{n}_{\bs{xy}}),~\forall (\bs{x},\bs{y}) \in E.
\end{equation} 
In this way, each network edge not only possesses a unique maximum transmissivity $\eta_{\bs{xy}}$ and thermal-noise properties $\bar{n}_{\bs{xy}}$, but also a unique instantaneous transmissivity probability density function $F_{\bs{xy}}$ through which each edge can adopt its own fading dynamics (or lack thereof). This allows for a description of network channels within different environmental media such as fiber channels, ground-based free-space channels, or free-space channels beyond the atmosphere. Furthermore, we can retrieve pure-loss fading channels via $\bar{n}_{\bs{xy}} = 0$.

It has been shown that the capacities of quantum networks can be derived through a combination of quantum information theoretic tools and ideas from classical network theory \cite{End2End}. By transforming the notion of classical network cuts into entanglement cuts of a quantum network, one can determine compact, analytical expressions for the network capacities of arbitrary architectures. Consider a pair of end-users within the fading network $\bs{\alpha},\bs{\beta} \in P$. 
We define an entanglement cut $C$ as a means of disconnecting/partitioning the network into two disjoint super-users $\bff{A}$ and $\bff{B}$ such that $\bs{\alpha} \in \bff{A}$ and $\bs{\beta} \in \bff{B}$ and $P = \bff{A}\cup\bff{B}$. A network cut $C$ generates an associated cut-set,
\begin{equation}
\tilde{C} = \{ (\bs{x},\bs{y})\in E ~|~\bs{x}\in \bff{A}, \bs{y} \in \bff{B}\},
\end{equation}
defining a collection of network edges which when removed successfully partitions the network.

As discussed in previous sections, single-path routing can be thought of as a generalization of repeater-chains, where end-to-end communication is established via the sequential exchange of quantum systems along a designated path. 
The single-path network capacity ${\mc{C}}^{s}({\mc{N}})$ is bounded by determining the network cut $C_{\min}$ that generates the smallest, maximum single-edge capacity in the cut set. For lossy and thermal-lossy fading networks, we may define the single-path capacity quantities \cite{End2End},
\begin{align}
&\mc{B}^s(\mc{N}) \defeq \min_C \max_{(\bs{x},\bs{y})\in \tilde{C}}  \mc{B}_{F_{\bs{xy}}}(\eta_{\bs{xy}}) \label{eq:SPR_loss},\\
&{\mc{L}}^s(\mc{N}) \defeq \min_C \max_{(\bs{x},\bs{y})\in \tilde{C}}  {\mc{L}}_{F_{\bs{xy}}}(\eta_{\bs{xy}},\bar{n}_{\bs{xy}}) \label{eq:SPR_th},
\end{align}
which upper-bound the single-path network capacity,
\begin{equation}
{\mc{C}}^{s}({\mc{N}}) \leq {\mc{L}}^s(\mc{N})  \leq {\mc{B}}^s(\mc{N}).  \label{eq:SPR}
\end{equation}
In the absence of thermal noise, the single path capacity ${\mc{B}}^s(\mc{N})$ is achievable and equates to performing sequential communication along the optimal route $\omega^*$ in the network. For an arbitrary network, finding the optimal route is equivalent to solving the well known widest-path problem and can be solved efficiently \cite{Dijkstra}. It is unknown whether the thermal upper-bound in Eq.~(\ref{eq:P2PTLfad}) is achievable, hence in the presence of thermal noise ${\mc{L}}^s(\mc{N})$ presents a tight upper-bound. \par

More powerful network protocols employ multi-path routing \cite{Alas_Flooding}. The multi-path network capacity is associated with an optimal flooding protocol, and is found by locating the entanglement cut $C_{\min}$ which minimizes the multi-edge capacity over all cut-sets. For lossy and thermal-lossy fading networks we can compute the multi-path quantities \cite{End2End},
\begin{align}
&\mc{B}^m(\mc{N}) \defeq \min_C \sum_{(\bs{x},\bs{y})\in \tilde{C}} \mc{B}_{F_{\bs{xy}}}(\eta_{\bs{xy}}) \label{eq:MPR_loss},\\
&{\mc{L}}^m(\mc{N}) \defeq \min_C \sum_{(\bs{x},\bs{y})\in \tilde{C}} {\mc{L}}_{F_{\bs{xy}}}(\eta_{\bs{xy}},\bar{n}_{\bs{xy}}), \label{eq:MPR_th}
\end{align}
which upper-bound the multi-path network capacity,
\begin{equation}
{\mc{C}}^{m}(\mc{N}) \leq {\mc{L}}^m(\mc{N})  \leq {\mc{B}}^m(\mc{N})  \label{eq:MPR}.
\end{equation}
Once more, for pure-loss based networks the flooding capacity $\mc{B}^m(\mc{N})$ is achievable, and equates to solving the classical maximum-flow minimum-cut problem according to a network of capacity achieving links. For general quantum networks with an arbitrary architectures, this problem requires a numerical treatment and can be solved efficiently \cite{FordFlow,KarpFlow,OrlinFlow}. Once more, since the thermal-loss upper bound is not guaranteed to be achievable ${\mc{L}}^m(\mc{N})$ is instead a tight upper-bound.

\subsection{Free-Space Quantum Communication}

Consider two remote parties Alice and Bob who are separated by a distance $z$, and employ quantum communications based upon a quasi-monochromatic optical mode ($\Delta \lambda$-nm large and $\Delta t$-sec long). This may be characterized by a Gaussian beam with wavelength $\lambda$, initial field spot-size $w_0$ and curvature $R_0$. Communication consists of transmitting a directed beam towards a receiver with circular aperture of radius $a_R$. Here we assume that the initial spot-size $w_0$ is sufficiently small with respect to the transmitter aperture of radius $a_T$ so that there is no relevant diffraction caused by the transmitter.\par

The atmospheric effects which characterize free-space channels are variable with respect to altitude, due to changes in atmospheric density. Therefore specifying the trajectory of a Gaussian beam through free-space is pivotal in capturing channel quality. To this end, for any point-to-point communications task we may assume a general beam trajectory $L$ and introduce the following altitude/propagation functions respectively, $h_{L}(z)$ and $z_{L}(h)$. Using these functions we can retain a geometry independent framework for our study until we wish to specify to a particular setting.\par

\subsubsection{Free-Space Transmissivity}

Free-space diffraction is a universal contributor to loss. As a beam propagates in free-space its waist will widen as a function of the distance it travels, 
\begin{equation}
w_{\text{d}}^2(z) = w_0^2 \left[ \left(1-\left(\frac{z}{R_0}\right)^2\right) + \left(\frac{z}{z_R}\right)^2 \right]
\end{equation}
where $z_R \defeq {\pi w_0^2}/{\lambda}$ is the Rayleigh range. A target receiver will then only detect a portion of the spread beam since its aperture is finite in size, inducing a diffraction-limited transmissivity,
\begin{equation}
\eta_{\text{d}}(z) = 1 - \exp\left[-\frac{2 a_R^2}{w_{\text{d}}^2}\right] .
\end{equation}
It is also useful to define a diffraction induced transmissivity in the far-field regime, $z \gg z_R$, making the approximation $\eta_{\text{d}} \approx \eta_{\text{d}}^{\text{far}} ~{\defeq} ~{2 a_R^2}/{w_{\text{d}}^2}$. This loss quantity exists regardless of the specific environmental setting considered, from ground-based links to intersatellite connections.  \par

Propagation through the atmosphere incurs further loss due to aerosol absorption and Rayleigh/Mie scattering; an effect known as atmospheric extinction. At a fixed altitude $h$, this loss can be accurately described via the Beer-Lambert equation \cite{BohrenScatter}. Since beam trajectories may be variable in altitude, we can generally define the extinction-induced transmissivity as
\begin{equation}
\eta_{\text{atm}}(z) = \exp \left[ - \int_0^z dz \> \alpha[h_L(z)]  \right],
\end{equation}
where $\alpha(h) = \alpha_0 e^{-{h}/{\tilde{h}}}$ is the extinction factor, $\tilde{h} = 6600\ \text{m}$, and ${\alpha_0}$ is the extinction factor at sea-level. For $\lambda = 800\ \text{nm}$ it follows that ${\alpha_0 \approx 5\times10^{-6}\ \text{m}^{-1}}$ \par

Finally, there exist inevitable internal losses associated with the detector setup, due to imperfect fiber-couplings, sub-optimal quantum detector efficiency, and more. This inefficiency-induced transmissivity can be as low as $\eta_{\text{eff}} \approx 0.4$ and must be considered to capture realistic performance. All of these effects can be used to describe a fixed, maximum transmissivity of a free-space connection, 
\begin{equation}
\eta(z) \defeq \eta_{\text{eff}}\>\eta_{\text{atm}}(z)\>\eta_{\text{d}}(z). \label{eq:etatot}
\end{equation}
Importantly, $\eta$ can be readily modified to consider variable altitude beam trajectories and written as a function of a chosen spatial geometry to account for different extinction properties throughout the atmosphere.

\subsubsection{Atmospheric Fading \label{sec:Turb}}
It is remarkably optimistic to assume that a free-space transmission deterministically undergoes a pure-loss channel characterized by Eq.~(\ref{eq:etatot}) only. The chaotic behavior of air-flow, temperature and pressure throughout the atmosphere invites further complications for free-space transmissions, causing inaccuracies in the point-to-point trajectory known as \textit{beam wandering}. As a result, we must incorporate fading for a more accurate characterization.\par

Turbulence is used to describe how a free-space propagating beam is perturbed by fluctuations in the atmospheric refractive index, caused by spatial variations in pressure and temperature. Propagating beams interact with small turbulent air-flows on a fast time-scale, too fast for communicators to monitor or resolve. This causes the beam waist to broaden and forces us to define a \textit{short-term spot-size} $w_{\text{st}}$ which is larger than the diffraction-induced spot size, $w_{\text{d}} < w_{\text{st}}$. On a slower time-scale, the beam will undergo deflections by significantly larger eddies in the atmosphere. This slower time-scale may be reconcilable by the communicators, and manifests as a wandering of the beam centroid. This wandering can be described by a Gaussian random walk of the centroid with variance $\sigma_{\text{t}}^2$ which is a functional of the beam trajectory, operational setup, conditions, and more.\par

Wandering is not exclusively caused by turbulence, and one must also consider pointing errors caused by jitter and imperfect targeting. These effects also occur on a reasonably slow time-scale of order $0.1-1\>\text{s}$, and may be resolved by the receiver. This introduces an additional wandering variance $\sigma_{\text{p}}^2$, e.g.~a $1\>\mu\text{rad}$ pointing error at the transmitter causes a variance ${\sigma_{\text{p}}^2 \approx (10^{-6} z)^2}$ (where $z$ is in meters). Overall, these effects combine to induce Gaussian centroid wandering with variance 
$
\sigma^2 = \sigma_{\text{t}}^2 + \sigma_{\text{p}}^2.
$

The ability for communicators to resolve these wandering dynamics is dependent on their time-scale. The behavior of turbulence is variable, with regimes ranging from weak to strong turbulence. Increasing turbulent strength can be modeled as an increasingly faster fading process, such that a receiver loses the ability to reconcile the wandering dynamics. For stronger levels of turbulence, it is possible to define a \textit{long-term spot-size} $w_{\text{lt}}$ which averages over the wandering caused by both small turbulent eddies and larger eddy deflections, ${w_{\text{d}} < w_{\text{st}} < w_{\text{lt}}}$. Indeed, the turbulence-induced variance is defined with respect to the long-term and short-term quantities $\sigma_{\text{t}}^2 = w_{\text{lt}}^2 - w_{\text{st}}^2$. However, rigorous studies of strong turbulence will require further considerations, for which work is currently underway \cite{Masoud_StrongTurb}.

In this work, we focus on the regime of weak turbulence and the concept of short-term beam spot sizes. These can be used to provide precise descriptions of free-space quantum channels on the ground at short-range, and for ground-to-satellite communication along trajectories with small zenith angles \cite{FS,FanteFS1,FanteFS2}.

\subsubsection{Weak Turbulence}

For communications undergoing weak turbulence, the beam wandering acts on a time scale of $10-100\ \text{ms}$ and can be fully resolved with a sufficiently fast detector. In this case, analytical expressions can be found for the short-term spot size $w_{\text{st}}$ and the centroid wandering variance $\sigma^2$. Consider a beam with wavenumber $k={2\pi}/{\lambda}$ following a free-space trajectory $L$ (and its associated altitude function $h_L(z)$). Then the spherical-wave coherence length is given by,
\begin{equation}
\rho_{0}(L) = \left[ 1.46k^2 \int_{0}^z d\zeta \left( 1-\frac{\zeta}{z}\right)^{\frac{5}{3}} C_n^2 \left[ h_L(\zeta) \right] \right]^{-\frac{3}{5}},
\end{equation}
where $C_n^2$ denotes the refractive index structure constant, used to measure the strength of fluctuations in the atmospheric refractive index. This quantity has an explicit dependence on the beam's trajectory, since this may be variable in altitude, and is typically described via the Hufnagel-Valley model (See Appendix C of Ref.~\cite{FS}). Provided that Yura's condition is satisfied $\phi \defeq 0.33(\rho_0/w_0)^{\frac{1}{3}} \ll 1$ \cite{YuraFS} then we can write \cite{FS},
\begin{align}
w_{\text{st}}^2 &\approx w_{\text{d}}^2 + 2\left( \frac{\lambda z}{\pi \rho_{0}} \right)^2 (1-\phi)^2, \label{eq:rho0} \\
\sigma_{\text{t}}^2 &\approx 2\left( \frac{\lambda z}{\pi \rho_{0}} \right)^2 \left[1-(1-\phi)^2\right]. \label{eq:sigmaTB}
\end{align}
The short-term spot size can be used to update the diffraction induced transmissivity to account for fast beam interaction with small turbulent eddies in the atmosphere. That is, 
\begin{equation}
\eta_{\text{st}} \defeq 1 - \exp\left[-\frac{2 a_R^2}{w_{\text{st}}^2} \right] ~\underset{z \gg z_R}{\approx}~\eta_{\text{st}}^{\text{far}}\defeq \frac{2 a_R^2}{w_{\text{st}}^2}. \label{eq:etast}
\end{equation}
where we have simultaneously introduced a far-field approximation, $ \eta_{\text{st}}^{\text{far}}$ when the propagation distance is very large $z \gg z_R$.
\par
Updating the diffraction-induced transmissivity in Eq.~(\ref{eq:etatot}), we may write a new maximum transmissivity incorporating weakly turbulent effects, 
$
\eta = \eta_{\text{eff}}\>\eta_{\text{atm}}\>\eta_{\text{st}}.
$
This represents the optimal transmissivity parameter that can be achieved when the beam centroid $\vec{x}_C$ is perfectly aligned with the receiver centroid $\vec{x}_R$, i.e.~the centroid deflection is $r \defeq \| \vec{x}_C - \vec{x}_R \| = 0$. 
However, due to turbulence and pointing errors, the beam centroid now undergoes a Gaussian random walk with variance $\sigma^2$, invoking a fading channel.
We can then connect the non-zero centroid deflection $r \geq 0$ to an instantaneous transmissivity $\tau(r)$ to precisely capture the fading process. Gaussian wandering induces a Weibull distribution for the centroid deflection, which results in an instantaneous transmissivity probability density function $F_{\sigma}[\tau(r)]$ \cite{FS}. Defining the functions,
\begin{align}
f_0(x) &\defeq [1 - \exp(-2x) I_0 (2x)]^{-1},\\
f_1(x) &\defeq \exp(-2x) I_1 (2x),
\end{align}
where $I_n$ is the modified Bessel function of the first kind for $n=0,1$, we can introduce the following shape and scale parameters, 
\begin{align}
\gamma &= \frac{4 \eta_{\text{st}}^{\text{far}} f_0 ( \eta_{\text{st}}^{\text{far}} ) f_1 ( \eta_{\text{st}}^{\text{far}} )}{ \ln\left[2 \eta_{\text{st}} f_0(\eta_{\text{st}}^{\text{far}})\right]},~r_0 = \frac{a_R}{\ln\left [2 \eta_{\text{st}} f_0(\eta_{\text{st}}^{\text{far}}) \right]^{\frac{1}{\gamma}}}.
\end{align}
With these, we can now write the instantaneous transmissivity probability density function,
\begin{equation}
F_{\sigma}(\tau) = \frac{r_0^2}{\gamma \sigma^2 \tau} \ln(\frac{\eta}{\tau})^{\frac{2}{\gamma}-1} \exp\left[ -\frac{r_0^2}{2\sigma^2} \ln(\frac{\eta}{\tau})^{\frac{2}{\gamma}}\right].
\end{equation}
Consequently, we are left with a free-space, lossy fading channel $\mc{E}_{F_{\sigma}}(\eta) = \{ F_{\sigma}(\tau) ;\mc{E}_{\tau} \}$. Using the tools from Section \ref{sec:Prelims}, we can study the capacities of free-space connections.

Hence, the capacities for free-space quantum communications (entanglement distribution or secret-key distribution) are upper bounded according to \cite{FS}
\begin{align}
\mc{C} \leq \mc{B}_{F_{\sigma}}(\eta) = -\Delta(\eta,\sigma) \log\left( 1 - \eta\right), \label{eq:WeakCap}
\end{align}
where $\Delta$ represents a correction factor to the PLOB bound due to imperfect alignment,
\begin{equation}
\Delta(\eta,\sigma) = 1 + \frac{\eta}{\text{ln}(1-\eta)} \int_{0}^{\infty} dx \frac{\exp\left[\frac{-r_0^2}{2\sigma^2}x^{\frac{2}{\gamma}}\right]}{e^x - \eta}.
\end{equation}
Through specification to a free-space trajectory, one can easily determine geometry dependent expressions for this ultimate limit. 
Importantly, for channels which are accurately described as ensembles of pure-loss channels (thermal noise is negligible), then Eq.~(\ref{eq:WeakCap}) is in fact an achievable and optimal rate, $\mc{C} = \mc{B}_{F_{\sigma}}(\eta)$. For all other scenarios where thermal noise is non-negligible, it remains an effective upper-bound.

\subsubsection{Thermal Noise}

As discussed previously, pure-loss based bounds remain ultimate bounds in the presence of thermal noise. Yet, it is still possible to construct tighter performance bounds by considering fading channels which are ensembles of thermal-loss channels. Let $\bar{n}_T$ be the mean number of input photons transmitted towards a receiver via a single free-space mode. For an instantaneous transmissivity $\tau$ the mean photon number collected at the receiver will be $\bar{n}_R = \tau \bar{n}_T + \bar{n}$, where $\bar{n}$ describes the total environmental thermal noise added to the signal. It is useful to define contributions to this environmental noise via \begin{equation}
\bar{n} \defeq \eta_{\text{eff}}\>\bar{n}_B + \bar{n}_{\text{ex}}
\end{equation}
where the receiver collected $\bar{n}_B$ mean background photons with detector efficiency $\eta_{\text{eff}}$, and $\bar{n}_{\text{ex}}$ accounts for excess setup noise. In the study of ultimate limits, $\bar{n}_{\text{ex}}$ can be considered to be approximately zero, or can be attributed to trusted noise.\par

For free-space links, the primary source of thermal noise is attributed to natural brightness within the field of view of the transmission, i.e.~the sky, Sun, Moon, etc. Using a receiver of aperture $a_R$, angular field of view $\Omega_{\text{fov}}$, a detector with time window $\Delta t$ and frequency filter $\Delta \lambda$ around $\lambda$, then the number of background thermal photons per mode is 
\begin{equation}
\bar{n}_B = H_{\lambda} \Gamma_R,~\text{where}~\Gamma_R \defeq \Delta t \Delta \lambda \Omega_{\text{fov}} a_R^2. \label{eq:ThermalB}
\end{equation}
Here, $H_{\lambda}$ describes the spectral irradiance of the environment in units of photons $\text{m}^{-2}~\text{s}^{-1}~\text{nm}^{-1}~\text{sr}^{-1}$, and is unique to the operational setting and trajectory. Using the general bound from Eq.~(\ref{eq:P2PTLfad}) and specifying to free-space beam wandering dynamics with variance $\sigma^2$, we can write the free-space thermal upper-bound,
\begin{equation}
\mc{C} \leq {\mc{L}}_{F_{\sigma}}(\eta,\bar{n}) = \mc{B}_{F_{\sigma}}(\eta) - \mc{T}_{F_{\sigma}}(\eta, \bar{n}), \label{eq:ThermCap}
\end{equation}
where the thermal correction is given explicitly by,
\begin{align}
\begin{aligned}
\mc{T}_{F_{\sigma}}(\eta, \bar{n}) \defeq &\left[ 1 - \exp\left(\frac{-r_0^2}{2\sigma^2}{\text{ln}\left[\frac{\eta}{\bar{n}}\right]^{\frac{2}{\gamma}}}\right) \right] \\
&\times \left[ \frac{\bar{n}\log_2(\bar{n})}{1-\bar{n}} + h(\bar{n}) \right] - \mc{B}_{F_{\sigma}}(\bar{n}). 
\end{aligned}
\end{align}
This result applies to settings of weak and intermediate turbulence, such that one can substitute the appropriate reconcilable wandering variance and maximum transmissivity into this result. 

\subsubsection{Noise Suppression and Frequency Filters}

As seen in Eq.~(\ref{eq:ThermalB}), the number of background thermal photons per mode has a strong dependence on the frequency filter, $\Delta \lambda$. The frequency filter assists in blocking out noise, and thus the use of ultra-narrow filters is highly desirable. In discrete-variable quantum communications, physical frequency filters are typically limited to around $\Delta \lambda = 1 \text{ nm}$. However, using CV quantum systems and appropriate interferometric measurements it is possible to achieve much narrower effective filters. 

Many CV-based protocols rely on the use of a local-oscillator (or phase reference) in order to perform homodyne or heterodyne measurements at the output. This phase reference may be co-propagated with signal pulses, or  alternatively reconstructed at the receiver. This reconstruction method involves interleaving the signal pulses with strong reference pulses that carry information about the local-oscillator \cite{AdvCrypt}. Since the output of a homodyne measurements is proportional to the mean photon number in the local-oscillator modes, the ability to utilize bright references pulses over free-space channels introduces an \textit{effective homodyne filter}. Thermal noise mode-matching with the local-oscillator and the signal will be detected, but all other noise will be filtered out. This allows for the implementation of ultra-narrow effective filters on the order of $\Delta \lambda = 0.1 \text{ pm}$ with practical CV protocols, and can dramatically reduce the magnitude of the thermal background noise  (see Ref.~\cite{FS} for more details).

\subsection{Important Free-Space Channels}

\subsubsection{Ground-Based Channels}

Wireless classical communication networks are ubiquitous and fundamental to everyday modern life. Thus the desire for a free-space quantum analogue is obvious, enabling access to future wireless quantum technologies. Nonetheless, it is intuitive that such communication will be limited to short-range thanks to prominent decoherence obtained at ground-level. At a fixed altitude, beam trajectories are horizontal paths with the simple altitude/propagation functions $h_L(z) = h,~z_L(h) = z$. The absence of a variable altitude in the beam path simplifies a number of key quantities such as the extinction-induced transmissivity 
\begin{equation}
\eta_{\text{atm}}(z) = \exp\left[ -\alpha(h) z\right], \label{eq:ExtGr}
\end{equation}
and the spherical-wave coherence length
\begin{equation}
\rho_0 = \left[0.548 k^2 C_n^2(h)z\right]^{-\frac{3}{5}}, \label{eq:SphGr}
\end{equation}
which can be used to accurately describe decoherence and fading dynamics on the ground. Here, turbulence is a major factor and must be stringently considered. A useful parameter for assessing the validity of turbulent regimes on the ground is the Rytov variance,
\begin{equation}
\sigma_{\text{Ry}}^2 = 1.23 \> k^{\frac{7}{6}} z^{\frac{11}{6}} C_n^2(h).
\end{equation}
In particular, weak turbulence requires that $\sigma_{\text{Ry}}^2 \lesssim 1$. Using a Gaussian beam with $\lambda = 800\text{ nm}$ and altitudes close to sea-level during typical day-time conditions, weak turbulence is only guaranteed for distances of $z \lesssim 1\text{ km}$. Beyond this, as in the intermediate ($\sigma_{\text{Ry}}^2 \gtrsim 1$) and strong ($\sigma_{\text{Ry}}^2 \gg 1$) turbulent regimes, the long-term spot size must be adopted, leading to poorer ultimate channel capacities \cite{Masoud_StrongTurb}.  \par

Fig.~\ref{fig:LossPlots}(a) illustrates the behavior of transmissivity in ground-based free-space channels with respect to propagation length. Within the weak-turbulence regime the loss properties of free-space channels limited to $\sim 4$ dB for communications over $1$ km, encouraging the utility of short-range, optical free-space quantum communications.

For the assessment of thermal bounds, the primary source of thermal-noise at ground-level is attributed to the brightness of the sky. This provides a spectral irradiance ranging from 
\begin{equation}
H_{\lambda}^{\text{sky}} \approx
\begin{cases}
1.9\times 10^{13}, & \text{ full-Moon, clear night},\\
1.9\times 10^{18}, & \text{ cloudy day time}.
\end{cases} \label{eq:GrSky}
\end{equation}
in units of ${\text{photons } \text{m}^{-2}\>\text{s}^{-1}\>\text{nm}^{-1}\>\text{sr}^{-1}}$. Using this information, the expressions in Eqs.~(\ref{eq:ExtGr}) and (\ref{eq:SphGr}), and the general capacity bounds developed in the previous sections, we can accurately assess the ultimate limits of free-space quantum communications on the ground (see Ref.~\cite{FS} for further details and derivations).

\subsubsection{Ground-Satellite Channels}

For communication between ground/satellite stations, there are two unique configurations that must be considered:~Transmissions directed from the ground towards a satellite (uplink) or from a satellite towards the ground (downlink). The quantum channel descriptions of these configurations are very different.

Consider a Gaussian beam propagated in uplink. The beam immediately undergoes turbulence upon generation at low altitude, and thus has a large decohering impact which must be carefully considered. However, pointing errors are less critical $\sigma_{\text{p}}^2 \ll \sigma_{\text{t}}^2$ thanks to the availability of adaptive optics to optimize the beam trajectory from the ground station. Therefore we must model uplink as a fading channel predominantly due to turbulent effects.
Meanwhile, a Gaussian beam in downlink experiences the opposite; the beam does not undergo serious levels of turbulence until it reaches lower altitudes. But by this point, its spot-size has already been spread by diffraction, hence turbulence does not present a serious factor and $\sigma_{\text{t}}^2 \approx 0$. Yet, in this setting pointing errors become much more relevant due to the lack of onboard access and optimization ability. Hence, atmospheric decoherence associated with uplink and downlink is physically asymmetric, invoking two unique fading channels.\par

We specify the trajectory of ground-satellite communication according to a target satellite altitude $h$ and zenith angle $\theta$, which describes the angle formed between the zenith point at the ground station and the direction of observation towards the satellite. The zenith angle takes values  $\theta \in [-\frac{\pi}{2},\frac{\pi}{2}]$, such that when $\theta = 0$ the satellite is at the zenith. The distance that the beam physically travels from its point of generation $z$ (known as its slant distance) can then be expressed with respect to this geometry. Defining the functions,
\begin{align}
\begin{aligned}
h_{\theta}(z) &=  \sqrt{R_E^2 +z^2 + 2zR_E \cos \theta} - R_E,\\
z_{\theta}(h) &=  \sqrt{h^2 + 2hR_E + R_E^2 \cos^2 \theta} - R_E\cos\theta.
\end{aligned}
\end{align}
we may then introduce the altitude/propagation functions with respect to uplink and downlink communications \cite{SQC},
\begin{align}
&z_{\theta}^{\text{up}}(h)= z_{\theta}(h),~~ z_{\theta}^{\text{down}}(h) =  z_{\theta}(h_{\max}) - z_{\theta}(h),\\
&h_{\theta}^{\text{up}}(z)= h_{\theta}(z),~~ h_{\theta}^{\text{down}}(z) =  h_{\theta}\big[  z_{\theta}(h_{\max}) - z\big].
\end{align}\par

Fig.~\ref{fig:LossPlots}(b) illustrates the behavior of transmissivity in ground-satellite channels with respect to uplink, downlink and satellite altitude. Here we plot both the  the expected transmissivity when averaged over the respective fading processes and maximum transmissivity (a best-case loss in the absence of fading). Crucially, it can be shown that for beam trajectories with relatively small zenith angles $\theta \leq 1 \text{ radiant}$, we can assume the regime of weak turbulence for the ground-satellite fading channel (see Appendix C of \cite{FS}). 
Within this angular window we can accurately resolve the fading dynamics, and by inserting the beam trajectory expressions into the machinery of Section \ref{sec:Turb}, it is possible to derive loss-based ultimate limits for both uplink and downlink quantum communications using the Eq.~(\ref{eq:WeakCap}), and thermal-loss-based limits using Eq.~(\ref{eq:ThermCap}).

The sources of environmental thermal-noise are also unique to both uplink and downlink configurations, and  operational settings such as the time of day and weather.  In uplink during the day, the primary source of thermal-noise is sunlight being reflected from the Earth to the satellite detector. Meanwhile, at night, this noise is diminished but there still exists sunlight being reflected from the Moon to the Earth, and back towards the satellite. For uplink, we may write
\begin{equation}
\bar{n}_B^{\text{up}} = \kappa H_{\lambda}^{\text{sun}} \Gamma_R.
\end{equation}
Here $\kappa$ is a parameter that accounts for the Earth/Moon albedos and ranges from $\kappa_{\text{night}} = 7.36\times 10^{-7}$ for a clear night with a full Moon, to $\kappa_{\text{day}} = 0.3$ during clear day-time. 
Meanwhile, for the optical wavelength $\lambda = 800 \text{ nm}$, we can approximate that in uplink the solar spectral irradiance is $H_{\lambda}^{\text{sun}} = 4.61 \times 10^{18} {\text{ photons } \text{m}^{-2}\>\text{s}^{-1}\>\text{nm}^{-1}\>\text{sr}^{-1}}$. \par

For downlink, the receiver is now a detector on the ground and the main source of noise is more simply attributed to the sky (as it was in the ground-based scenario). In this setting, and for $\lambda = 800 \text{ nm}$, the spectral irradiance of the sky follows Eq.~(\ref{eq:GrSky}).
For a much more detailed analysis, see Appendix D, Ref.~\cite{SQC}.

\begin{figure}
\includegraphics[width=\linewidth]{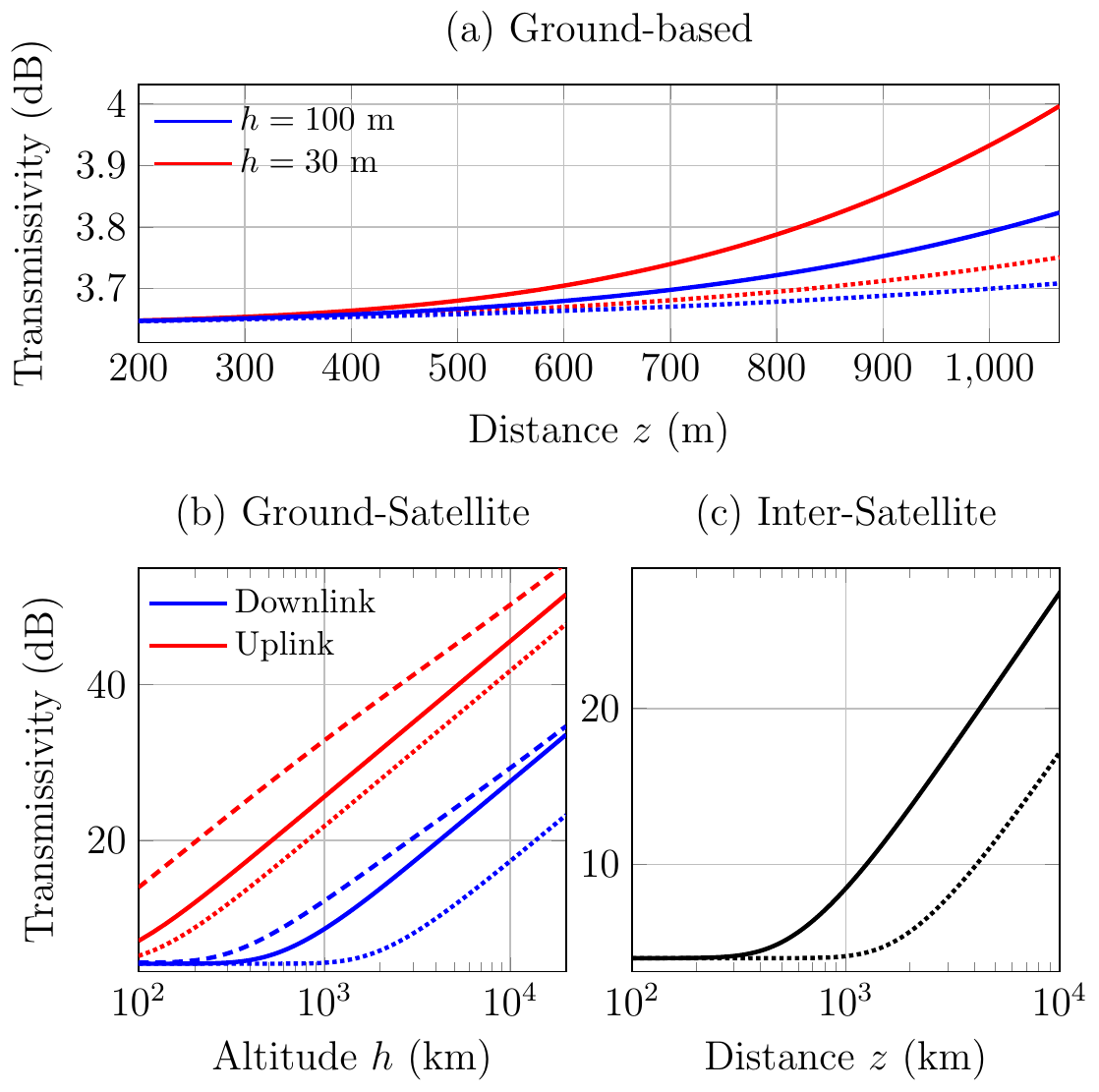}
\caption{Free-space transmissivity associated with (a) ground-based, (b) ground-satellite and (c) intersatellite communication links. In each plot, solid lines depict the average transmissivity (attenuation averaged over fading dynamics), while dotted lines describe the best-case transmissivity (absence of fading). The dashed lines in Panel (b) describe a ground-satellite free-space link with zenith angle $\theta = 1$ radiant, while the others consider $\theta = 0$. The operational setup in (a) is consistent with the parameters in Table~\ref{table:FSF_Setups} while (b) and (c) are consistent with Setup (\#1) in Table~\ref{table:Setups}. }
\label{fig:LossPlots}
\end{figure}

\subsubsection{interSatellite Channels}
Finally, we can consider free-space quantum communication between satellites in orbit. This represents a high quality free-space quantum channel which is free from atmospheric decoherence, and thus will not experience losses due to extinction nor undergo turbulence. 
Indeed, these intersatellite link losses are characterized by free-space diffraction only. Assuming negligible pointing errors, then the intersatellite channel is simply a lossy channel with transmissivity given by $\eta_{\text{d}}(z)$ as a function of the propagation distance between satellites, $z$. This lets us write an ideal upper-bound on the intersatellite channel capacity, 
\begin{equation}
\mc{C} \leq \mc{B}(\eta_{\text{d}}) = \frac{2a_R^2}{w_{\text{d}}^2(z) \ln 2}. \label{eq:InterSatCap}
\end{equation}

Due to the lack of onboard access and adaptive optics, it is possible that pointing errors become important and must be considered. If pointing errors are non-negligible, $\sigma_{\text{p}}^2 > 0$, then we instead must consider a lossy fading channel $\mc{E}_{F_{\sigma_{\text{p}}}} = \{ F_{\sigma_{\text{p}}} ; \mc{E}_{\tau} \}$ with maximum transmissivity $\eta_{\text{d}}(z)$. As discussed in earlier sections, pointing errors occur on a sufficiently slow time-scale such that they are reconcilable by the receiver. Hence, the capacity for this channel can be accessed via Eq.~(\ref{eq:WeakCap}), such that
\begin{equation}
\mc{C} \leq \mc{B}_{F_{\sigma_{\text{p}}}} (\eta_{\text{d}}) = \frac{2a_R^2}{w_{\text{d}}^2(z) \ln 2} \Delta(\eta_{\text{d}}(z), \sigma_{\text{p}}), \label{eq:InterSatCap_Fad}
\end{equation}
where $\Delta$ acts as a correction factor to the PLOB bound. It is clear that when $\sigma_{\text{p}}^2 = 0$ we retrieve Eq.~(\ref{eq:InterSatCap}). In Fig.~\ref{fig:LossPlots}(c) the loss properties of an optical intersatellite channel are illustrated with respect to distance between communicating satellites. This depicts similar transmissivity behavior to ground-satellite downlink channels with zenith angle $\theta=0$ without the additional degradation associated with atmospheric interactions. 

We have some important considerations to note. First of all, intersatellite channels can only be formed between satellites that fall within each other's line-of-sight. This naturally implies a limit to the maximum distance over which an intersatellite channel can be physically established. For any two satellites in circular orbits, at some point the Earth blocks the free-space between them, prohibiting transmittance. If two satellites orbit at altitudes $h_1$ and $h_2$, then we find this limit to be
\begin{align}
z_{\text{sight}}^{\max} \defeq \frac{h_1(h_1 + 2R_{\text{E}}) }{h_1 + R_{\text{E}} } + \frac{h_2(h_2 + 2R_{\text{E}}) }{h_2 + R_{\text{E}} }. \label{eq:LOS_Lim}
\end{align}
This is derived through basic geometric considerations (see the Supplementary Material for a derivation).

Secondly, let us justify the modeling of intersatellite channels as pure-loss channels. The number of thermal photons impinging upon a satellite detector is determined by the orientation and field of view of the detector. For communication between satellites, the transmitters and detectors do not occupy fixed orientations with respect to the main sources of brightness. Indeed, there will exist best case and worst case orientations: In the best-case scenario, the satellite detector will face completely away from the Earth or Moon, so that their albedos are not within the detector's field of view whatsoever. In a worst case scenario, the detector will be oriented directly facing the Earth (as in uplink).

However, point-to-point quantum communication can always be optimized by choosing the physically directed channel which results in less thermal background photons at the detector; irrespective of the logical direction of communication. That is, each intersatellite channel can exchange quantum systems in the direction which achieves the best detector orientation with respect to background noise. By optimizing the physical orientation of an intersatellite quantum network, each receiver will only ever experience a fraction of the worst background noise experienced by satellite uplink channels for which thermal corrections are minimal for link lengths of ${z \lesssim 10000 \text{ km}}$ \cite{SQC}.
We leave more formal treatments of these channel properties to future works, with the confidence that pure-loss channels accurately model such free-space links.

Hence, we can reliably model intersatellite free-space links as pure-loss channels. As such, we treat the upper-bound in Eq.~(\ref{eq:InterSatCap_Fad}) as an achievable rate so that  ${\mc{C} = \mc{B}_{F_{\sigma_{\text{p}}}} (\eta_{\text{d}})}$ can be accomplished by an optimal point-to-point protocol.

\section{Modular Quantum Networks \label{sec:Mod}}

\subsection{Network Model}

In this work, we construct a simple model for the study of modular quantum networks. Namely, we consider a global network $\mc{N} = (P,E)$ which consists of a collection of sub-networks called \textit{communities}, where the $i^{\text{th}}$ community is denoted by the undirected sub-graph 
\begin{equation}
{\mc{N}_{c_{i}} = (P_{c_{i}}, E_{c_{i}})},~ P_{c_{i}} \subset P, ~ E_{c_{i}} \subset E.
\end{equation}
Here, $P_{c_{i}}$ defines a subset of all network nodes that compose the $i^{\text{th}}$ community, while  $E_{c_{i}}$ denotes the subset of all network edges that connect them. For now, we consider each community network to be completely general, and can adopt an arbitrary topology. 
Here, we focus on quantum networks which observe \textit{spatial-}modularity \cite{Gross2020}, such that communities are spatially separated. This means that each community is completely disconnected from every other community, i.e.~the community node sets are all pairwise disjoint ${P_{c_i} \cap P_{c_j} = \varnothing}$, for all $i,j$.

In order to mediate communication between different communities, we introduce a \textit{backbone network} ${\mc{N}_b = (P_b, E_b)}$. This is a large-scale network for which none of its nodes $\bs{x}\in P_b$ are user nodes, used purely to facilitate end-to-end communications between users contained in different communities. Crucially, we assume that each community possesses a set of undirected edges which connect a set of community nodes to backbone network nodes. We refer to these as \textit{intercommunity edges}, such that the set of intercommunity edges
\begin{align}
E_{c_{i}:b} &\defeq \{ (\bs{x},\bs{y})\in E ~|~\bs{x}\in P_{c_{i}}, \bs{y}\in P_b\},
\end{align}
gives each community access to the backbone.

More precisely, we can define an intercommunity sub-network $\mc{N}_{c_{i}:b} = (P_{c_{i}:b}, E_{c_{i}:b})$, which describe the undirected graph that emerges between the $i^{\text{th}}$ community and the backbone. The set $P_{c_{i}:b}$ defines the complete collection of nodes that are interconnected between the community and the backbone. However, the nodes $\bs{x}\in P_{c_{i}:b}$ are already contained within $\mc{N}_{c_i}$ or ${\mc{N}}_b$; hence, it is important to distinguish between the community nodes and the backbone nodes which comprise this sub-network. For this, we introduce the notation
\begin{align}
P_{c_{i} | b} &\defeq P_{c_{i}:b} \cap P_{c_{i}} \subseteq P_{c_i},\\
P_{b|c_{i}}  &\defeq P_{c_{i}:b} \cap P_{b} \subseteq P_{b}.
\end{align}
Intuitively, $P_{c_{i} | b}$ can be thought of as the subset of nodes from the community $P_{c_{i}}$ conditioned on being connected to $\mc{N}_b$ (and vice versa for $P_{b | c_{i}}$).

This modular structure takes a very intuitive form and is remarkably useful for modeling realistic, hybrid quantum networks. When an equivalence relation is enforced between nodes in similar communities, the network quotient graph can be viewed as a star-network \footnote{In the Supplementary Material we present more general aspects of networks with community structures from which this modular network emerges as a useful and highly desirable class}. It allows us to completely separate communities and the backbone from one another. This makes it easier to compartmentalize different sub-network structures which may operate in completely different physical domains. Furthermore, it helps to derive independent network conditions on each of the sub-networks in accordance with some global objective. We summarize this architecture in the following definition which has also been illustrated in Fig.~\ref{fig:ModNet}(a).

\begin{defin} \emph{(Modular Network)}: A modular network $\mc{N} = (P,E)$ is a network architecture constituent of $n$ community sub-networks $\{ \mc{N}_{{c_{i}}} \}_{i=1}^n$, and a backbone sub-network $\mc{N}_b$. Each community sub-network is connected to the backbone via a set of edges $E_{c_{i} : b}$, described by the intercommunity sub-networks $\{ \mc{N}_{c_{i} : b} \}_{i=1}^n$, and there are no direct links between communities.
\label{defin:Mod}
\end{defin}

\begin{figure*}
\includegraphics[width=0.58\linewidth]{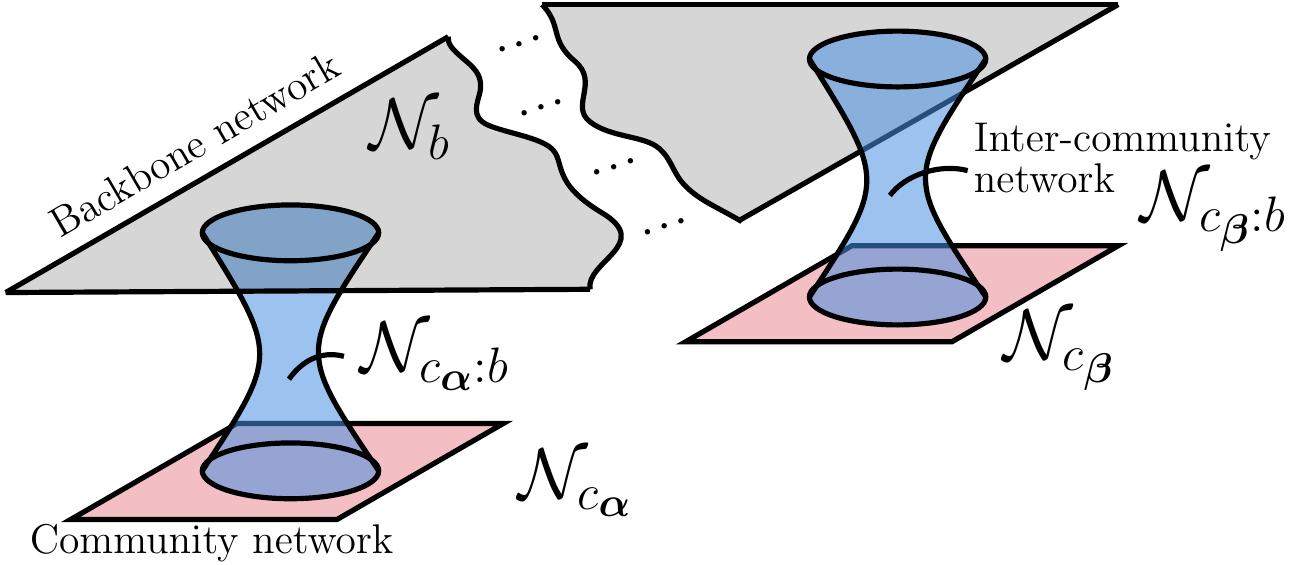}\hspace{0.8cm}
\includegraphics[width=0.3\linewidth]{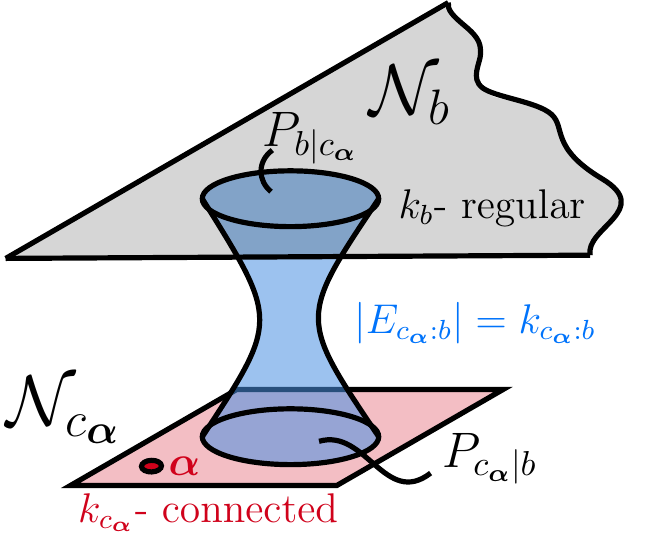}\\
\hspace{1cm}\\
\hspace{1.cm} (a) General Modular Network, \hspace{4.5cm} (b) Connectivity Properties.
\caption{(a) A modular quantum network architecture, constituted from community sub-networks $\mc{N}_{c_{\bs{\alpha}}}$, $\mc{N}_{c_{\bs{\beta}}}$ and a backbone network $\mc{N}_b$. Each community is connected to the backbone via the sub-networks $\mc{N}_{c_{\bs{\alpha}} : b}$ and $\mc{N}_{c_{\bs{\beta}} : b}$. Nodes from the community $c_{\bs{j}}$ which are directly connected to the backbone are contained in $P_{c_{\bs{j}}|b}$, while the nodes in the backbone which are connected to the community are contained in $P_{b|c_{\bs{j}}}$. (b) We may idealize this modular structure by placing ideal connectivity constraints on the each of the sub-networks. }
\label{fig:ModNet}
\end{figure*}

\subsection{Modular Network Capacities}

As discussed in Section~\ref{sec:Prelims}, the optimal end-to-end performance within a quantum network is quantified by its multi-path or \textit{flooding capacity} $\mc{C}^m( \mc{N})$, which describes the optimal number of target bits that can be transmitted between end-users per use of a flooding protocol. 
Any quantum network  $\mc{N} = (P,E)$, including the modular designs introduced, can be represented by a global distribution of channels $\{ \mc{E}_{\bs{xy}} \}_{(\bs{x},\bs{y})\in E}$ and a corresponding distribution of single-edge channel capacities $\{ \mc{C}_{\bs{xy}} \}_{(\bs{x},\bs{y})\in E}$, where $\mc{C}_{\bs{xy}} \defeq \mc{C}(\mc{E}_{\bs{xy}})$.
For general fading networks, it is always possible to use these distributions and the general expressions from Eqs.~(\ref{eq:MPR_loss}) and (\ref{eq:MPR_th}) in order to determine the flooding capacity. 

However, the translation into a modular architecture means that there exist particular classes of network cuts which are performed on different sub-networks. It becomes very useful to formally define a number of the important multi-edge capacities associated with these classes of cuts. In each of the following settings, we consider a pair of end-users  $\{ \bs{\alpha}, \bs{\beta} \}$ contained within remote communities of a generic, global modular network i.e.~$\bs{\alpha} \in P_{c_{\bs{\alpha}}}$ and $\bs{\beta} \in P_{c_{\bs{\beta}}}$ such that $c_{\bs{\alpha}} \neq {c_{\bs{\beta}}}$. It is now useful to denote community sub-networks with respect to the end-user that they contain, i.e.~we may write $c_{\bs{\alpha}}$ and $c_{\bs{\beta}}$ respectively. We assume each sub-network to adopt arbitrary topologies and capacity distributions.

\subsubsection{Local-Community Capacities}

We define a \textit{local-community cut} $C_{c_{\bs{j}}}$ as that which partitions two end-users within the network by exclusively collecting edges within one of the user communities $c_{\bs{j}}$, for either $\bs{j}\in\{\bs{\alpha},\bs{\beta}\}$. That is, a local-community cut-set takes the form
$
{\tilde{C}_{c_{\bs{j}}} = \{ (\bs{x},\bs{y})\in E_{c_{\bs{j}}} ~|~ \bs{x} \in \bff{A}, \bs{y} \in \bff{B} \}}.
$
This restricted form of network cut will generate an associated multi-edge capacity, which we label a \textit{local-community capacity},
\begin{equation}
\mc{C}_{c_{\bs{j}}}^m \defeq \min_{C_{c_{\bs{j}}}}\hspace{-0mm}\sum_{(\bs{x},\bs{y})\in \tilde{C}_{c_{\bs{j}}}} \mc{C}_{\bs{xy}}.
\end{equation}
For end-user nodes $\bs{j}$ which do not share a direct connection with the backbone (i.e.~$\bs{j}  \notin P_{c_{\bs{j}} | b}$) then this form of restricted cut always exist. 

However, if an end-user node does share a direction connection with the backbone, then a valid local-community cut will not exist. In this case it is never sufficient to remove edges solely from the community networks, and one must cut at least one edge from the set of intercommunity edges $E_{c_{\bs{j}} :b}$. To this end, we must slightly modify the local-community cut so that it removes any direct connections from the user node to the backbone, and then to identify the optimal set of edges to be removed from the community. Hence, a valid cut-set becomes
$
{\tilde{C}_{c_{\bs{j}}}^{\prime} = \{ (\bs{j} ,\bs{y}) \in E~|~\bs{y} \in P_b\} \cup \tilde{C}_{c_{\bs{j}}}.}
$
We can then define an analogous local-community capacity according to this class of network cut.

\subsubsection{Backbone Capacities}

A \textit{backbone cut} $C_b$ is a network cut that exclusively collect edges on the backbone network in order to partition the two end-users. This kind of cut-set takes the form
$
{\tilde{C}_{b} = \{ (\bs{x},\bs{y})\in E_{b} ~|~ \bs{x} \in \bff{A}, \bs{y} \in \bff{B} \}},
$
which generates an associated multi-edge \textit{backbone capacity},
\begin{equation}
\mc{C}_{b}^m \defeq \min_{C_{b}}\hspace{-0mm}\sum_{(\bs{x},\bs{y})\in \tilde{C}_{b}} \mc{C}_{\bs{xy}}.
\end{equation}
In the modular network architecture we are investigating, when considering end-users contained in unique communities, these kinds of cuts always exist. It is always sufficient to perform a cut on the backbone since there does not exist any other collection of edges that can be used to form a valid path between communities.

\subsubsection{Global-Community Capacities}

Finally, we can formalize a multi-edge capacity associated with exclusively collecting intercommunity edges. The end-user communities $\mc{N}_{c_{\bs{\alpha}}}$ and  $\mc{N}_{c_{\bs{\beta}}}$ are connected to the backbone via the sets of intercommunity edges $E_{c_{\bs{\alpha}}:b}$ and $E_{c_{\bs{\beta}}:b}$ respectively. If we removed either of these sets of edges, then the two remote users would be automatically partitioned. 
Hence, the edge sets $E_{c_{\alpha}:b}$ and $E_{c_{\beta}:b}$ both correspond to valid cuts on the network and each generate a multi-edge capacity
\begin{equation}
\mc{C}_{c_{\bs{j}}:b}^m \defeq  \sum_{(\bs{x},\bs{y})\in {E}_{c_{\bs{j}} :b}} \mc{C}_{\bs{xy}},
\end{equation}
for $\bs{j}\in\{\bs{\alpha},\bs{\beta}\}$. We can then minimize over the end-users to define a multi-edge capacity,
\begin{equation}
\mc{C}_{c:b}^m \defeq \min_{\bs{j}\in\{\bs{\alpha},\bs{\beta}\}} \mc{C}_{c_{\bs{j}}:b}^m. \label{eq:GlobCap}
\end{equation}
Clearly, this form of network cut always exists. We refer to this kind of partitioning as community isolation, since it isolates a community sub-network entirely from the rest of the network. Furthermore, we name $\mc{C}_{c:b}^m$ the \textit{global-community capacity}, as it refers to globally isolating the entire community sub-network \footnote{While it might be more convenient to call this the \textit{intercommunity capacity}, such a name might be confused as a more general term for the capacity when the end-users are located in different communities (which is implied). The global-community capacity is intended to be more distinct than this, and specify a particular network cut.}.

\subsection{Idealized Modular Networks}

Arbitrary architectures can always be treated using the capacity expressions from Section \ref{sec:Prelims} for general fading networks. However, the generality of these arguments make it difficult to present rigorous analytical statements about specific features or tangible network properties. In order to understand the ultimate potential of quantum networks, we need to simultaneously optimize the point-to-point channels and the network architecture in which they are arranged. Hence, it is desirable to strike a balance between realism and ideality in such a way that allows us to derive informative results about quantum networks and end-to-end performance. 
In the following we propose sub-network connectivity constraints that strike this balance.

\subsubsection{Backbone Regularity}
Firstly, we can impose regularity on the network backbone, demanding that the degree of each node is constant. This leads to a highly-connected network structure which is ideal for multi-path routing strategies. Let the function $\text{deg}(\bs{x})$ compute the degree of the node $\bs{x}$. Then we impose
\begin{align}
\text{deg}(\bs{x}) &= k_{b}, \forall \bs{x} \in P_{b},
\end{align}
which defines the regularity parameter of the backbone. It is important to make clear that these constraints only apply to intra-network connections. Indeed, a node on the backbone can have $k_b$ connections to neighbors on the backbone network, but also possess additional intercommunity connections via the sub-network $\mc{N}_{c_{\bs{j}}:b}$, without any further constraint. It is useful to quantify the number of intercommunity connections permitted between the backbone and communities using the notation,
\begin{equation}
k_{c_{\bs{j}} : b} = | E_{c_{\bs{j}} : b} |,  ~\bs{j} \in \{\bs{\alpha},\bs{\beta}\}.
\end{equation}
While regularity is an idealized property of realistic networks, in the context of a non-user repeater network such as the backbone it is very much feasible and extremely useful in order to understand the limits of quantum networks. 

\subsubsection{Community Connectivity}

Community sub-networks are likely to be smaller scale and less predictable structures than the backbone, partly due to the presence of user-nodes. Thus flexibility in their design is important. Here, we do not impose regularity but instead define classes of communities in accordance with the smallest local-community cut that they contain.

\begin{defin} \emph{($k_{c}$-connectivity)}: Consider a community sub-network $\mc{N}_c$. We say the community is ${k_{c}\text{-connected}}$ if $k_c$ is the smallest number of edges that must be removed in order to disconnect a pair of community nodes, minimized over all possible node pairs $\bs{x} \neq \bs{y} \in P_c$. More precisely, 
\begin{equation}
k_c \defeq \min_{\bs{x}\neq\bs{y} \in P_{c}} |\tilde{C}_{c}|.
\end{equation}
where $\tilde{C}_{c}$ denotes a community cut-set between the nodes $\bs{x}$ and $\bs{y}$. 
\end{defin}

Hence, $k_c$ defines the minimum local community cut-set cardinality, given some network topology and choice of end-users. This is a completely general property which is unique for all community networks, using the most easily disconnected pair of nodes in the network as a metric for how well it is connected. Regular networks are an example of an architecture for which their $k_c$-connectivity is simply equal to the network regularity. Hence, we can consider community sub-networks to be $k_{c}$-connected while encompassing a very large set of architectures.

\subsubsection{Idealized Modular Network}

Combining the constraints of regularity on the backbone and $k_{c}$-connectivity on the community sub-networks, it is possible to define an ideal modular quantum network architecture in terms of these parameters. This generates a structure that can be investigated analytically in the following sections.

\begin{defin} \emph{(Ideal Modular Network)}: An ideal modular network $\mc{N}^* = (P,E)$ is a network architecture constituent of $n$ community sub-networks $\{ \mc{N}_{{c_{{i}}}} \}_{i=1}^n$ each of which are $k_{c_i}$-connected, and a backbone sub-network $\mc{N}_b$ which is $k_b$-regular. Each community sub-network is connected to the backbone via $k_{c_{{i}} : b}$ edges, described by the intercommunity sub-networks $\{ \mc{N}_{c_{{i}} : b} \}_{i=1}^n$, and there are no direct links between communities.
\label{defin:RegMod}
\end{defin}

An illustration of this architecture can be found in Fig.~\ref{fig:ModNet}(b). When focusing on a particular pair of end-user nodes $\{ \bs{\alpha}, \bs{\beta}\}$ from two remote communities in the global network, we can then specify their $k_{c_{\bs{j}}}$-connectivity properties.

\subsection{Minimum-Cut as Community Isolation}

Care must be taken when constructing this form of modular network so to ensure not only high-rate communication within each community, but also high-rate communication between different communities mediated by the backbone. If the backbone network is poorly connected, or possesses weak links, it will not effectively assist long-distance communication. Meanwhile, even if communities are connected to a high quality backbone, insufficiently strong capacities in a local-community can compromise its use. Hence, there exists a careful balance between all of the sub-networks in the modular model, and their connectivity/capacity properties throughout. It is therefore highly desirable to identify a relationship between the quality of channels within the backbone and the quality of channels within the communities. 

In order to better grasp these relationships, we can investigate the ideal modular networks $\mc{N}^*$ defined in Definition~\ref{defin:RegMod}. Regular networks (such as that on the backbone) possess very convenient qualities which allow for useful insight into minimum network cuts. As such, they can be analytically studied as highly connected, ideal network structures and used to reveal fundamental limitations for end-to-end communication. 

Our mission becomes the following: To derive conditions on each of the sub-networks such that the flooding capacity between the remote users is always their global-community capacity. In this way, the minimum cut is always achieved by community isolation on either of the end-user communities. Equivalently, it means that the minimum cut can always be found on a simplified quotient graph of the modular network, vastly simplifying its analysis \cite{Note1}.
When this is the case, the end-to-end capacities between any two unique communities are always \textit{distance-independent}, i.e.~the ultimate rate between two end-user communities does not change with respect to the physical separation of those communities. This is an extremely desirable property of a quantum network, particularly on large-scales. 

If a modular network satisfies this property, it means that (\textit{i}) the backbone network is of sufficiently high quality that it never impedes the network performance over (potentially very) long distances, and (\textit{ii}) that the local-communities are of sufficiently high quality that neither compromises local or network-wide communication. Furthermore, by imposing that the minimum cut be the intercommunity edges, it allows us to reveal unique constraints on each sub-network, which are summarized in the following theorem.

\begin{theorem}
Consider an ideal modular network of the form $\mc{N}^*$ introduced in Definition~\ref{defin:RegMod}. Select any pair of end-users $\{\bs{\alpha},\bs{\beta}\}$ contained in remote communities $\bs{\alpha} \in P_{c_{\bs{\alpha}}}$ and $\bs{\beta} \in P_{c_{\bs{\beta}}}$. For all $\bs{j}\in\{\bs{\alpha},\bs{\beta}\}$, there exist single-edge threshold capacities on the communities $\mc{C}_{c_{\bs{j}}}^{\min}$ and backbone $\mc{C}_{b}^{\min}$ sub-networks for which the network flooding capacity is given by the global-community capacity,
\begin{align}
\begin{rcases}
\mc{C}_{\bs{xy}} \geq \mc{C}_{c_{\bs{j}}}^{\min}, \forall (\bs{x},\bs{y}) \in E_{c_{\bs{j}}},\\
{\mc{C}_{\bs{xy}} \geq \mc{C}_{b}^{\min}, \forall (\bs{x},\bs{y}) \in E_{b},}
\end{rcases}
\implies
\mc{C}^m( \mc{N}) = \mc{C}_{c : b}^m.
\end{align}
The threshold capacities are given by,
\begin{align}
 \mc{C}_{c_{\bs{j}}}^{\min} \defeq \frac{\mc{C}_{c : b}^m}{k_{c_{\bs{j}}} },~~
 \mc{C}_{b}^{\min} &\defeq \frac{\mc{C}_{c : b}^m}{ H_{\min}^* },
\end{align}
where $H_{\min}^*$ is the minimum cut-set cardinality on the backbone network. If these threshold capacities are violated, then the global-community capacity becomes an upper-bound on the end-to-end capacity, $\mc{C}^m( \mc{N}) \leq \mc{C}_{c : b}^m$. 
\label{theorem:Cond1}
\end{theorem}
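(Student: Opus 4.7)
The plan is to prove the equality $\mc{C}^m(\mc{N}) = \mc{C}_{c:b}^m$ via the min-cut characterization of the flooding capacity recalled in Eqs.~(\ref{eq:MPR_loss})--(\ref{eq:MPR_th}), by showing both that $\mc{C}_{c:b}^m$ is attained by a valid cut and that no valid cut can fall below it under the stated thresholds. The upper bound $\mc{C}^m(\mc{N}) \leq \mc{C}_{c:b}^m$ is immediate and unconditional: for each $\bs{j} \in \{\bs{\alpha},\bs{\beta}\}$ the intercommunity edge set $E_{c_{\bs{j}}:b}$ is a valid partitioning cut-set (its removal isolates $c_{\bs{j}}$), so the cut-capacity formula gives $\mc{C}^m(\mc{N}) \leq \mc{C}_{c_{\bs{j}}:b}^m$; minimizing over $\bs{j}$ yields both the upper bound and the second assertion of the theorem for the threshold-violating regime.

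For the reverse inequality I would fix an arbitrary entanglement cut $C$ with partition $P = \bff{A} \cup \bff{B}$, $\bs{\alpha}\in\bff{A}$, $\bs{\beta}\in\bff{B}$, and split into cases on how $P_b$ distributes. Let $B_A = P_b \cap \bff{A}$ and $B_B = P_b \cap \bff{B}$. In Case~1 both $B_A$ and $B_B$ are non-empty, so the restriction of $\tilde{C}$ to $E_b$ is a valid cut of $\mc{N}_b$; by the definition of $H_{\min}^*$ as the minimum cut-set cardinality of the backbone graph, this restriction contains at least $H_{\min}^*$ edges, each with capacity at least $\mc{C}_b^{\min}$, giving a backbone contribution of at least $\mc{C}_{c:b}^m$. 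In Case~2 the backbone lies entirely on one side, WLOG $P_b \subseteq \bff{B}$. If $P_{c_{\bs{\alpha}}} \subseteq \bff{A}$ then every intercommunity edge in $E_{c_{\bs{\alpha}}:b}$ crosses the partition, contributing exactly $\mc{C}_{c_{\bs{\alpha}}:b}^m \geq \mc{C}_{c:b}^m$ by Eq.~(\ref{eq:GlobCap}); otherwise there exists $\bs{x} \in P_{c_{\bs{\alpha}}} \cap \bff{B}$, the partition restricted to $c_{\bs{\alpha}}$ is a valid local-community cut separating $\bs{\alpha}$ from $\bs{x}$, $k_{c_{\bs{\alpha}}}$-connectivity forces at least $k_{c_{\bs{\alpha}}}$ community edges into $\tilde{C}$, and the threshold $\mc{C}_{\bs{xy}} \geq \mc{C}_{c_{\bs{\alpha}}}^{\min}$ yields a community contribution of at least $\mc{C}_{c:b}^m$. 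The case $P_b \subseteq \bff{A}$ is handled symmetrically using $c_{\bs{\beta}}$. In every case $\sum_{(\bs{x},\bs{y})\in\tilde{C}} \mc{C}_{\bs{xy}} \geq \mc{C}_{c:b}^m$, yielding the reverse inequality.

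I expect the main obstacle to be less the case analysis itself than identifying the right decomposition: by splitting on the backbone's placement we isolate exactly one sub-network in which to invoke a uniform edge-count bound ($H_{\min}^*$ on $\mc{N}_b$ or $k_{c_{\bs{j}}}$ on $\mc{N}_{c_{\bs{j}}}$), avoiding any need to reason jointly about contributions from several sub-networks. Non-user communities require no further treatment since their cut contributions are non-negative and can only strengthen the bound, and end-users directly linked to the backbone are accommodated transparently because the direct $\bs{j}$-to-backbone edges are intercommunity edges that simply augment the count in Case~2 without affecting the community-edge argument. Finally, the stated equality is exact in the pure-loss regime where min-cut capacities are achieved by an optimal flooding protocol, while under thermal noise the same case analysis applies unchanged to the tight cut-set upper bound $\mc{L}^m(\mc{N})$ of Eq.~(\ref{eq:MPR_th}).
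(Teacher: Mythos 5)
Your unconditional upper bound and your Case~2 are sound, and Case~2 in fact mirrors the paper's own use of $k_{c_{\bs{j}}}$-connectivity (any cut that splits a user community collects at least $k_{c_{\bs{j}}}$ community edges). The gap is in Case~1. You treat $H_{\min}^*$ as if it were the global edge-connectivity of the backbone graph, so that \emph{any} bipartition with $B_A,B_B\neq\varnothing$ crosses at least $H_{\min}^*$ backbone edges. That is not what $H_{\min}^*$ is: the paper defines $H_{\min}^* = \min_{\bs{j}} H_{\min}(k_b,P_{b|c_{\bs{j}}})$, the minimum number of backbone edges needed to \emph{collectively isolate the community-attached nodes} $P_{b|c_{\bs{j}}}$, and this can be as large as $k_b\,|P_{b|c_{\bs{j}}}|$, far exceeding the backbone's edge connectivity (which is at most $k_b$ for a $k_b$-regular graph). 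A concrete failure: take $\bff{A}=P_{c_{\bs{\alpha}}}\cup\{\bs{u}\}$ with $\bs{u}$ a single backbone node (or a strict subset of $P_{b|c_{\bs{\alpha}}}$). This is a Case~1 cut, yet its backbone restriction contains only $k_b$ edges, so its backbone contribution is only $k_b\,\mc{C}_b^{\min}=k_b\,\mc{C}_{c:b}^m/H_{\min}^*$, which is strictly below $\mc{C}_{c:b}^m$ whenever the intercommunity connections are well dispersed ($H_{\min}^*>k_b$ — precisely the ``best-case'' regime the theorem is designed to exploit). Such cuts are only kept at or above $\mc{C}_{c:b}^m$ because of the intercommunity edges they must simultaneously collect, i.e.\ exactly the joint accounting across sub-networks that you explicitly claim your decomposition avoids. (If instead you really do read $H_{\min}^*$ as the backbone's global min-cut cardinality, your argument goes through but proves a strictly weaker theorem, with backbone threshold $\mc{C}_{c:b}^m/k_b$ rather than $\mc{C}_{c:b}^m/H_{\min}^*$, losing the dependence on the spatial distribution of ground--backbone connections that is the point of the result.)

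For comparison, the paper does not argue cut-by-cut in this way: it first proves a restriction lemma (any cut collecting an intra-sub-network edge induces a valid cut between some pair of nodes of that sub-network), introduces the minimum local-community capacities, passes to the quotient (star) graph to show that it suffices to control the two end-user communities and the backbone, and only then applies the cardinality bound $H_{\min}^*$ — and it applies it to cuts performed \emph{exclusively} on the backbone, not to the backbone restriction of an arbitrary cut. To repair your Case~1 you would need a sub-case analysis on how $P_{b|c_{\bs{\alpha}}}$ and $P_{b|c_{\bs{\beta}}}$ (and the intercommunity edges) are split by the cut, trading each ``saved'' intercommunity edge against the extra backbone edges needed to keep its attachment node on the user's side; as written, the step ``backbone contribution $\geq \mc{C}_{c:b}^m$'' does not follow.
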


A detailed proof can be found in Section~I of the Supplementary Material. Thanks to backbone regularity and community connectivity, the minimum cut-set cardinalities that occur within each sub-network can be easily identified. Then, it is straightforward to enforce single-edge capacity constraints which ensure that the local-community/backbone capacities are always larger than the global-community capacity. 

In this theorem, we have used the fact that the cardinality of the smallest backbone cut-set between two end-users in remote communities can be analytically derived, thanks to network regularity. This minimum cardinality takes the form
\begin{equation}
H_{\min}^* \defeq \min_{\bs{j}\in\{\bs{\alpha},\bs{\beta}\}} H_{\min}(k_b, P_{b| c_{\bs{j}}}),
\end{equation}
where $H_{\min}(k_b, P_{b| c_{\bs{j}}})$ is a function that computes the minimum number of edges that must be cut to isolate all the nodes $P_{b|c_{\bs{j}}}$ on the backbone which are also connected to the community $c_{\bs{j}}$. The explicit form of this expression in found in the Supplementary Material, and depends on the precise spatial arrangement of connections from the community to the backbone. However, we can generally bound this quantity using
\begin{equation}
k_b \leq H_{\min}(k_b, P_{b| c_{\bs{j}}}) \leq k_b |P_{b| c_{\bs{j}}}|.
\end{equation}
The lower-bound $k_b$ corresponds to a worst-case spatial distribution of community-to-backbone connections, when all the community nodes are connected to the same node on the backbone, i.e.~$|P_{b| c_{\bs{j}}}| = 1$. Then it is sufficient to isolate just one backbone node to perform a valid end-user cut, collecting only $k_b$ edges (since the backbone is $k_b$-regular). The upper-bound corresponds to a best-case scenario; when all the community nodes are connected to backbone nodes which don't share any neighbors or edges. In this case, the smallest cut-set restricted to the backbone is found by isolating all nodes individually. As a result, this cut collects exactly $k_b |P_{b| c_{\bs{j}}}|$ edges.

Fig.~\ref{fig:Kmins_Mod} depicts a number of examples of minimum backbone cut-sets for remote communities connected to a Manhattan backbone ($k_b=4$). In these figures we display only one end-user community and assume that the other end-user community is sufficiently distant that it does not share intercommunity connected nodes on the backbone.

As a result, we can always present a best and worst-case single-edge threshold capacity for the backbone network, $\mc{C}_b^{\min}$. That is, we can sandwich the backbone threshold capacity according to 
\begin{equation}
\frac{\mc{C}_{c:b}^m}{k_b |P_{b| c_{\bs{j}}}|}  \leq \mc{C}_{b}^{\min} \leq \frac{\mc{C}_{c:b}^m}{k_b} ,~~ \bs{j}\in\{\bs{\alpha},\bs{\beta}\}.
\end{equation}
The more effectively that the intercommunity connections are dispersed across the backbone, the weaker the single-edge constraint that must be forced upon it.

\begin{figure}
(a) $H_{\min} = 16$,\hspace{8mm} (b) $H_{\min} = 12$,\hspace{8mm} (c) $H_{\min} = 4$.\\
\includegraphics[width=0.3255\linewidth]{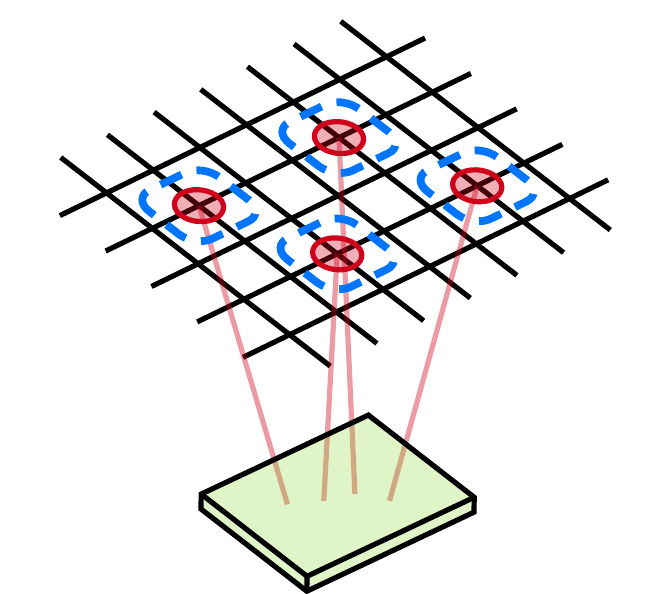}
\includegraphics[width=0.3255\linewidth]{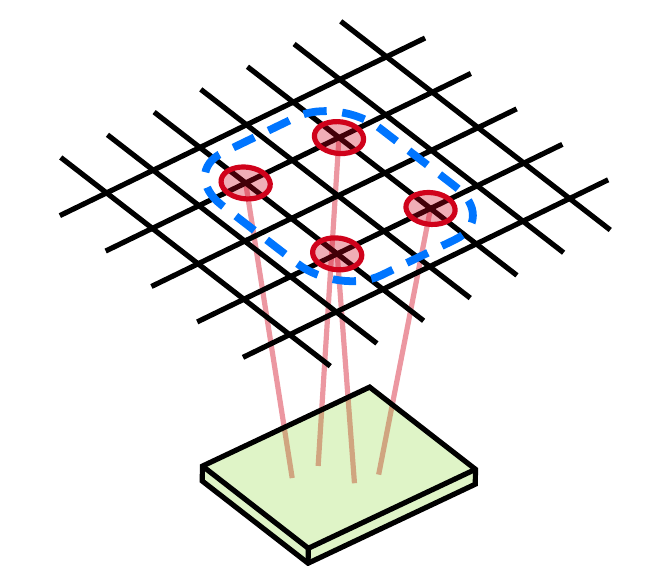}
\includegraphics[width=0.3255\linewidth]{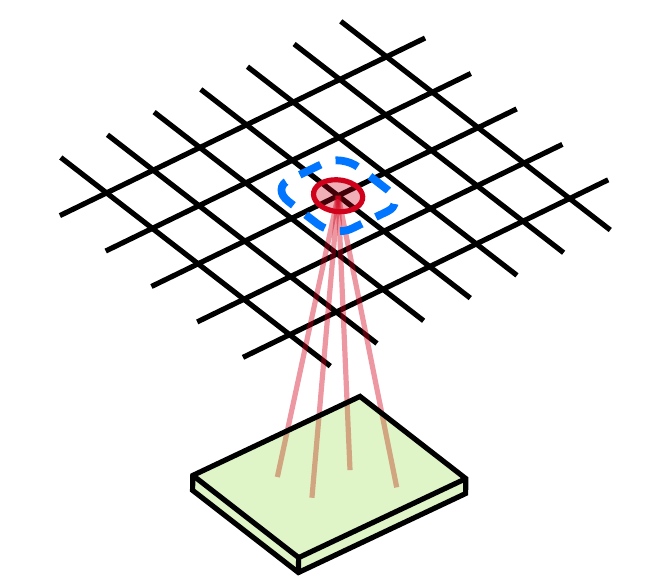}
\caption{Examples of minimum cardinality intercommunity cut-sets for connections from an arbitrary community to a Manhattan backbone network ($k_b=4$). These are valid cuts which isolate remote communities (only one community is illustrated here), and are performed exclusively on the backbone. Panel (a) captures the best-case spatial distribution of the largest potential cut-set when no target-nodes share any edges or neighbors, (b) illustrates an example in which neighbor sharing can diminish the overall cut-set size, and (c) describes the worst-case spatial distribution that minimizes the cut-set size.}
\label{fig:Kmins_Mod}
\end{figure}

\section{Hybrid Free-Space Networks \label{sec:HybridFS}}

In this final section we combine results and theory from Section \ref{sec:Prelims} and \ref{sec:Mod} in order to investigate hybrid fiber/free-space modular quantum networks. Here we study two pertinent cases in an effort to reveal tangible resource requirements for future quantum networks: A fiber/satellite modular configuration and a ground-based free-space/fiber model. 

\subsection{fiber/Satellite Configuration}
\subsubsection{Motivation}
An interesting modular configuration consists of fiber-based community networks which are interconnected via a backbone satellite network. This model captures a realistic satellite-based model of the quantum internet, in which dynamic intersatellite links are used to facilitate long distance quantum communication at high-rates. 
In this scenario, the weakest links are typically the ground-to-satellite free-space connections, due to the impact of atmospheric decoherence and turbulence on a transmitted beam. Therefore, the constraints revealed in Theorem~\ref{theorem:Cond1} are very realistic, as community isolation is likely to be the minimum cut in many settings. \par

In Theorem~\ref{theorem:Cond1} we devise single-edge capacity lower bounds on the community networks which guarantee the network flooding capacity is equal to the global-community capacity. For fiber-based networks, these single-edge lower bounds can be used to identify a \textit{maximum tolerable fiber-length}, $d_{c_{\bs{j}}}^{\max}$ that is permitted within the fiber-network. 
In the context of a satellite-based backbone network, the single-edge capacity lower bound can be translated into a \textit{maximum intersatellite separation}, $z_{b}^{\max}$ which describes the maximum propagation distance that is permitted for free-space channels between satellites in the backbone. 
These are critical quantities which directly motivate the construction of ground-based and satellite-based networks for global quantum communication.

\subsubsection{Optimal Performance}

We wish to enforce that the minimum cut is always achieved by community isolation, generating the global-community capacity $\mc{C}^m(\mc{N}) = \mc{C}_{c:b}^m$. In this physical setting, each intercommunity edge is described by ground-to-satellite channel which may be an uplink or downlink channel. Thanks to teleportation, a network protocol can always choose the physical channel direction that maximizes its point-to-point capacity independently from the desired logical direction of community. Downlink channels are always superior to uplink, and therefore we can simply model the global community capacity as the sum of a downlink capacities. This multi-edge capacity will be bounded by
\begin{align}
\mc{C}_{c:b}^m 
&\leq \min_{\bs{j}\in\{\bs{\alpha},\bs{\beta}\}} \hspace{-0mm} \sum_{(\bs{x},\bs{y}) \in E_{c_{\bs{j}}:b}}\hspace{-2mm}  \mc{L}_{F_{\bs{xy}}}(\eta_{\bs{xy}}, \bar{n}_{\bs{j}}),\\
&\leq \min_{\bs{j}\in\{\bs{\alpha},\bs{\beta}\}} \hspace{-0mm} \sum_{(\bs{x},\bs{y}) \in E_{c_{\bs{j}}:b}}\hspace{-2mm}  \mc{B}_{F_{\bs{xy}}}(\eta_{\bs{xy}}).
\end{align}
where $F_{\bs{xy}}$ and $\eta_{\bs{xy}}$ capture the fading dynamics and maximum transmissivity of each downlink channel that connect $c_{\bs{j}}$ to the backbone, and depend on beam trajectory. Meanwhile, $\bar{n}_{\bs{j}}$ infers community-wide thermal-noise conditions. Since all of the intercommunity edges in $E_{c_{\bs{j}} : b}$ are connected to a relatively small area, we can assume identical operational conditions for all downlink edges. However, these operational conditions will not be consistent for both end-users; when communicating on a global scale, one user may be in night-time while the other is in day-time with independent weather conditions.

\begin{table}[t!]
\begin{tabular}{ |l|c|c| } 
 \hline
\textit{\hspace{8.5mm}Parameter} & ~\textit{Symbol}~ & ~\textit{Value}~ \\ 
 \hline  \hline
Beam Curvature & $R_{0}$ & $\infty$  \\  \hline
Wavelength & $\lambda$ & 800~nm  \\  \hline
Initial spot-size & $\omega_0$ & $\begin{array}{l} 40~\text{cm}\hspace{1.75mm}\text{ - Setup (\#1)}\\ 20~\text{cm }\hspace{0.7mm}\text{ - Setup (\#2)}\end{array}$  \\  \hline
Receiver Aperture & $a_R$ & $\begin{array}{l} 1~\text{m }\hspace{3.9mm}\text{ - Setup (\#1)}\\ 40~\text{cm}\hspace{2mm}\text{ - Setup (\#2)}\end{array}$ \\  \hline
Detector Efficiency & $\eta_{\text{eff}}$ & 0.4 \\  \hline
Detector Noise & $\bar{n}_{\text{ex}}$ & $\approx$ 0 \\  \hline
Pointing error  & $\sigma_{\text{p}}^2$ & $1~\mu$rad $\approx (10^{-6} z)^2$ \\   \hline
Pulse Duration  & $\Delta t$ & $10$ ns \\   \hline
Field of View  & $\Omega_{\text{fov}}$ & $10^{-10}$ sr \\   \hline
Frequency Filter & $\Delta \lambda$ & $\begin{array}{l} 0.1~\text{pm}\hspace{1.75mm}\text{- Setup (\#1)}\\ 1~\text{nm}\hspace{3.2mm}\text{ - Setup (\#2)}\end{array}$   \\  \hline
interCommunity Link & ICL & Downlink\\ \hline
fiber Loss-Rate & $\gamma$ & $0.02$ per km \\ \hline
\end{tabular}
\caption{Parameter table for the fiber/satellite modular network configuration. Here we consider two similar setups using a collimated Gaussian beam at 800~nm wavelength, but differ in initial spot-size $w_0$, receiver aperture $a_R$ and frequency filter $\Delta \lambda$. }
\label{table:Setups}
\end{table}

\begin{figure*}[t!]
\hspace{-1cm}Setup (\#1), Clear day-time,\hspace{2cm} Setup (\#2), Clear night-time.\\
\includegraphics[width=\linewidth]{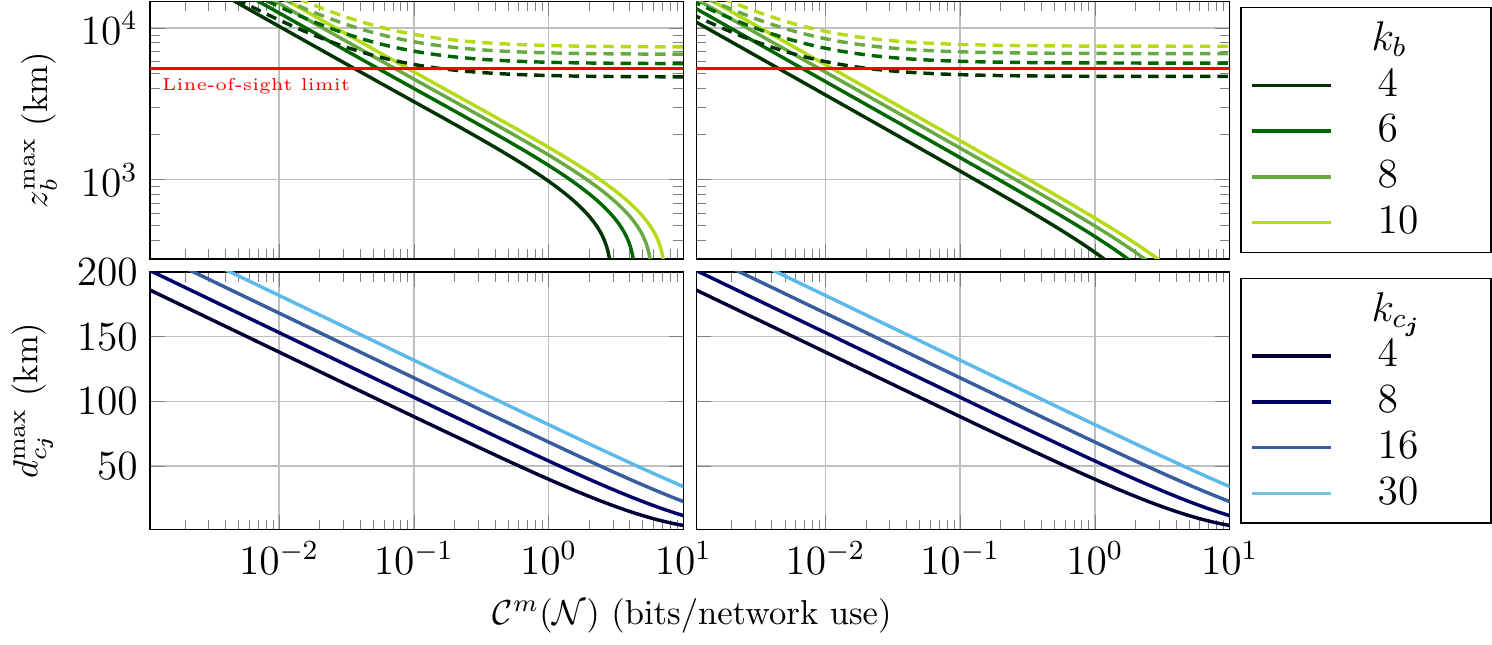}\\
\caption{Optimal end-to-end performance for an ideal modular network consisting of fiber communities interconnected to a satellite-based backbone. 
In order to guarantee an optimal flooding rate along the $x$-axis then the maximum internodal separations in each sub-network on the $y$-axis must be less than or equal to the plotted bounds. 
We consider operational settings in Setup (\#1) for (a), (c) and Setup (\#2) for (b), (d) which are described in Table \ref{table:Setups}. The weather/time conditions are those experienced by the worst-case end-user community. Given an optimal flooding capacity $\mc{C}^m(\mc{N})$, we plot the maximum intersatellite separation $z_{b}^{\max}$ for different backbone connectivity parameters, and the maximum fiber-length in each community $d_{c_{\bs{j}}}^{\max}$ for different community connectivity parameters. The dashed lines in Figs (a) and (c) plot an upper bound the maximum intersatellite separation based on the optimal spatial distribution of (a finite number of) community connected satellite nodes $P_{b|c_{\bs{j}}}$ at a maximum altitude $h^{\max} = 1500$ km, while the solid lines plot the lower bound based on the worst spatial distribution (for any altitude). The red line indicates the maximum achievable channel length that can be achieved for two satellites at altitude ${1500 \text{ km}}$, such that ${z_{\text{sight}}^{\max} \approx 5428 \text{ km}}$.  }
\label{fig:Satfiber_Res}
\end{figure*}

We can derive single-link distance constraints which guarantee $\mc{C}_{c:b}^m$ to be the optimal network capacity. These conditions follow directly from Theorem~\ref{theorem:Cond1} and are summarized in the following corollary:

\begin{corollary}
Consider an ideal modular network of the form $\mc{N}^*$ introduced in Definition~\ref{defin:RegMod}, and assume optical-fiber communities networks $\mc{N}_{c_{\bs{\alpha}}}$, $\mc{N}_{c_{\bs{\beta}}}$ and a satellite-based backbone $\mc{N}_{b}$. Select any pair of end-users $\{\bs{\alpha},\bs{\beta}\}$ located in remote communities $\bs{\alpha} \in P_{c_\alpha}$ and $\bs{\beta} \in P_{c_\beta}$. 
There exists a maximum fiber-length in each community
\begin{equation}
d_{{{c_{\bs{j}}}}}^{\max} \defeq -\frac{1}{\gamma} \log_{10}\left(1 - 2^{-{\mc{C}_{c : b}^m}/{k_{c_{\bs{j}}} }}\right),
\end{equation}
and a maximum intersatellite separation in the backbone
\begin{equation}
z_{b}^{\max} \defeq \argmin_{z} \left| \log\left(\frac{ H_{\min}^* \mc{B}_{F_{\sigma_{\text{\emph{p}}}}}(\eta) }{{\mc{C}_{c:b}^m}} \right) \right|. \label{eq:MaxIntSat}
\end{equation}
for which the network flooding capacity is equal to the global-community capacity, \begin{equation}
\mc{C}^m( \mc{N}) = \mc{C}_{c : b}^m.
\end{equation}
Otherwise, if any intersatellite links violate this condition ${\exists\> z_{\bs{xy}} > z_{{b}}^{\max}}, (\bs{x},\bs{y})\in E_b$ or the local community links are in violation, ${\exists \>d_{\bs{xy}} > d_{{{c_{\bs{j}}}}}^{\max}}, (\bs{x},\bs{y})\in E_{c_{\bs{j}}}$, for either $\bs{j} \in \{\bs{\alpha},\bs{\beta}\}$, then this becomes an upper-bound on the network flooding capacity, $\mc{C}^m( \mc{N}) \leq \mc{C}_{c : b}^m$. 
\label{corollary:Cor1}
\end{corollary}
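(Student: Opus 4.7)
The plan is to obtain Corollary~\ref{corollary:Cor1} as a direct physical specialization of Theorem~\ref{theorem:Cond1}: the theorem gives single-edge capacity thresholds $\mc{C}_{c_{\bs{j}}}^{\min}$ and $\mc{C}_{b}^{\min}$ that guarantee the global-community capacity is attained, and the corollary just rewrites those thresholds as geometric constraints for the two specific link types in play. So the whole argument reduces to (i) invoking Theorem~\ref{theorem:Cond1}, (ii) inverting each point-to-point capacity bound as a function of channel length, and (iii) handling the converse statement.

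First I would treat the fiber communities. Each intra-community edge is a standard optical-fiber lossy channel of length $d_{\bs{xy}}$ with transmissivity $\eta = 10^{-\gamma d_{\bs{xy}}}$, so its two-way assisted capacity equals the PLOB bound $\mc{C}_{\bs{xy}} = -\log_2(1 - 10^{-\gamma d_{\bs{xy}}})$, which is strictly monotonically decreasing in $d_{\bs{xy}}$. Imposing $\mc{C}_{\bs{xy}} \ge \mc{C}_{c_{\bs{j}}}^{\min} = \mc{C}_{c:b}^m / k_{c_{\bs{j}}}$ and solving for $d_{\bs{xy}}$ yields exactly
\begin{equation}
d_{\bs{xy}} \le -\frac{1}{\gamma}\log_{10}\!\bigl(1 - 2^{-\mc{C}_{c:b}^m / k_{c_{\bs{j}}}}\bigr) \eqdef d_{c_{\bs{j}}}^{\max},
\end{equation}
which matches the definition in the corollary. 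Hence the fiber-length constraint is the pointwise translation of the community-side hypothesis of Theorem~\ref{theorem:Cond1}.

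Next I would handle the satellite backbone. Every backbone edge is an intersatellite free-space link, which by Eq.~(\ref{eq:InterSatCap_Fad}) saturates $\mc{C}_{\bs{xy}} = \mc{B}_{F_{\sigma_{\text{p}}}}(\eta_{\text{d}}(z_{\bs{xy}}))$ where $\eta_{\text{d}}$ decreases strictly with propagation distance $z_{\bs{xy}}$ (up to the line-of-sight limit $z_{\text{sight}}^{\max}$ from Eq.~(\ref{eq:LOS_Lim})). Monotonicity of $\mc{B}_{F_{\sigma_{\text{p}}}}$ in $z$ then makes the threshold condition $\mc{C}_{\bs{xy}} \ge \mc{C}_b^{\min} = \mc{C}_{c:b}^m / H_{\min}^*$ equivalent to $z_{\bs{xy}} \le z_b^{\max}$, where $z_b^{\max}$ is the unique zero of the equation $H_{\min}^*\,\mc{B}_{F_{\sigma_{\text{p}}}}(\eta_{\text{d}}(z)) = \mc{C}_{c:b}^m$. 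This zero is precisely captured by the $\argmin$ of the absolute log-ratio in Eq.~(\ref{eq:MaxIntSat}), giving the stated form of $z_b^{\max}$. Substituting both distance bounds into Theorem~\ref{theorem:Cond1} yields $\mc{C}^m(\mc{N}) = \mc{C}_{c:b}^m$, as claimed.

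For the converse direction I would simply contrapose: if some backbone edge has $z_{\bs{xy}} > z_b^{\max}$, then by monotonicity its capacity falls below $\mc{C}_b^{\min}$ and the corresponding backbone hypothesis of Theorem~\ref{theorem:Cond1} is violated; likewise for a community edge with $d_{\bs{xy}} > d_{c_{\bs{j}}}^{\max}$. In either case the second clause of Theorem~\ref{theorem:Cond1} still provides $\mc{C}^m(\mc{N}) \le \mc{C}_{c:b}^m$, completing the statement. The only nontrivial step I anticipate is verifying strict monotonicity of $\mc{B}_{F_{\sigma_{\text{p}}}}(\eta_{\text{d}}(z))$ in $z$; although the underlying $\eta_{\text{d}}(z)$ is clearly decreasing, $\mc{B}_{F_{\sigma_{\text{p}}}}$ involves the correction factor $\Delta(\eta,\sigma_{\text{p}})$ and an integral over the fading distribution, so one must argue that the composition remains monotone, most cleanly by differentiating the integrand representation of $\mc{B}_{F_{\sigma_{\text{p}}}}$ under the integral sign and using positivity of $F_{\sigma_{\text{p}}}$ together with monotonicity of $-\log_2(1-\tau)$. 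Beyond that technicality, the corollary is an essentially algebraic rewriting of Theorem~\ref{theorem:Cond1}.
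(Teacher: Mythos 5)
Your proposal follows essentially the same route as the paper's own proof: invoke Theorem~\ref{theorem:Cond1}, translate the community threshold $\mc{C}_{c:b}^m/k_{c_{\bs{j}}}$ into $d_{c_{\bs{j}}}^{\max}$ by inverting the PLOB bound, and translate the backbone threshold $\mc{C}_{c:b}^m/H_{\min}^*$ into $z_b^{\max}$ by equating it to $\mc{B}_{F_{\sigma_{\text{p}}}}[\eta_{\text{d}}(z)]$ (achievable under the pure-loss modeling of intersatellite links) and locating the crossing point via the absolute log-ratio $\argmin$, with the converse handled by the second clause of the theorem. Your added remark about explicitly verifying monotonicity of $\mc{B}_{F_{\sigma_{\text{p}}}}(\eta_{\text{d}}(z))$ in $z$ is a minor technical point the paper leaves implicit, but it does not change the argument.
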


The analytical simplicity of the maximum fiber-length follows from the remarkably compact PLOB bound for bosonic lossy channels. However, the maximum intersatellite separation in Eq.~(\ref{eq:MaxIntSat}) must be computed numerically due to the more complex PLOB bound which accounts for fading due to pointing errors. The lack of onboard access makes it difficult to perfectly optimize beam trajectory, and thus pointing errors cannot be ignored. However, it is possible to analytically upper and lower bound the quantity $z_{b}^{\max}$. \par

An upper-bound is found by considering a lack of pointing errors, which means the channel is no longer a fading channel but is instead a fixed lossy channel with the maximum possible transmissivity. This idealizes the intersatellite channel by removing the potential for beam wandering, resulting in an upper-bound for the maximum separation. Meanwhile, we can find a lower-bound on the maximum intersatellite separation by considering the use of slow detectors. A slow detector at the receiver will not be able to resolve pointing errors, resulting in a lossy channel with fixed transmissivity averaged over the entire fading process. Interestingly, the rate in bits per channel use via slow detection can in some instances be higher than that for fast detectors which resolve the fading dynamics. However, the slower detection time severely limits the \text{operational rate} at which the channel can be used (or clock rate). As a result, the point-to-point communication rate via slow detection will be orders of magnitude smaller than those with fading-resolving setups. This information can be used to write a lower-bound on $z_{b}^{\max}$. For explicit details on these bounds, see the Supplementary Material.

The maximum intersatellite separation $z_{b}^{\max}$ describes a maximum \textit{tolerable} channel length permitted within the backbone network. Yet, it is not always true that such a channel length is achievable due to line-of-sight limitations associated with orbital geometry. This is quantified by the maximum line-of-sight distance from Eq.~(\ref{eq:LOS_Lim}), a function of the altitudes of the communicating satellites. Crucially, if we find that ${z_{b}^{\max} \geq z_{\text{sight}}^{\max}}$ for some network configuration and desirable rate, this means that the satellites within the backbone can reliably communicate with \textit{any} other satellite that fall within its line-of-sight, without compromising performance. This is an extremely useful property, providing significant flexibility for satellite backbone networks.

\subsubsection{Discussion}

Fig.~\ref{fig:Satfiber_Res} offers insight into the constraints proposed by Corollary~\ref{corollary:Cor1} for satellite-fiber modular networks corresponding to a number of different physical settings and network properties. Here we consider two free-space communication setups described in Table~\ref{table:Setups}: Setup (\#1) in Figs.~(a) and (c) and Setup (\#2) in Figs.~(b) and (d). 

Consider a flooding capacity $\mc{C}^m(\mc{N})$ that is desired between the two end-users who are located in remote, fiber communities. The actual ground distance between the users or unique communities is irrelevant, and can be arbitrarily situated at any location across the Earth. If that flooding capacity is to be achieved, then for a given modular architecture there exists a maximum fiber-length $d_{c_{\bs{j}}}^{\max}$ permitted within the user community $c_{\bs{j}}$, and a maximum intersatellite separation $z_{b}^{\max}$ permitted throughout the backbone network. 

In Figs.~\ref{fig:Satfiber_Res}(a) and (b) we plot the behavior of the maximum intersatellite separation with respect to desired flooding capacity. In the solid lines, we plot the worst-case $z_b^{\max}$, which corresponds to the situation where all the downlink channels are connected to the same node on the backbone, allocating a single satellite to connect to a community. This is a worst-case situation because it means that the minimum cut on the backbone is very small, $H_{\min}^* = k_b$. Yet, even in this scenario, thanks to the lack of atmospheric decoherence we find that very large distances are permitted between satellites, such that ${z_b^{\max}\sim 10^{3}-10^4~\text{km}}$ can still ensure high flooding rates between the end-users communities on the Earth.

Meanwhile, the dashed lines plot $z_{b}^{\max}$ for the best-case spatial distribution of downlink connections on the backbone when the maximum satellite altitude is ${h^{\max} = 1500  \text{ km}}$ and all downlink beam trajectories are within a 1 steradian angular window. This means that the smallest backbone cut-set has the total number of edges
\begin{equation}
H_{\min}^* = k_b |P_{b|c_{\bs{j}}}|.
\end{equation}
In this case, the minimum cut-set cardinality on the backbone is very large, as the number of downlink channels must be increased in order to obtain the chosen flooding capacity. In this best-case scenario, as $\mc{C}^m(\mc{N})$ increases $z_b^{\max}$ begins to plateau, permitting large intersatellite separations even at large flooding capacities. This confirms a strong dependence between the distribution of intercommunity edges and the single-edge capacity properties of a backbone network. For all other distributions of intercommunity connections $P_{b|c_{\bs{j}}}$, the behavior of the maximum intersatellite separation falls between these bounds. 

We also display the maximum line-of-sight distance ${z_{\text{sight}}^{\max} \approx 5428 \text{ km}}$ between any pair of satellites orbiting at an altitude ${h^{\max} = 1500 \text{ km}}$. This is the longest intersatellite channel that can be established due to orbital geometry. Interestingly, even in the worst-case backbone configuration (each community possesses many connections to a single satellite) the line-of-sight limit is exceeded by $z_{b}^{\max}$ at relatively good rates such that ${\mc{C}^m(\mc{N}) \in [10^{-2}, 10^{-1}]}$ bits per network use. When $z_{b}^{\max} \geq z_{\text{sight}}^{\max}$ is true, satellites in the backbone may connect to any other satellite within its line-of-sight; hence this promises achievable and flexible constraints for intersatellite networks.

Figs.~\ref{fig:Satfiber_Res}(c) and (d) depict the maximum fiber-lengths permitted within $k_{c_{\bs{j}}}$-connected community networks to ensure a desired end-to-end flooding capacity. Of course, the quality of the bosonic lossy channels do not change with respect to Setups (\#1) and (\#2) and therefore Figs.~(c) and (d) are identical. As one would expect, the permissible channel lengths for strong end-to-end rates depend upon the community channels being ${d_{c_{\bs{j}}}^{\max} \lesssim 100 \text{ km}}$, even in a highly connected network setting. But thanks to the modular network configuration, this is not problematic. In this configuration, the community fiber-networks are designed to cover small areas (relative to the satellite backbone) and facilitate local communication. Quantum communication over global distances is then appropriately mediated by the satellite backbone.

As an example, let us focus on Setup (\#1) and consider a satellite backbone network with regularity $k_b = 4$ used to mediate long-distance quantum communication between two end-users $\{\bs{\alpha},\bs{\beta}\}$ contained within fiber-networks which are $k_{c_{\bs{\alpha}}} = 4$ and $k_{c_{\bs{\beta}}} = 8$ connected. What are the network constraints required to ensure that their flooding capacity is $\mc{C}^m(\mc{N}) = 1$ bit per network use? Provided that $z_{b}^{\max} \lesssim 1000$ km, that $d_{c_{\bs{\alpha}}}^{\max} \lesssim 30$ km and $d_{c_{\bs{\beta}}}^{\max} \lesssim 50$ km, then it is guaranteed that this flooding rate is achievable. This provides extremely valuable information for future quantum network designs; if an ideal modular network cannot exceed these constraints, then less ideal structures should take even stronger heed of them.

\subsection{Ground-Based Free-Space/fiber Configuration}

\subsubsection{Motivation}
It is also interesting to investigate the limits of ground-based quantum networks which are composed from a mixture of fiber channels and free-space channels. For this purpose, modular network architectures offer an appropriate and physically relevant model. One may consider a metropolitan network area which is spanned by a collection of free-space quantum networks, or ``hotspots". These are short-range communities within which reliable free-space quantum communications can take place. In order to communicate over a larger area and between free-space communities we can use an underlying optical-fiber backbone which mediates longer distance communication.

Utilizing the recently derived ultimate limits of ground-based, free-space quantum communication \cite{FS} we wish to determine whether free-space links are reliable enough to enable high-rate quantum communication in this setting. Furthermore, it is important to understand the requirements of the optical-fiber backbone required to facilitate wireless quantum networking.

\begin{table}[t!]
\begin{tabular}{ |l|c|c| } 
 \hline
\textit{\hspace{8.5mm}Parameter} & ~\textit{Symbol}~ & ~\textit{Value}~ \\ 
 \hline  \hline
Beam Curvature & $R_{0}$ & $\infty$  \\  \hline
Wavelength & $\lambda$ & 800~nm  \\  \hline
Initial spot-size & $\omega_0$ & 5 cm \\  \hline
Receiver Aperture & $a_R$ & 5 cm \\  \hline
Detector Efficiency & $\eta_{\text{eff}}$ & 0.5 \\  \hline
Detector Noise & $\bar{n}_{\text{ex}}$ & 0.05 \\  \hline
Pointing error  & $\sigma_{\text{p}}^2$ & $1~\mu$rad $\approx (10^{-6} z)^2$ \\   \hline
Pulse Duration  & $\Delta t$ & $10$ ns \\   \hline
Field of View  & $\Omega_{\text{fov}}$ & $10^{-10}$ sr \\   \hline
Frequency Filter & $\Delta \lambda$ & 1~\text{nm}   \\  \hline
Altitude & $h$ & 30\text{ m} \\ \hline
fiber Loss-Rate & $\gamma$ & $0.02$ per km \\ \hline
interCommunity Link & ICL & $\begin{array}{c} \text{Free-Space} \\ \text{(Clear day-time)}\end{array}$ \\  \hline
\end{tabular}
\caption{Parameter table for the free-space/fiber modular network configuration.}
\label{table:FSF_Setups}
\end{table}

\subsubsection{Optimal Performance}

It is possible to once more translate Theorem~\ref{theorem:Cond1} to establish conditions for which the flooding capacity is given by the global-community capacity, ensuring optimal end-to-end performance. Now, each community is a ground-based free-space community located at an altitude of $h = 30$ m, and we consider the intercommunity edges connecting each community to the backbone to also be free-space links. Furthermore, since our rigorous free-space capacities are restricted to the regime of weak-turbulence, then we must investigate free-space channels $\mc{E}_{\bs{xy}}$ which are no longer than $z_{\bs{xy}} \approx 1066$ m \cite{FS}. 

While this may at first appear restrictive, we remind the reader of the physical context; free-space communities are inherently designed for short-range networks with mobile users. Indeed, with network nodes that are limited to line-of-sight connections in a potentially urban area, focusing on the weakly turbulent range is natural. This leaves us with the remaining questions: Are free-space quantum channels resilient enough within this range to offer high-rate communication, and what are the resource requirements of the fiber backbone? We provide insight in the following corollary.

\begin{corollary}
Consider an ideal modular network of the form $\mc{N}^*$ introduced in Definition~\ref{defin:RegMod}, and assume free-space community networks $\mc{N}_{c_{\bs{\alpha}}}$, $\mc{N}_{c_{\bs{\beta}}}$ and an optical-fiber backbone $\mc{N}_{b}$. Select any pair of end-users $ \{\bs{\alpha},\bs{\beta}\}$ located in unique communities $\bs{\alpha} \in P_{c_\alpha}$ and $\bs{\beta} \in P_{c_\beta}$. 
There exists a maximum free-space link length in each community
\begin{equation}
z_{c_{\bs{j}}}^{\max} \leq \argmin_{z} \left| \log\left(\frac{k_{c_{\bs{j}}} \mc{L}_{F_{\sigma}}(\eta,\bar{n}_{\bs{j}}) }{{\mc{C}_{c:b}^m}} \right) \right|, \label{eq:GFSmax}
\end{equation}
and a maximum fiber length in the backbone
\begin{equation}
d_{b}^{\max} \defeq -\frac{1}{\gamma} \log_{10}\left(1 - 2^{-{\mc{C}_{c : b}^m}/{H_{\min}^* }}\right),
\end{equation}
for which the network flooding capacity is equal to the global-community capacity, \begin{equation}
\mc{C}^m( \mc{N}) = \mc{C}_{c : b}^m.
\end{equation}
Otherwise, if any fiber links violate this condition ${\exists\> d_{\bs{xy}} > d_{{b}}^{\max}}, (\bs{x},\bs{y})\in E_b$ or the local community links are in violation, ${\exists \>z_{\bs{xy}} > z_{{{c_{\bs{j}}}}}^{\max}}, (\bs{x},\bs{y})\in E_{c_{\bs{j}}}$, for either $\bs{j} \in \{\bs{\alpha},\bs{\beta}\}$, then this becomes an upper-bound on the network flooding capacity, $\mc{C}^m( \mc{N}) \leq \mc{C}_{c : b}^m$. 
\label{corollary:Cor2}
\end{corollary}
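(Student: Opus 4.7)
The plan is to specialize Theorem~\ref{theorem:Cond1} to the physical setting of this corollary. Theorem~\ref{theorem:Cond1} already guarantees that, whenever every community edge satisfies $\mc{C}_{\bs{xy}} \geq \mc{C}_{c:b}^m / k_{c_{\bs{j}}}$ and every backbone edge satisfies $\mc{C}_{\bs{xy}} \geq \mc{C}_{c:b}^m / H_{\min}^*$, the flooding capacity collapses to $\mc{C}_{c:b}^m$. The task therefore reduces to converting each of these single-edge capacity thresholds into an upper bound on the physical link length, given the specific channel type populating that part of the network.

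For the backbone, each edge is a pure-loss fiber channel of length $d$ with transmissivity $\eta(d) = 10^{-\gamma d}$, so by the PLOB bound in Eq.~(\ref{eq:PLOBb}) the single-edge capacity equals $-\log_2(1 - 10^{-\gamma d})$. Because this is strictly monotonically decreasing in $d$, I would invert the inequality $-\log_2(1 - 10^{-\gamma d}) \geq \mc{C}_{c:b}^m / H_{\min}^*$ in closed form and read off the stated expression for $d_b^{\max}$.

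For the communities, each edge is a ground-based thermal-lossy fading channel under weak turbulence at altitude $h = 30$~m. Here the tight upper-bound on the single-edge capacity is $\mc{L}_{F_{\sigma}}(\eta(z), \bar{n}_{\bs{j}})$ from Eq.~(\ref{eq:ThermCap}), with $\eta(z)$, the wandering variance $\sigma^2$, and the shape/scale parameters of $F_{\sigma}$ all depending on $z$ through the ground-based expressions in Eqs.~(\ref{eq:ExtGr}) and (\ref{eq:SphGr}). Since this function does not admit a closed-form inversion, I would define $z_{c_{\bs{j}}}^{\max}$ implicitly as the distance that saturates the single-edge constraint $\mc{L}_{F_{\sigma}}(\eta(z), \bar{n}_{\bs{j}}) = \mc{C}_{c:b}^m / k_{c_{\bs{j}}}$, recast as the $\argmin$ over $z$ of $|\log(k_{c_{\bs{j}}} \mc{L}_{F_{\sigma}}(\eta, \bar{n}_{\bs{j}}) / \mc{C}_{c:b}^m)|$, exactly as in Eq.~(\ref{eq:GFSmax}).

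The hard part will be arguing that the saturation length is unique, i.e.\ that $\mc{L}_{F_{\sigma}}(\eta(z), \bar{n}_{\bs{j}})$ is monotonically decreasing in $z$ throughout the weakly turbulent range $z \lesssim 1$~km. Since extinction transmissivity degrades, the short-term spot size grows, and the wandering variance increases with $z$, all three effects act to lower the capacity, so monotonicity should hold and the $\argmin$ definition is unambiguous. The converse direction---that violation of either threshold forces $\mc{C}^m(\mc{N}) \leq \mc{C}_{c:b}^m$---follows immediately because the global-community cut is always a valid network cut, so its multi-edge capacity upper-bounds the flooding capacity regardless of whether it is the minimum.
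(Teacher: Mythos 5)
Your proposal is correct and follows essentially the same route as the paper's own proof: specialize the single-edge thresholds of Theorem~\ref{theorem:Cond1}, invert the PLOB bound $-\log_2(1-10^{-\gamma d})$ in closed form to obtain $d_b^{\max}$, and characterize the community constraint numerically as the saturation point of $k_{c_{\bs{j}}}\mc{L}_{F_{\sigma}}(\eta,\bar{n}_{\bs{j}}) = \mc{C}_{c:b}^m$ via the $\argmin$ of the absolute log-ratio, with the converse following because community isolation is always a valid cut. The only points worth noting are minor: the paper explains the "$\leq$" in Eq.~(\ref{eq:GFSmax}) by the fact that $\mc{L}_{F_{\sigma}}$ is only a (tight, possibly non-achievable) upper bound on the true single-edge capacity — which your observation that $\mc{L}_{F_{\sigma}}$ upper-bounds the capacity already implies — and your monotonicity argument for uniqueness of the saturation length is an extra justification the paper does not spell out.
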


Notice that we now obtain an upper-bound on the maximum free-space link length, as it is not known whether the single-edge quantity $\mc{L}_{F_{\sigma}}(\eta,\bar{n}_{\bs{j}})$ is achievable or not. However, this bound has been shown to be tight and thus offers an accurate bound on $z_{c_{\bs{j}}}^{\max}$ \cite{FS}. Furthermore, this maximum free-space link length must be computed numerically due to the complex nature of the free-space PLOB bound which accounts for fading and thermal effects. Yet, the maximum fiber length within the backbone can be readily determined for an arbitrary distribution of intercommunity connections. 

\subsubsection{Discussion}

Fig.~\ref{fig:FSFib_Res} provides example network constraints using Corollary \ref{corollary:Cor2} for ideal modular networks and a variety of community and backbone connectivity properties. Operational parameters are found in Table~\ref{table:FSF_Setups} for this modular architecture. 
Given a desired end-to-end flooding capacity, we generate a maximum fiber length in the backbone $d_{b}^{\max}$ and maximum free-space link length in each community $z_{c_{\bs{j}}}^{\max}$ in Figs.~(a) and (b) respectively, such that this flooding capacity is achieved by the global-community capacity.

\begin{figure}[t!]
\includegraphics[width=\linewidth]{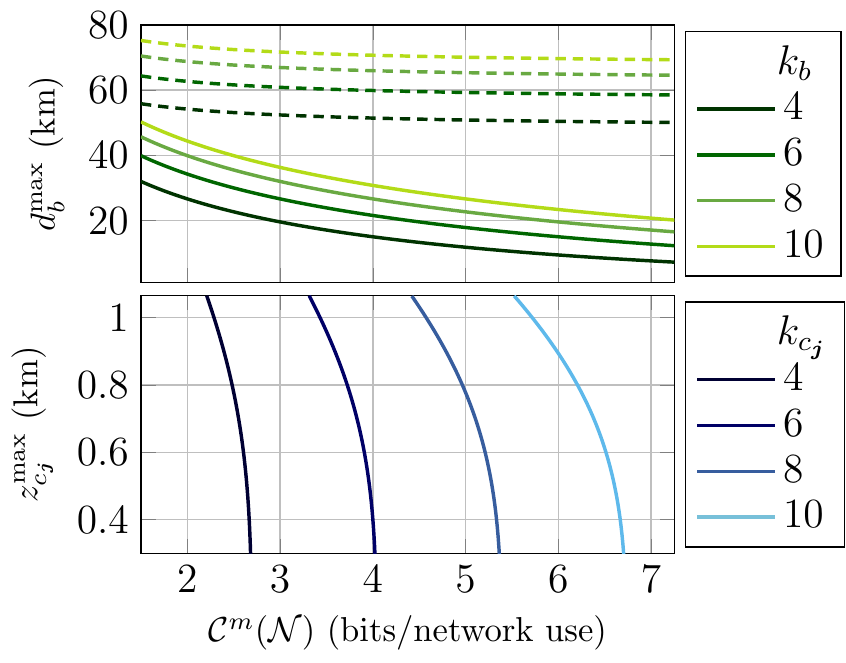}\\
\caption{Optimal end-to-end performance for an ideal modular network consisting of free-space communities interconnected to a fiber-based backbone. 
In order to guarantee an optimal flooding rate along the $x$-axis then the maximum internodal separations in each sub-network depicted on the $y$-axis must be less than or equal to the plotted bounds. We use the operational settings in Table \ref{table:FSF_Setups} during clear day-time. Given an optimal flooding capacity $\mc{C}^m(\mc{N})$, we plot the maximum fiber-length $d_{b}^{\max}$ for different backbone connectivity parameters, and the maximum free-space link-length in each community $z_{c_{\bs{j}}}^{\max}$ for different community connectivity parameters. The dashed lines plot an upper bound on the maximum fiber-length based on the optimal spatial distribution of community connected backbone nodes $P_{b|c_{\bs{j}}}$, while the solid lines plot a lower bound based on the worst-case spatial distribution. }
\label{fig:FSFib_Res}
\end{figure}

Immediately we notice that the flooding capacities plotted are large. This is because, as seen in Fig.~\ref{fig:FSFib_Res}(b), the free-space links are sufficiently capable in the weakly turbulent regime so that $z_{c_{\bs{j}}}^{\max} > 1$ km for flooding capacities as high as ${\mc{C}^m(\mc{N}) \approx 2 \text{ bits/network use}}$, even when the community connectivity is low e.g.~$k_{c_{\bs{j}}} = 4$. As the community connectivity gets larger, the free-space capacities become increasingly reliable within this distance range, and do not compromise the minimum cut until the flooding capacity becomes very large.

Yet, these large end-to-end capacities simultaneously place greater demands on the backbone network, demanding shorter links as the global-community capacity increases. The solid lines in Fig.~\ref{fig:FSFib_Res}(b) plot the maximum fiber-length corresponding to the worst-case spatial distribution of free-space connections from the communities to the backbone, i.e.~all intercommunity links are focussed on a single backbone node. Meanwhile, the dashed lines consider a best-case scenario in which all the intercommunity links are of maximum length $z_{c:b} = 1$ km, and are oriented such that they maximize the backbone cut-set cardinality $H_{\min}^* = k_b |P_{b|c_{\bs{j}}}|$. 

We find that this free-space/fiber modular architecture reports very feasible constraints on the free-space hotspots and fiber-backbone in order to guarantee a high end-to-end performance. For a regular fiber-based backbone with $k_b = 4$, and end-user communities which are $k_{c_{\bs{j}}} > 4$ connected, then one can guarantee an achievable flooding capacity of $\mc{C}^m(\mc{N}) = 2$ bits/network use given that the free-space links all fall within the weakly turbulent range, and at worst $d_b^{\max} \lesssim 25$ km. Within a metropolitan setting, such constraints can be satisfied with realistic resources, supporting the development of wireless quantum networks. Furthermore, confidence in the use of free-space links within this setting reduces the need for wired fiber connections in small areas.

\section{Conclusion}
In this work we have investigated the end-to-end capacities of free-space and hybrid quantum networks, combining recently developed results in quantum information theory and well established theories of free-space optical communication. After collecting and reviewing these recent results, we introduced a modular network architecture for the purposes of constructing hybrid quantum networks using both free-space and fiber links. With these tools in hand, we specified our analysis to ideal modular networks which utilize an underlying regular backbone. Through this ideality it was possible to study ultimate limits for highly relevant modular architectures, revealing critical network properties that assure optimal performance. 

For the first time we have performed a detailed analysis of the ultimate limits of a satellite-based quantum internet; leveraging the properties of fiber-networks on the ground, ground-satellite connective structures and intersatellite networks in space. This theoretically demonstrates that high-rate global quantum communication can be efficiently mediated by a satellite quantum network with realistic connectivities and tolerable intersatellite separations on the order of ${\sim 10^{3} - 10^{4} \text{ km}}$. Such designs allow for effective quantum communication between arbitrarily distant end-users on the Earth. These analyses also indicate that careful consideration of the spatial distribution of ground-satellite connections can more effectively alleviate separation constraints, rather than increasing the nodal degree. 

Furthermore, we studied the ultimate limits of a free-space/fiber modular network configuration, discussing the efficacy of free-space sub-networks within metropolitan areas. We have shown that within the weakly-turbulent regime (where free-space links are limited to ${\sim 1 \text{ km}}$) high-rate intercommunity communication can be readily achieved, using a fiber-backbone with realistic resources. 

These results offer promising first steps in the direction of understanding the ultimate limits of free-space and hybrid quantum networks; motivating its future study both theoretically and experimentally. Our analyses offer a rigorous demonstration of the efficacy of free-space quantum links in a network setting, emphasizing that the integration of free-space and fiber can be reliably performed within future quantum networks. 
Hybrid architectures can and should be designed to take advantage of the strengths of different modes of quantum communication. This work may serve as a platform for future investigations that account for full technical details of the nodes; exploiting these tools to study more realistic, random architectures of hybrid networks which can be benchmarked against the ideal designs studied here.

\acknowledgements
C.H and A.I.F acknowledge funding from the EPSRC via a Doctoral Training Partnership (EP/R513386/1). S.P acknowledges funding by the European Union via “Continuous Variable Quantum Communications” (CiViQ, Grant Agreement No.~820466).


%

\newpage

\newpage
\newpage 
\bigskip
\begin{widetext}
\begin{minipage}{\textwidth}
\centering
\bigskip
\large 
\bf Supplementary Material: End-to-End Capacities of Hybrid Quantum Networks
\end{minipage}
\bigskip

\thispagestyle{empty}

\bigskip
In the main-text, we considered a specific modular network structure, using the idea of disjoint communities connected to a backbone quantum network. Here, using basic notions from graph theory and network theory \cite{SlepianNets,CoverThomas, TanenbaumNets, GamalNets}, we aim to generalize the concept of modular quantum networks, outlining a framework from which the ideal architecture in Definitions~1 and 3 emerge. In doing so, we derive general constraints which guarantee specific end-to-end performance bounds, for communication between local community users and remote community users. 

\setcounter{section}{0}
\setcounter{equation}{0}
\setcounter{figure}{0}

\section{General Aspects of Quantum Networks with Community Structure}

\subsection{General Structure}
Let us first consider general networks which display community structure. Consider a completely general architecture $\mc{N}=(P,E)$ such that $P$ is the collection of all nodes, and $E$ the set of all undirected edges. As discussed in the main text, it is possible to divide $P$ into sub-collections of communities,
\begin{equation}
P = \bigcup_{i} P_{c_{i}},~ P_{c_{i}} \subset P.
\end{equation}
In general, the community structure on a given network is not unique, and the sets of community nodes can overlap, i.e.~the subsets of nodes $P_{c_i}$ are not necessarily pairwise disjoint, i.e.~$P_{c_i} \cap P_{c_j} \neq \varnothing$, for all $i,j$. However, as we are physically motivated by separate communities connected via a backbone, we restrict our attention to the case in which each node can be uniquely assigned to a single community,
\begin{equation}
P = \bigcup_i P_{c_i}, \text{ s.t } P_{c_i} \cap P_{c_j} = \varnothing, \forall i,j.
\end{equation}

This assumption is appropriate for large-scale communication networks, and applies for spatially modular networks \cite{Gross2020}, e.g.~each community represents a separate metropolitan area. We make no further assumptions on the topology of the underlying communities. 

The community structure additionally partitions the edges into distinct sets. The $i^{\text{th}}$ community $c_i$ has its own set of intra-community edges,
\begin{equation}
E_{c_i}=\{(\bs{x}, \bs{y})\in E~|~{\bs{x},\bs{y} \in P_{c_{i}}} \},
\end{equation}
while any two communities $c_i$ and $c_j$ are connected by a set of intercommunity edges
\begin{equation}
E_{c_i:c_j}=\{(\bs{x}, \bs{y})\in E~|~ {\bs{x} \in P_{c_{i}}, \bs{y}\in P_{c_j}} \}.
\end{equation}
Hence, for a network comprised of $n$ communities we may define two global classes of edges: Intra-community edges and intercommunity edges respectively,
\begin{equation}
E_{c}\defeq\bigcup_{i=1}^{n} E_{c_i},~~E_{c:c^{\prime}} \defeq \hspace{-1mm}\bigcup_{ i\neq j =1}^n E_{c_i:c_j}.
\end{equation}
Using these notions we may introduce two related networks which will simplify our analysis. A \emph{community sub-network} $\mc{N}_{c_i}=(P_{c_i},E_{c_i})$ is defined as the graph consisting of all the nodes in the community $c_i$ connected by the intra-community edges $E_{c_i}$. 

\begin{figure}[t!]
\includegraphics[width=0.5\linewidth]{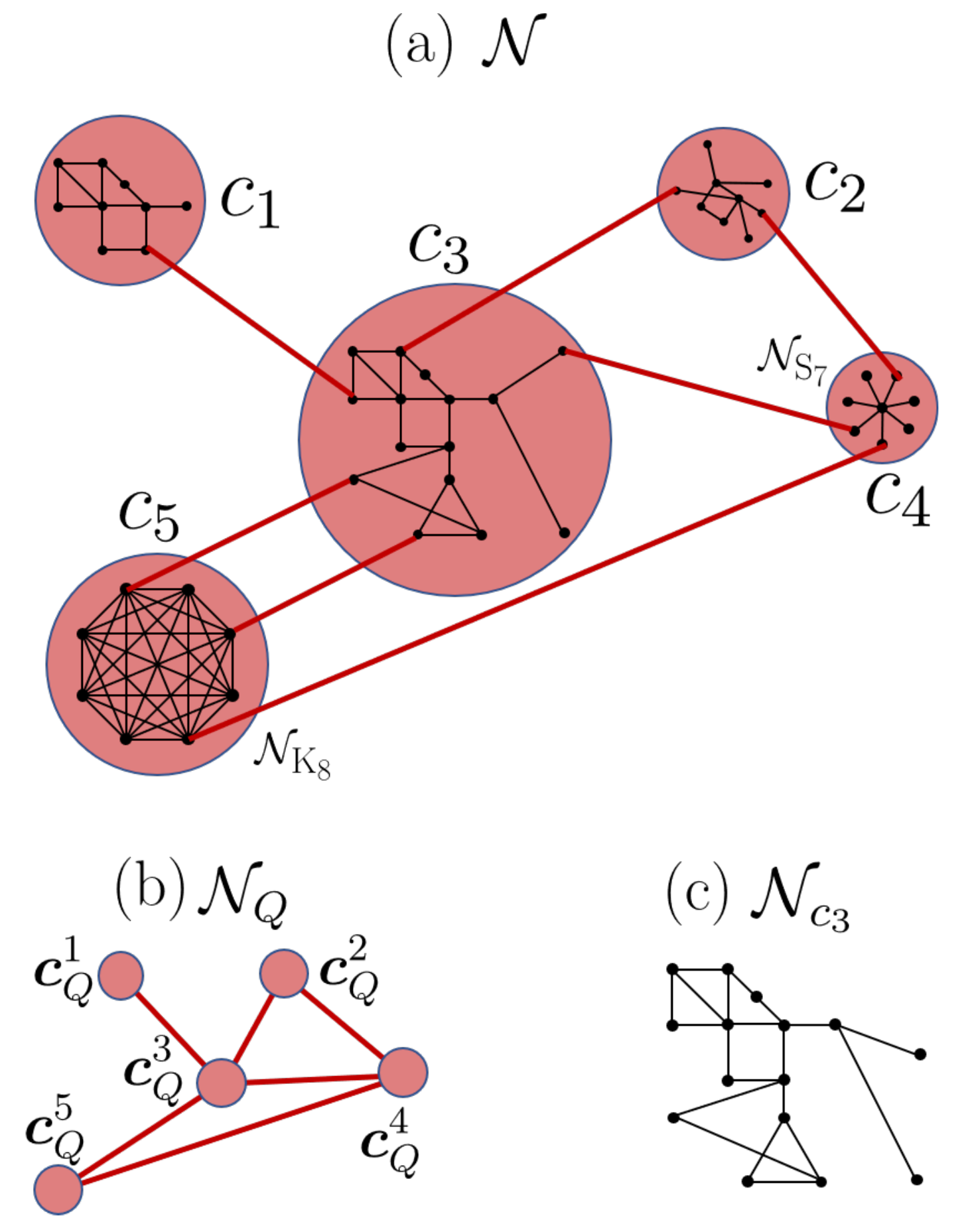}
\caption{(a) A network $\mc{N}$ displaying community structure. Each community is labelled and shown in red and red edges depict intercommunity channels. Black circles represent individual nodes and black edges represent intra-community channels. (b) The quotient network $\mc{N}_Q = \mc{N}/R$. All the nodes in the same community collapse to a community single node in the quotient network. The quotient network remains a simple graph even when multiple edges exist between communities. The single-edge capacity $C_{\bs{c}_3\bs{c}_5}$ on the quotient network is given by the sum of the capacities of the intercommunity edges between $\bs{c}_3$ and $\bs{c}_5$ on the original network, defined in Eq.~(\ref{eq:multi_intercom}). (c) The community sub-network for community $c_3$. Note also that the community sub-network $\mc{N}_{c_4} = \mc{N}_{\mathrm{S}_7}$ is a star network with seven child nodes, and $\mc{N}_{c_5} = \mc{N}_{\mathrm{K}_8}$ is a fully connected network with eight nodes.}
    \label{fig:general_modular}
\end{figure}

\subsection{Simplified Quotient Network}

Let $\mc{N}$ be a network with an $n$-community structure. Since we consider only non-overlapping communities, we may define an equivalence relation $R$ on the nodes of the network in the following way.
\begin{equation}
\label{eq: equiv relation}
\bs{x} \sim \bs{y} \ \mathrm{iff} \ \bs{x},\bs{y} \in P_{c_i}, \forall i \in [1,n].
\end{equation}
That is two nodes are equivalent if they are contained within the same community. This equivalence relation is a means partitioning the network into a simplified form, such that nodes contained within equivalent classes (communities) are pooled and redefined as a unified, collective node. 
Then, $R$ permits us to define a \emph{quotient network},
\begin{align}
\mc{N}_{Q} \defeq \mc{N}/R = (P_{Q},E_{Q}) ,
\end{align}
where $P_Q$ is a set of quotient nodes, and $E_{Q}$ is a set of quotient edges.
The set of quotient nodes is given by 
\begin{equation}
P_Q \defeq P / R = \{ \bs{c}_Q^i \}_{i=1}^n,\text{ where } \bs{c}_Q^i =  P_{c_i},
\end{equation}
where by the equivalence relation $R$ we have reduced the set of community nodes $P_{c_i}$ into a single quotient node $\bs{c}_Q^i$. 
Meanwhile, there exists a quotient edge between the two community nodes $\bs{c}_Q^i$ and $\bs{c}_Q^j$ if there exists at least one intercommunity edge between a node $\bs{x} \in P_{c_i}$ and a node $\bs{y} \in P_{c_j}$. Therefore the set of edges on the quotient network $E_{Q}$ is given by,
 \begin{equation}
E_{Q} \defeq E/R = \left\{ \Big( \bs{c}_Q^i , \bs{c}_Q^j \Big) |~\exists~(\bs{x},\bs{y}) \in E_{c_i:c_j} \right\}_{i\neq j=1}^n.
 \end{equation}
 
It is important to note that there may be more than one intercommunity edge between two given communities; yet our definition of the quotient network is still a simple graph. To account for this, the single-edge capacity of an edge in the quotient graph is actually defined as a multi-edge capacity from the original network. More precisely, the single-edge capacity of each quotient edge is equal to the sum of the capacities of the intercommunity edges,
 \begin{equation}
\{ \mc{C}_{\bs{xy}} \}_{(\bs{x},\bs{y})\in E_Q} =  \{ \mc{C}_{c_i : c_j}^m \}_{i\neq j =1}^n,
\end{equation}
where we have defined the multi-edge capacity between communities
\begin{equation}
\mc{C}_{c_{i}:c_{j}}^m \defeq \sum_{(\bs{x},\bs{y}) \in E_{c_i:c_j}} \hspace{-2mm} \mc{C}_{\bs{xy}}. \label{eq:multi_intercom}
 \end{equation}

These notions are depicted for a modular network in Fig~\ref{fig:general_modular}. This community structure is extremely useful for simplifying investigations of end-to-end capacities.
With this established, we can differentiate between two key scenarios for the end-to-end capacity: end-to-end communication in the same community, or between distinct communities.

\subsection{Intra-Community Capacities}
Let us focus on a pair of end-users $\bs{i} = \{\bs{\alpha},\bs{\beta}\}$ which are located within the same community, $\bs{\alpha},\bs{\beta} \in P_{c_i}$. While it may be intuitive to assume that the flooding capacity for communication between these nodes is determined by a min-cut performed exclusively on $\mc{N}_{c_i}$, this is not always the case. Indeed, it is possible that a minimum cut will collect edges not only within the community $\mc{N}_{c_i}$, but also intercommunity edges, and edges from other communities. In general, we can write the following lemma:

\begin{lemma}
Consider two end-user nodes $\bs{i} = \{\bs{\alpha},\bs{\beta}\}$ which are located within the same community $\mc{N}_{c_i} = (P_{c_i}, E_{c_i})$, such that $\bs{\alpha},\bs{\beta} \in P_{c_i}$. Let $\mc{C}_{c_i}^m$ be the end-to-end flooding capacity computed exclusively on the sub-network $\mc{N}_{c_i}$. Then the intra-community flooding capacity $\mc{C}^m(\bs{i}, \mc{N})$ is bounded by
\begin{align}
\mc{C}_{c_i}^m \leq \mc{C}^m(\bs{i}, \mc{N}) \leq \mc{C}_{c_i}^m + \mc{C}_{E \setminus c_i}^{m},
\end{align}
where $\mc{C}_{E \setminus c_i}^{m}$ is an additional capacity contribution associated with non-local community edges. 
\end{lemma}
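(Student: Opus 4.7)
The plan is to establish the two inequalities separately, each reducing to a simple statement about network cuts and then invoking the min-cut characterization of the flooding capacity from Section~\ref{sec:GenFadNet}.

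First I would prove the lower bound $\mc{C}_{c_i}^m \leq \mc{C}^m(\bs{i},\mc{N})$. The key observation is that $\mc{N}_{c_i}$ is a subgraph of $\mc{N}$ that already contains both end-users, so any flooding protocol executable on $\mc{N}_{c_i}$ lifts trivially to $\mc{N}$ simply by declining to use the edges in $E\setminus E_{c_i}$. The achievable-rate region of $\mc{N}$ therefore contains that of $\mc{N}_{c_i}$, giving $\mc{C}^m(\bs{i},\mc{N})\geq \mc{C}^m(\bs{i},\mc{N}_{c_i})$; identifying the RHS with its min-cut value $\mc{C}_{c_i}^m$ (which is exact in the pure-loss setting and is the tight upper bound assumed throughout the paper in the thermal-loss setting) delivers the inequality. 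An equivalent route is to observe that any cut of $\mc{N}$ separating $\bs{\alpha}$ and $\bs{\beta}$ restricts to a valid cut of $\mc{N}_{c_i}$, so every cut-set sum in $\mc{N}$ is bounded below by $\mc{C}_{c_i}^m$ since single-edge capacities are non-negative.

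For the upper bound $\mc{C}^m(\bs{i},\mc{N}) \leq \mc{C}_{c_i}^m + \mc{C}_{E\setminus c_i}^m$, the plan is to exhibit a single admissible cut of $\mc{N}$ whose total cost decomposes into the two terms on the RHS, and then invoke the min-cut bound from Eq.~(\ref{eq:MPR}). I would start from the minimising local community cut $C_{c_i}^*$ with cost $\mc{C}_{c_i}^m$, which induces a bipartition $(A_{c_i},B_{c_i})$ of $P_{c_i}$ with $\bs{\alpha}\in A_{c_i}$ and $\bs{\beta}\in B_{c_i}$. I would then extend this to a bipartition of the entire node set $P$ by assigning each node in $P\setminus P_{c_i}$ to whichever side minimises the cost contributed by the crossing edges in $E\setminus E_{c_i}$. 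Since those crossing edges lie entirely outside the community's intra-edge set, their minimum total cost is naturally identified with $\mc{C}_{E\setminus c_i}^m$, defined operationally as the min-cut on the auxiliary graph $(P,E\setminus E_{c_i})$ after contracting $A_{c_i}$ into a super-source and $B_{c_i}$ into a super-sink. Summing both contributions produces a concrete cut-set of $\mc{N}$ of total cost $\mc{C}_{c_i}^m + \mc{C}_{E\setminus c_i}^m$, which upper-bounds the true min-cut and hence the flooding capacity.

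The main obstacle is giving $\mc{C}_{E\setminus c_i}^m$ a precise, choice-independent definition, since the extension cost depends on the particular bipartition $(A_{c_i},B_{c_i})$ realising $\mc{C}_{c_i}^m$; if multiple minimisers exist, the tightest version of the bound jointly optimises over the community cut and its admissible completions. A clean resolution is to take $\mc{C}_{E\setminus c_i}^m$ as the maximum over such minimisers of the contracted min-cut, which remains finite and furnishes a valid (in general loose) upper bound. A secondary subtlety is the thermal-loss regime of Section~\ref{sec:Prelims}, where the min-cut sums are themselves only tight upper bounds on the true capacities; the inequalities must therefore be read at the level of these shared bounding functions, consistent with the conventions adopted throughout the paper.
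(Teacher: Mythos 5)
Your proposal is correct and follows essentially the same route as the paper's proof: the lower bound comes from the fact that any valid global cut restricted to $P_{c_i}$ induces a valid community cut (so its cost is at least $\mc{C}_{c_i}^m$), and the upper bound comes from extending the minimizing community cut with non-community edges to obtain a valid global cut whose cost splits as $\mc{C}_{c_i}^m + \mc{C}_{E\setminus c_i}^m$. Your explicit contracted-min-cut construction and the remark on the choice-dependence of $\mc{C}_{E\setminus c_i}^m$ simply make precise what the paper leaves implicit when it separates the minimization, so there is no substantive difference.
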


\begin{proof}
Consider the two end-user nodes $\bs{i} = \{\bs{\alpha},\bs{\beta}\}$. We may exclusively investigate the flooding capacity of this induced sub-network, $\mc{N}_{c_i}$ by ignoring all intercommunity edges. In this way, we can identify a minimum cut restricted on the community by minimizing over all the local-community cuts,
\begin{equation}
\mc{C}_{c_i}^m = \min_{C_{c_i}}\> \mc{C}^{m}(C_{c_i}) = \min_{C_{c_i}}\hspace{-0mm}\sum_{(\bs{x},\bs{y})\in \tilde{C}_{c_i}} \mc{C}_{\bs{xy}}.
\end{equation}
Here a local-community cut $C_{c_i}$ is a cut performed exclusively on the community network, and $\tilde{C}_{c_i}$ is its cut-set.

Now consider the addition of intercommunity edges $\{ E_{c_i : c_j } \}_{j\neq i}$ which provide access to other remote communities. It is possible that these intercommunity edges will compromise the validity of a community cut $C_{c_i}$, since there may exist an end-to-end route that traverses the global network. In this scenario, it is necessary to cut additional edges from the rest of the network in order to consolidate the cut. We collect these additional edges within the following set $\tilde{C}_{E\setminus c_i} \subset E \setminus E_{c_i}$.
More precisely, given a valid network cut $C$, we can always separate its cut-set into community edges, and non-community edges
\begin{align}
\tilde{C} &= \{ (\bs{x},\bs{y}) \in E ~|~ \bs{x}\in \bff{A}, \bs{y} \in \bff{B} \}, \\
&= \tilde{C}_{c_i} \cup \tilde{C}_{E\setminus c_i}.
\end{align}
We can then say that the community cut-set $\tilde{C}_{c_i}$ is generated via a community cut $C_{c_i}$, while $ \tilde{C}_{E\setminus c_i}$ is generated via an additional non-community cut ${C}_{E\setminus c_i}$. 
The network flooding capacity is thus generally given by
\begin{align}
\mc{C}^m( \bs{i},\mc{N}) &= \min_{C} \mc{C}^{m}(C),\\
&=  \min_{C} \Big[ \mc{C}^{m}(C_{c_i}) + \mc{C}^{m}(C_{E \setminus c_i}) \Big]. \label{eq:IntraUB}
\end{align}

The cut ${C}_{c_i}$ always forms a valid partition of the user pair when we are restricted to the sub-network $\mc{N}_{c_i}$. 
Meanwhile, on its own, $\tilde{C}_{E \setminus c_i}$ is never a valid network cut between local end-users. Crucially, the addition of the non-community edges into the cut-set can never decrease the total flooding capacity between users, only increase it. Therefore we can separate the minimization in Eq.~(\ref{eq:IntraUB}) and write
\begin{align}
\mc{C}^m( \bs{i},\mc{N}) &=  \min_{C} \Big[ \mc{C}^{m}(C_{c_i}) + \mc{C}^{m}(C_{E \setminus c_i}) \Big],\\
&\leq  \min_{C_{c_i}} \mc{C}^{m}(C_{c_i}) + \min_{C_{E \setminus c_i}} \mc{C}^{m}(C_{E \setminus c_i}),\\
&= \mc{C}_{c_i}^m + \mc{C}_{E \setminus c_i}^{m},
\end{align}
where $\mc{C}_{E \setminus c_i}^{m}$ denotes the multi-edge capacity of the minimized non-community cut that validates the end-user partition. 

It is then clear that we can write the following bounds on the global network flooding capacity,
\begin{align}
\mc{C}_{c_i}^m \leq \mc{C}^m(\bs{i}, \mc{N}) \leq \mc{C}_{c_i}^m + \mc{C}_{{E \setminus c_i}}^{m}.
\end{align}
Here, the lower bound refers to the situation when non-community cuts are not required ($ \mc{C}_{{E \setminus c_i}}^{m}=0$), and the upper bound refers to when they are ($ \mc{C}_{{E \setminus c_i}}^{m} > 0$).
\end{proof}\\

Hence, the intra-community capacity is always lower-bounded by the local-community capacity of a local network. The saturation of either the upper or lower bounds is completely determined via the network structure. 

\subsection{interCommunity Capacities}
We now turn our attention to the case in which the two end-users lie in distinct communities. This is the setting focussed on in the main text, and is of most interest for (relatively) long-distance communication within large-scale, hybrid networks. Indeed, the intercommunity capacity depends more strongly on the interplay between sub-network properties, rendering its characterization more difficult than the intra-community capacity. 
Nonetheless, through the community structure developed in this appendix, and the simplifications offered by the quotient graph representation, it is possible to glean conditions for which the end-to-end intercommunity capacity is analytically obtainable. 

To achieve this, we must develop a number of helpful lemmas. We shall first show that any cut which collects an intra-community edge automatically invokes a valid cut between the two nodes connected by that edge on a community sub-network. 

\begin{lemma}
Consider two end-user nodes contained in remote communities $\bs{\alpha} \in P_{c_{\bs{\alpha}}}$ and $\bs{\beta} \in P_{c_{\bs{\beta}}}$, and a cut $C$ between them with a corresponding cut-set $\tilde{C}$. If $\tilde{C}$ contains at least one intra-community edge $(\bs{x},\bs{y})\in E_{c_{i}}$ from an arbitrary community $c_i$, then $\tilde{C}$ contains a subset $\tilde{C}'$ which is a valid cut between $\bs{x}$ and $\bs{y}$ on the induced sub-network $\mc{N}_{c_{i}}$.
\label{lemma:cut_lemma}
\end{lemma}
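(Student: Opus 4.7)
The plan is to exploit the bipartition structure that every network cut induces, and then restrict that bipartition to the community $c_{i}$. Since $C$ partitions the global node set as $P = \bff{A} \cup \bff{B}$ with $\bs{\alpha}\in\bff{A}$ and $\bs{\beta}\in\bff{B}$, the hypothesis $(\bs{x},\bs{y})\in \tilde{C}\cap E_{c_i}$ means (without loss of generality) that $\bs{x}\in \bff{A}$ and $\bs{y}\in \bff{B}$. So the very same bipartition, when restricted to $P_{c_i}$, already separates $\bs{x}$ from $\bs{y}$ inside the community.

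The key construction is therefore to set $\bff{A}_{c_i} \defeq \bff{A} \cap P_{c_i}$ and $\bff{B}_{c_i} \defeq \bff{B} \cap P_{c_i}$, giving a valid partition of $P_{c_i}$ with $\bs{x}\in \bff{A}_{c_i}$ and $\bs{y}\in \bff{B}_{c_i}$. Define
\begin{equation}
\tilde{C}' \defeq \{(\bs{u},\bs{v}) \in E_{c_i} \mid \bs{u}\in \bff{A}_{c_i},\ \bs{v}\in \bff{B}_{c_i}\}.
\end{equation}
By construction $\tilde{C}'\subseteq \tilde{C}$, since any intra-community edge with one endpoint in $\bff{A}$ and the other in $\bff{B}$ necessarily belongs to the global cut-set. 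The edge $(\bs{x},\bs{y})$ itself lies in $\tilde{C}'$, so it is non-empty.

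The second step is to check that $\tilde{C}'$ is actually a cut between $\bs{x}$ and $\bs{y}$ on $\mc{N}_{c_i}$, i.e.\ that removing its edges from $\mc{N}_{c_i}$ destroys every path from $\bs{x}$ to $\bs{y}$. Suppose for contradiction that a path $\bs{x} = \bs{u}_0, \bs{u}_1, \dots, \bs{u}_m = \bs{y}$ survives in $\mc{N}_{c_i}\setminus \tilde{C}'$. Then each consecutive pair $(\bs{u}_k,\bs{u}_{k+1})$ is an intra-community edge with both endpoints on the same side of $\bff{A}_{c_i}/\bff{B}_{c_i}$ (otherwise it would sit in $\tilde{C}'$). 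A trivial induction forces all $\bs{u}_k$ to lie on the same side, contradicting $\bs{x}\in \bff{A}_{c_i}$ and $\bs{y}\in \bff{B}_{c_i}$. Hence $\tilde{C}'$ is a valid $(\bs{x},\bs{y})$-cut inside $\mc{N}_{c_i}$.

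The only conceptual subtlety — and the thing worth spelling out carefully in the write-up — is that the bipartition $(\bff{A},\bff{B})$ of the whole network need not assign all of $P_{c_i}$ to a single side: edges leaving $c_i$ through intercommunity links may be absorbed by other parts of $\tilde{C}$. The argument above sidesteps this cleanly because we only use the restricted bipartition on $P_{c_i}$ and the fact that cut edges across $(\bff{A}_{c_i},\bff{B}_{c_i})$ must be intra-community (hence captured by $\tilde{C}'$). No counting of edges, no appeal to min-cut properties, and no assumption on the topology of $\mc{N}_{c_i}$ is needed, so the lemma will follow from this elementary partition argument alone.
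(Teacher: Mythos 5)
Your proof is correct and follows essentially the same route as the paper's: both restrict the global bipartition $(\bff{A},\bff{B})$ to $P_{c_i}$ and observe that the resulting cut-set is contained in $\tilde{C}$. The only difference is that you spell out, via the path argument, why the restricted bipartition is a valid $(\bs{x},\bs{y})$-cut on $\mc{N}_{c_i}$, a step the paper asserts as clear.
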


\begin{proof}
Consider a cut $C$ such that the corresponding cutset $\tilde{C}$ contains an intra-community edge ${(\bs{x},\bs{y}) \in E_{c_i}}$. Without loss of generality, the cut partitions the network nodes into two sets $\bff{A}=\{ \bs{\alpha}, \bs{x}, ... \}$ and $\bff{B}=\{ \bs{\beta}, \bs{y}, ... \}$. We can therefore identify two subsets $\bff{A}'\subseteq \bff{A}$ and $\bff{B}' \subseteq \bff{B}$ that consist solely of nodes that lie in the same community,
\begin{equation}
\bff{A}^{\prime}=  \{\bs{x}~|~\bs{x}\in\bff{A}\cap P_{c_i} \} ,~
\bff{B}^{\prime}= \{\bs{y}~|~\bs{y}\in\bff{B}\cap P_{c_i} \}.
\end{equation}
It can clearly be seen that this forms a bi-partition for the nodes in $P_{{c}_i}$ and thus forms a valid cut between the arbitrary nodes $\bs{x}$ and $\bs{y}$ on the community network $\mc{N}_{c_i}$. The corresponding cut-set $\tilde{C}'$ may be formed as usual from these sets,
\begin{equation}
\tilde{C}'=\{(\bs{x},\bs{y}) \in E~|\ \bs{x} \in \bff{A}^{\prime}, \ \bs{y} \in \bff{B}^{\prime}\},
\end{equation}
Comparing this to the original cut-set
\begin{equation}
\tilde{C}=\{(\bs{x},\bs{y}) \in E ~|\ \bs{x} \in \bff{A}, \ \bs{y} \in \bff{B}\},
\end{equation}
it can clearly be seen that $\tilde{C}' \subseteq \tilde{C}$ since $\bff{A}' \subseteq \bff{A}$ and $\bff{B}' \subseteq \bff{B}$.
\end{proof}\\

This is actually a very useful result. It tells us that a hybrid cut between two remote user-nodes $\bs{\alpha}$ and $\bs{\beta}$ which collects edges from a network community will necessarily invoke a local-community cut between some arbitrary pair of nodes. Let us now make the following definition which will simplify our notation.

\begin{defin} \emph{(Min-Local Community Capacity):} Consider a community sub-network given by $\mc{N}_{c_i}$. We define the minimum local-community capacity as the smallest flooding capacity that can be generated between any two nodes on community network,
\begin{equation}
\mc{C}_{c_i}^{*m} \defeq \underset{\bs{x} \neq \bs{y} \in P_{c_i} }{\min} ~ \mc{C}^m( \{ \bs{x},\bs{y}\}, \mc{N}_{c_i}).
\end{equation}
\end{defin}

As a result of the previous lemmas, we can present the following result which can be used to relate the intra-community capacity with the minimum cut on the quotient network.

\begin{lemma} Consider a quantum network $\mc{N}$ with a disjoint community structure, and a pair of remote end-users ${\bs{i} = \{ \bs{\alpha}, \bs{\beta}\} }$ which are located in distinct communities $\bs{\alpha} \in P_{c_{\bs{\alpha}}}$ and $\bs{\beta} \in P_{c_{\bs{\beta}}}$. On the quotient graph $\mc{N}_Q$, we can equivalently consider the end-user-community pair $\bs{i}_Q = \{ \bs{c}_Q^{\bs{\alpha}}, \bs{c}_Q^{\bs{\beta}}\}$. It follows that if all of the minimum local-community capacities are greater than the flooding capacity on the quotient network,
\begin{equation}
\underset{c_i}{\min}~ \mc{C}_{c_i}^{*m} \geq  \mc{C}^m(\bs{i}_Q, \mc{N}_Q),
\label{eq: saturation for intra caps}
\end{equation}
then the end-to-end flooding capacity between $\bs{\alpha}$ and $\bs{\beta}$ is equal to 
\begin{equation}
 \mc{C}^m(\bs{i}, \mc{N}) = \mc{C}^m(\bs{i}_Q, \mc{N}_Q).
 \end{equation}
Otherwise, the flooding capacity on the quotient network is an upper-bound on the true flooding capacity, $ \mc{C}^m(\bs{i}, \mc{N}) \leq \mc{C}^m(\bs{i}_Q, \mc{N}_Q)$.
\label{lemma:Quotient_Cap}
\end{lemma}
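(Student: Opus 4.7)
The plan is to prove the statement by sandwiching $\mc{C}^m(\bs{i}, \mc{N})$ between the quotient capacity (upper bound, always valid) and the quotient capacity under the hypothesis (lower bound). The two halves are logically independent, so I would handle them in sequence.

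First I would establish the unconditional upper bound $\mc{C}^m(\bs{i}, \mc{N}) \leq \mc{C}^m(\bs{i}_Q, \mc{N}_Q)$, which already yields the ``otherwise'' clause. Given any cut $C_Q$ on the quotient network separating $\bs{c}_Q^{\bs{\alpha}}$ from $\bs{c}_Q^{\bs{\beta}}$, one can canonically lift it to a cut on $\mc{N}$ by placing every node of a community on the same side as its quotient image. The resulting cut-set consists entirely of intercommunity edges crossing the induced bi-partition. By the very definition of the quotient single-edge capacity $\mc{C}_{c_i:c_j}^m = \sum_{(\bs{x},\bs{y})\in E_{c_i:c_j}} \mc{C}_{\bs{xy}}$, the multi-edge capacity of the lifted cut on $\mc{N}$ equals the multi-edge capacity of $C_Q$ on $\mc{N}_Q$. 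Minimizing over $C_Q$ then yields the desired upper bound.

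Next, under the hypothesis $\min_{c_i}\mc{C}_{c_i}^{*m} \geq \mc{C}^m(\bs{i}_Q, \mc{N}_Q)$, I would show that every cut $C$ of $\mc{N}$ with cut-set $\tilde{C}$ satisfies $\mc{C}^m(C) \geq \mc{C}^m(\bs{i}_Q, \mc{N}_Q)$, by splitting into two exhaustive cases. \emph{Case (i):} If $\tilde{C}$ contains no intra-community edges, then no community is split by the associated bi-partition, and the cut descends to a valid cut on the quotient graph with identical multi-edge capacity, which is therefore at least $\mc{C}^m(\bs{i}_Q, \mc{N}_Q)$. \emph{Case (ii):} If $\tilde{C}$ contains at least one intra-community edge $(\bs{x},\bs{y})\in E_{c_i}$, Lemma~\ref{lemma:cut_lemma} guarantees the existence of a subset $\tilde{C}' \subseteq \tilde{C}$ which is a valid cut between $\bs{x}$ and $\bs{y}$ inside the community sub-network $\mc{N}_{c_i}$. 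The multi-edge capacity of $\tilde{C}'$ is then at least the flooding capacity $\mc{C}^m(\{\bs{x},\bs{y}\},\mc{N}_{c_i})$, which is by definition no smaller than $\mc{C}_{c_i}^{*m}$ and, by hypothesis, no smaller than $\mc{C}^m(\bs{i}_Q, \mc{N}_Q)$. Monotonicity of the sum over edges (capacities are nonnegative) then gives $\mc{C}^m(C) \geq \mc{C}^m(\bs{i}_Q, \mc{N}_Q)$. Taking the infimum over $C$ produces the lower bound, which together with the upper bound yields the claimed equality.

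The only delicate point, and the step I expect to require the most care, is Case (ii) of the lower bound: one must verify that the sub-cut extracted via Lemma~\ref{lemma:cut_lemma} is genuinely an entanglement cut on the community sub-network (and not just on the full network), so that its multi-edge capacity is legitimately bounded below by $\mc{C}_{c_i}^{*m}$. This rests on the fact that the minimum in the definition of $\mc{C}_{c_i}^{*m}$ ranges over \emph{all} node pairs in $P_{c_i}$, so whichever pair Lemma~\ref{lemma:cut_lemma} happens to separate, its min-cut on $\mc{N}_{c_i}$ is $\geq \mc{C}_{c_i}^{*m}$. Everything else is routine bookkeeping about cut decompositions into intra- and intercommunity components.
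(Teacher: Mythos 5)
Your proposal is correct and follows essentially the same route as the paper's proof in the Supplementary Material: the unconditional upper bound comes from identifying quotient cuts with valid cuts on $\mc{N}$ whose cut-sets contain only intercommunity edges, and the conditional lower bound comes from Lemma~\ref{lemma:cut_lemma} together with the fact that $\mc{C}_{c_i}^{*m}$ minimizes over all node pairs in $P_{c_i}$, so any cut collecting an intra-community edge has capacity at least $\min_{c_i}\mc{C}_{c_i}^{*m}\geq \mc{C}^m(\bs{i}_Q,\mc{N}_Q)$. The only point stated slightly more strongly than needed is your Case~(i) claim that a cut with no intra-community edges never splits a community---this can fail if a community sub-network is disconnected---but in that case $\mc{C}_{c_i}^{*m}=0$, the hypothesis forces the quotient capacity to vanish, and the bound holds trivially, so the argument (like the paper's own, which glosses over the same point) is unaffected.
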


\begin{proof}
Since $\bs{\alpha}$ and $\bs{\beta}$ lie in two different communities it is always possible to form cuts with cut-sets only containing intercommunity edges. These are exactly the same cuts as are possible on the quotient graph $\mc{N}_Q = \mc{N}/R$ where $R$ is the equivalence relation partitioning the nodes into their communities. Hence, we can call these cuts \textit{quotient cuts}, $C_Q$. Therefore  we can obtain an initial bound for the multi-path capacity.
\begin{equation}
\mc{C}^m(\bs{i}, \mc{N})\leq\mc{C}^m(\bs{i}_Q, \mc{N}_Q)=
\min_{C} \ \mc{C}^m \left(\bs{i}_Q, C_{Q}\right),
\end{equation}
where $\mc{C}^m \left(\bs{i}_Q, C_{Q}\right)$ is the multi-edge capacity associated with a quotient cut partitioning the two communities in $\bs{i}_Q$. This is an upper bound since the cut taken on the quotient graph may not be a minimum cut. 

Now consider an arbitrary cut $C_0$ between $\bs{\alpha}$ and $\bs{\beta}$, containing at least one intra-community edge. From Lemma \ref{lemma:cut_lemma} we have that the intra-community edges form at least one valid cut between arbitrary nodes on an induced sub-network $\mc{N}_{c_i}$. Note that the corresponding cut-set will generally not correspond to a valid cut between $\bs{\alpha}$ and $\bs{\beta}$ on $\mc{N}$.
It is clear we can lower bound the capacity across $C_0$ by the minimum flooding capacity between any two nodes on $P_{c_i}$, that is
\begin{equation}
\mc{C}^m(\bs{i}, C_0)\geq \mc{C}^{*m}_{c_i}.
\end{equation}
Comparing this to the initial bound obtained on the quotient graph, we see that whenever the minimum flooding capacity between any two nodes on the community $\mc{N}_{c_i}$ satisfies
\begin{equation}
\mc{C}_{c_i}^{*m} \geq \mc{C}^m(\bs{i}_Q, \mc{N}_Q),
\end{equation}
then $C_0$ cannot be a minimum cut. Now since the intra-community edge that $C_0$ collects is arbitrary, the left hand side must be minimized over all communities to ensure that no hybrid cut can ever be a minimum cut. Therefore, whenever
\begin{equation}
\underset{c_i}{\min} \ \mc{C}_{c_i}^{*m} \geq \mc{C}^m (\bs{i}_Q, \mc{N}_Q),
\end{equation}
the minimum-cut must contain only intercommunity edges and $\mc{C}^m(\bs{i}, \mc{N})=\mc{C}^m (\bs{i}_Q, \mc{N}_Q)$.
\end{proof}\\

In general the condition given in Eq.~(\ref{eq: saturation for intra caps}) is fairly restrictive, as it places requirements on the minimum capacities between any two users in the same community. However, we shall see that in the case of communities connected to backbones, in which the quotient graph is simply a star network, the condition applies only to the two end-users' community networks $\mc{N}_{c_{\bs{\alpha}}}$ and $\mc{N}_{c_{\bs{\beta}}}$.

\subsection{Modular Networks with Backbone Structure}

We can now turn our discussion to the modular networks as defined in Definition 1 and 3 of the main text, which are specific architectures with community structure. These are modular networks where all of the communities are disjoint and disconnected, but are all connected to a municipal backbone network. By imposing regularity (and thus high connectivity) on the backbone, we are able to study ideal modular networks. 
It is clear to see that the quotient network of this kind of modular architecture produces a star network. Let us denote a star network with $m$-children nodes and a central node by $\mc{N}_{\text{S}_m}$. Each community becomes a child node of the central backbone node, and we gather a very simple network structure. This is illustrated in Fig.~\ref{fig:star_quotient}.

We find that when our modular network adopts this simple (yet very general) structure, then Lemma \ref{lemma:Quotient_Cap} also simplifies significantly. It can be shown that the conditions which require enforcing in Lemma \ref{lemma:Quotient_Cap} reduce to simple constraints only on community networks involved with the end-user pair; not on any other community sub-network. This result is captured in the following.

\begin{figure}[t!]
\includegraphics[width=\linewidth]{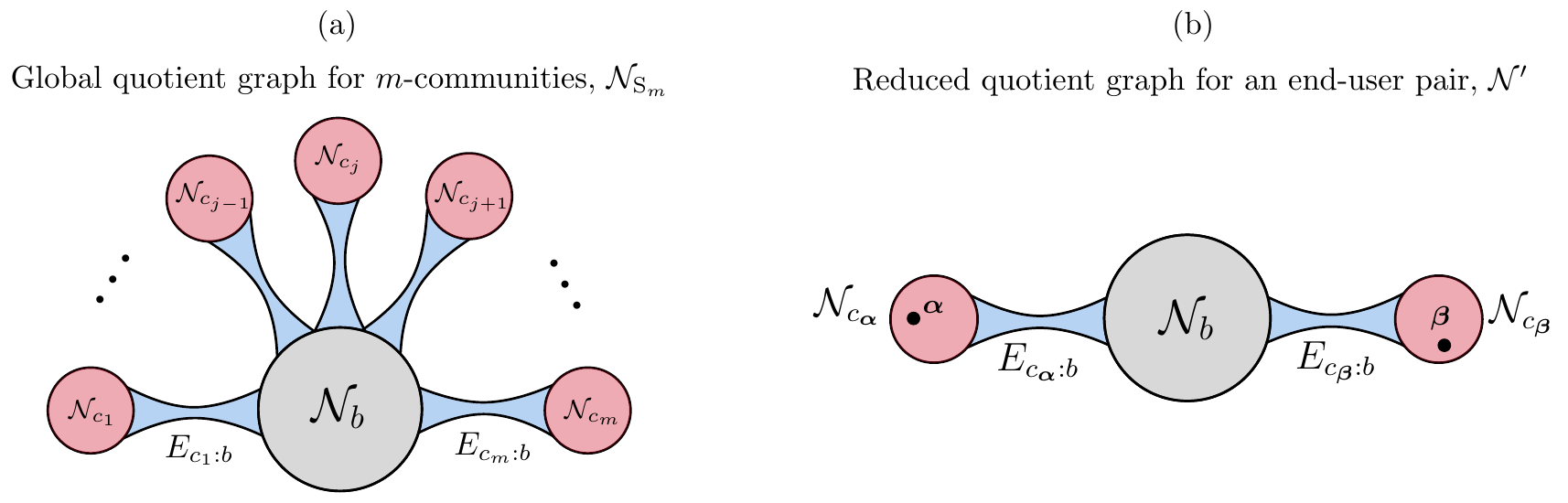}
\caption{(a) Quotient graph of a modular backbone network under the community equivalence relation for $m$-communities, resulting in a star network. (b) When considering the multi-path capacity between end-users $\bs{\alpha}$ and $\bs{\beta}$, the quotient graph can be simplified to a linear chain.}
\label{fig:star_quotient}
\end{figure}

\begin{lemma}
Consider a pair of end-users $\bs{i} = \{ \bs{\alpha}$,$\bs{\beta} \}$ and their associated pair of end-user communities ${\bs{i}_Q = \{ \bs{c}_Q^{\bs{\alpha}}, \bs{c}_Q^{\bs{\beta}} \}}$. The quotient graph $\mc{N}_{Q}$ of the network under the community equivalence relation $R$ is a star network. It then follows that if  
\begin{equation}
\underset{ x \in  \{ c_{\bs{\alpha}}, c_{\bs{\beta}}, b \} }{\min} \mc{C}_{x}^{*m} \geq \min\{ \mc{C}_{c_{\bs{\alpha}}:b}^m,\mc{C}_{c_{\bs{\beta}}:b}^m \}, \label{eq:Star_Cond}
\end{equation}
then the end-to-end flooding capacity between $\bs{\alpha}$ and $\bs{\beta}$ is equal to 
\begin{equation}
\mc{C}^m(\bs{i}, \mc{N}) = \min\{ \mc{C}_{c_{\bs{\alpha}}:b}^m,\mc{C}_{c_{\bs{\beta}}:b}^m \}.
\end{equation}
Otherwise, the flooding capacity on the quotient network is an upper-bound on the true flooding capacity 
$\mc{C}^m(\bs{i}, \mc{N}) \leq \min\{ \mc{C}_{c_{\bs{\alpha}}:b}^m,\mc{C}_{c_{\bs{\beta}}:b}^m \}$.
\label{lemma:GenThreshLemma}
\end{lemma}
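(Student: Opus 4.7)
The plan is to apply Lemma~\ref{lemma:Quotient_Cap} but strengthen its hypothesis by eliminating the spurious conditions on non-end-user communities. I first compute the flooding capacity of the quotient: since $\bs{c}_Q^{\bs{\alpha}}$ and $\bs{c}_Q^{\bs{\beta}}$ are two distinct leaves of the star $\mc{N}_Q$, every path between them traverses the central node $b$, so the relevant routing collapses to the chain $\bs{c}_Q^{\bs{\alpha}}$--$b$--$\bs{c}_Q^{\bs{\beta}}$ of Fig.~\ref{fig:star_quotient}(b); its min-cut is the smaller of the two incident edges, giving $\mc{C}^m(\bs{i}_Q,\mc{N}_Q)=\min\{\mc{C}_{c_{\bs{\alpha}}:b}^m,\mc{C}_{c_{\bs{\beta}}:b}^m\}$. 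Because $E_{c_{\bs{\alpha}}:b}$ and $E_{c_{\bs{\beta}}:b}$ are both valid $(\bs{\alpha},\bs{\beta})$-cuts in the full network, this directly yields the upper bound $\mc{C}^m(\bs{i},\mc{N})\leq\min\{\mc{C}_{c_{\bs{\alpha}}:b}^m,\mc{C}_{c_{\bs{\beta}}:b}^m\}$ and hence the ``otherwise'' clause of the lemma.

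For the matching lower bound under hypothesis~(\ref{eq:Star_Cond}), I would take an arbitrary cut $C$ with partition $(\bff{A},\bff{B})$ and split into two cases. \emph{Case 1}: if $\tilde{C}$ contains an intra-community edge of $c_{\bs{\alpha}}$ (resp.\ $c_{\bs{\beta}}$ or $b$), Lemma~\ref{lemma:cut_lemma} carves out a subset of $\tilde{C}$ that forms a valid cut on the corresponding sub-network, bounding the cost below by $\mc{C}_{c_{\bs{\alpha}}}^{*m}$ (resp.\ $\mc{C}_{c_{\bs{\beta}}}^{*m}$ or $\mc{C}_b^{*m}$). \emph{Case 2}: if none of $\mc{N}_{c_{\bs{\alpha}}}$, $\mc{N}_{c_{\bs{\beta}}}$, $\mc{N}_b$ is internally cut then, each being connected, it sits wholly on one side of the partition; combined with $\bs{\alpha}\in\bff{A}$ and $\bs{\beta}\in\bff{B}$, this forces $P_{c_{\bs{\alpha}}}\subseteq\bff{A}$ and $P_{c_{\bs{\beta}}}\subseteq\bff{B}$, while $P_b$ lies wholly on one side. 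Then every edge of the intercommunity set attached to the ``other'' community ($E_{c_{\bs{\beta}}:b}$ if $P_b\subseteq\bff{A}$, $E_{c_{\bs{\alpha}}:b}$ otherwise) is forced into $\tilde{C}$, yielding cost at least $\min\{\mc{C}_{c_{\bs{\alpha}}:b}^m,\mc{C}_{c_{\bs{\beta}}:b}^m\}$. In each case, hypothesis~(\ref{eq:Star_Cond}) gives $\mc{C}^m(C)\geq\min\{\mc{C}_{c_{\bs{\alpha}}:b}^m,\mc{C}_{c_{\bs{\beta}}:b}^m\}$; minimizing over $C$ closes the equality.

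The subtle step---and the reason why (\ref{eq:Star_Cond}) imposes no bound on the remote communities $c_k$ with $k\notin\{\bs{\alpha},\bs{\beta}\}$---is verifying that intra-community cuts of such $c_k$ can only appear as strict additions to the dominant contributions already isolated in Cases 1 and 2, and so never drop the total cost below the star min-cut. The hardest part will be the connectedness argument in Case 2, namely that the absence of internal cuts of a connected sub-network forces its confinement to a single side of the partition; once that is in place, the two halves of the lemma assemble cleanly from the arguments above.
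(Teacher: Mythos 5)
Your proposal is correct, but your lower-bound argument follows a genuinely different route from the paper's. Both proofs obtain the upper bound identically (community isolation via $E_{c_{\bs{\alpha}}:b}$ or $E_{c_{\bs{\beta}}:b}$ is always a valid cut), and both exclude cuts that touch $E_{c_{\bs{\alpha}}}$, $E_{c_{\bs{\beta}}}$ or $E_b$ by the same mechanism (Lemma~\ref{lemma:cut_lemma} plus the hypothesis~(\ref{eq:Star_Cond})). The difference lies in how the remaining cuts are handled. The paper never reasons about them edge-by-edge: it passes to the induced sub-network $\mc{N}'$ containing only $c_{\bs{\alpha}}$, $c_{\bs{\beta}}$, $b$ and their intercommunity edges, uses the monotonicity $\mc{C}^m(\bs{i},\mc{N})\geq\mc{C}^m(\bs{i},\mc{N}')$, and then applies the general quotient result (Lemma~\ref{lemma:Quotient_Cap}) to $\mc{N}'$, whose quotient is a three-node chain, finally noting that the saturating cut on $\mc{N}'$ (community isolation) is also valid on $\mc{N}$. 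You instead run a direct, exhaustive dichotomy over arbitrary cuts: if none of the three relevant sub-networks is internally cut, connectedness confines each to one side of the partition, which forces the whole of $E_{c_{\bs{\alpha}}:b}$ or $E_{c_{\bs{\beta}}:b}$ into the cut-set and hence the bound; edges from remote communities only add. Your route is more elementary and self-contained (no appeal to Lemma~\ref{lemma:Quotient_Cap} or to sub-graph monotonicity), at the cost of having to make the connectedness step explicit; the paper's route reuses its general machinery and makes clear why only the sub-graph's communities enter the condition, but it quietly relies on the same identification of intercommunity-only cuts with quotient cuts inside Lemma~\ref{lemma:Quotient_Cap}, so neither argument escapes that issue.

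On the step you flag as hardest: it is in fact routine. If a sub-network has no edge in the cut-set yet has nodes on both sides of the partition, any internal path between such nodes would have to contain a cut edge, so an internally uncut \emph{connected} sub-network lies wholly on one side. Connectedness of $\mc{N}_{c_{\bs{\alpha}}}$, $\mc{N}_{c_{\bs{\beta}}}$, $\mc{N}_b$ is not an extra assumption: if any of them were disconnected, its min-local community capacity would vanish, so hypothesis~(\ref{eq:Star_Cond}) would force $\min\{\mc{C}_{c_{\bs{\alpha}}:b}^m,\mc{C}_{c_{\bs{\beta}}:b}^m\}=0$ and the claimed equality holds trivially since capacities are non-negative. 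Adding that one observation closes your Case 2 completely.
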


\begin{proof}
It is known that we can always perform a valid cut by community isolation, i.e.~exclusively cutting the intercommunity edges between the backbone and either of the end-user communities. This type of cut equates to a cut on the quotient network, so that in general we can write the global-community capacity as an upper-bound on the flooding capacity
\begin{equation}
\mc{C}^m(\bs{i},\mc{N}) \leq \mc{C}^m(\bs{i}_Q,\mc{N}_Q) = \min\{ \mc{C}_{c_{\bs{\alpha}}:b}^m,\mc{C}_{c_{\bs{\beta}}:b}^m \}. \label{eq:star_cap_UB}
\end{equation}

Now let us impose the condition in Eq.~(\ref{eq:Star_Cond}). This condition is similar to that which is proven in Lemma~\ref{lemma:Quotient_Cap} for more general networks. However in this setting, it is not necessary to consider communities which don't contain end-users. When Eq.~(\ref{eq:Star_Cond}) holds, this means that any cut which collects an edge from the sub-networks $x \in  \{ c_{\bs{\alpha}}, c_{\bs{\beta}}, b \}$ will not be a minimum cut. More precisely, by Lemma~\ref{lemma:cut_lemma}, any cut which collects an edge from any of these sub-networks will automatically invoke a valid intra-community cut between a pair of arbitrary local nodes. But per Eq.~(\ref{eq:Star_Cond}), the minimum local-community capacity is always larger than the global-community capacity, therefore this form of cut will never be the minimum cut. 

We are now left to check that any cut which collects edges from other communities $c_i \notin  \{ c_{\bs{\alpha}}, c_{\bs{\beta}}, b \}$ will never be the minimum cut under these conditions. 
Consider a sub-graph of the original network $\mc{N}' = (P^{\prime},E^{\prime}) \subset \mc{N}$ which consists solely of the communities $c_{\bs{\alpha}}$,  $c_{\bs{\beta}}$ and $c_b$, each of the communities intra-community edges and the corresponding intercommunity edges. Therefore the sets of sub-graph nodes and edges are
\begin{align}
P' &= P_{c_{\bs{\alpha}}} \cup P_{c_{\bs{\beta}}} \cup P_{b},\\ 
E' &= ( E_{c_{\bs{\alpha}}} \cup E_{c_{\bs{\beta}}} \cup E_{b} ) \cup (E_{c_{\bs{\alpha}}:b}  \cup E_{c_{\bs{\beta}}:b}).
\end{align}
In general, the flooding capacity computed on the sub-network $\mc{N}^{\prime}$ will always be smaller than that computed on $\mc{N}$. The addition of extra communities can only ever increase the number of end-to-end multi-path routes. 
As a result, we can write the lower-bound
\begin{equation}
\mc{C}^m(\bs{i},\mc{N}) \geq \mc{C}^m(\bs{i},\mc{N}'). \label{eq:star_sub_lb}
\end{equation}
It is very important to note that $\mc{C}^m(\bs{i},\mc{N}')$ is not necessarily a valid, end-to-end capacity. This is because the minimum cut which generates $\mc{C}^m(\bs{i},\mc{N}')$ may not be a valid end-user partition on the global network $\mc{N}$. When the minimum cut which generates $\mc{C}^m(\bs{i},\mc{N}')$ is also a valid cut on $\mc{C}^m(\bs{i},\mc{N})$, then the above lower-bound saturates. 

The quotient network of the sub-graph $\mc{N}^{\prime}$ can then be reduced to a simple linear chain, as shown in Fig.~\ref{fig:star_quotient}(b). Now, thanks to the condition in Eq.~(\ref{eq:star_cap_UB}), we can equate the flooding capacity on $\mc{N}^{\prime}$ to that computed on its quotient network,
\begin{equation}
\mc{C}^m(\bs{i},\mc{N}^{\prime}) = \mc{C}^m(\bs{i}_Q,\mc{N}_Q^{\prime} ) =\min\{ \mc{C}_{c_{\bs{\alpha}}:b}^m,\mc{C}_{c_{\bs{\beta}}:b}^m \}. \label{eq:comm_iso}
\end{equation}
Crucially, the minimum cut which generates this capacity is a valid cut on the global network $\mc{N}$, since it is always possible to partition the end-users via community isolation. As a result, when we combine this lower-bound with the upper-bound in Eq.~(\ref{eq:star_cap_UB}), we gather that the end-to-end flooding capacity is given by
\begin{equation}
\mc{C}^m(\bs{i}, \mc{N}) = \min\{ \mc{C}_{c_{\bs{\alpha}}:b}^m,\mc{C}_{c_{\bs{\beta}}:b}^m \},
\end{equation}
as required. 
If the condition Eq.~(\ref{eq:Star_Cond}) is violated, we re-gather the upper-bound in Eq.~(\ref{eq:star_cap_UB}) since there may exist a cut that uses local community edges in $x \in  \{ c_{\bs{\alpha}}, c_{\bs{\beta}}, b \}$ to reduce the end-to-end capacity. This new cut \textit{may} also be a valid cut on $\mc{N}^{\prime}$, but it will not be achieved by community isolation, i.e.~the lower-bound in Eq.~(\ref{eq:star_sub_lb}) will still hold, but it will not be attributed to Eq.~(\ref{eq:comm_iso}).
\end{proof}\\

The technique used in the proof is actually rather more powerful than it may first appear. The key is to select a sub-graph whose quotient graph has exactly the same possible minimum cuts as the quotient graph of the overall network. When the condition in Eq.~(\ref{eq: saturation for intra caps}) on the sub-graph holds, this guarantees that the lower bound, found by asserting that the end-to-end capacity on the overall network must be greater than on the sub-graph,  can be saturated on the overall network. This in turn allows the lower bound to match the upper and reduce the restrictiveness of the condition given in Eq.~(\ref{eq: saturation for intra caps}) to just minimizing $\mc{C}^{*m}_{c_i}$ over the communities that exist in the sub-graph. This highlights that the degree of simplification provided by Lemma \ref{lemma:Quotient_Cap} depends on the underlying topology of the quotient network. Indeed it is clear that similar techniques can be applied to loosen the restrictions of Eq.~(\ref{eq: saturation for intra caps}) for other quotient network topologies, although we leave the exploration of these to future works.

\subsection{Threshold Capacities of Modular Networks with Backbone Structure}

Using the developments throughout this section, we can provide a concise proof of the main theorem in the text. This allows us to identify single-edge capacity thresholds for each of the end-user community networks and the backbone network, such that the end-to-end capacity is equal to the global-community capacity. As a result, we can identify unique physical constraints which can be used to motivate the construction of particular sub-networks, as was done in the main text. Here we restate the theorem for clarity:\\

\begin{theorem}
Consider an ideal modular network of the form $\mc{N}^*$ introduced in Definition~3. Select any pair of end-users $\bs{i}=\{\bs{\alpha},\bs{\beta}\}$ contained in remote communities $\bs{\alpha} \in P_{c_{\bs{\alpha}}}$ and $\bs{\beta} \in P_{c_{\bs{\beta}}}$. There exist single-edge threshold capacities on the communities $\mc{C}_{c_{\bs{j}}}^{\min}$ and backbone $\mc{C}_{b}^{\min}$ sub-networks for which the network flooding capacity is given by the global-community capacity,
\begin{align}
\begin{rcases}
\mc{C}_{\bs{xy}} \geq \mc{C}_{c_{\bs{j}}}^{\min}, \forall (\bs{x},\bs{y}) \in E_{c_{\bs{j}}},\\
{\mc{C}_{\bs{xy}} \geq \mc{C}_{b}^{\min}, \forall (\bs{x},\bs{y}) \in E_{b},}
\end{rcases}
\implies
\mc{C}^m( \mc{N}) = \mc{C}_{c : b}^m,
\end{align}
for all $\bs{j}\in\{\bs{\alpha},\bs{\beta}\}$. The threshold capacities are given by,
\begin{align}
 \mc{C}_{c_{\bs{j}}}^{\min} \defeq \frac{\mc{C}_{c : b}^m}{k_{c_{\bs{j}}} },~~
 \mc{C}_{b}^{\min} &\defeq \frac{\mc{C}_{c : b}^m}{ H_{\min}^* },
\end{align}
where $H_{\min}^*$ is the minimum cut-set cardinality on the backbone network. If these threshold capacities are violated, then the global-community capacity becomes an upper-bound on the end-to-end capacity, $\mc{C}^m( \mc{N}) \leq \mc{C}_{c : b}^m$. 
\label{theorem:Cond1}
\end{theorem}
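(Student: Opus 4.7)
The plan is to prove the theorem by exhibiting matching upper and lower bounds on the network flooding capacity, using the community-structure framework developed in the supplementary material. The upper bound is immediate: for either $\bs{j}\in\{\bs{\alpha},\bs{\beta}\}$, the intercommunity edge set $E_{c_{\bs{j}}:b}$ is itself a valid partitioning cut between $\bs{\alpha}$ and $\bs{\beta}$, so $\mc{C}^m(\mc{N}) \leq \mc{C}_{c:b}^m$ holds unconditionally. The substantive work is then to establish a matching lower bound under the stated single-edge thresholds, which amounts to showing that no alternative cut can be strictly smaller than community isolation.

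First I would handle the end-user communities. By the $k_{c_{\bs{j}}}$-connectivity assumption, any local-community cut on $\mc{N}_{c_{\bs{j}}}$ between two of its nodes must collect at least $k_{c_{\bs{j}}}$ intra-community edges. Combined with the single-edge threshold $\mc{C}_{\bs{xy}} \geq \mc{C}_{c:b}^m/k_{c_{\bs{j}}}$, every such local cut then has multi-edge capacity at least $\mc{C}_{c:b}^m$. Invoking Lemma 2, any global cut between $\bs{\alpha}$ and $\bs{\beta}$ that collects even a single edge from $E_{c_{\bs{j}}}$ induces a valid local-community cut of this form and therefore inherits a capacity contribution of at least $\mc{C}_{c:b}^m$ from the community sub-network alone. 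The subcase in which $\bs{j}$ shares a direct backbone connection is absorbed by the modified cut-set $\tilde{C}_{c_{\bs{j}}}^{\prime}$ introduced in the local-community capacity definition.

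Next I would address the backbone, which is the technically sharper step since the threshold $\mc{C}_{c:b}^m/H_{\min}^*$ is strictly weaker than the $\mc{C}_{c:b}^m/k_b$ bound that naive $k_b$-regularity alone would yield. The key observation is that, in an ideal modular network with disjoint communities and no direct intercommunity links, any cut which separates $\bs{\alpha}$ from $\bs{\beta}$ using backbone edges without including the full set $E_{c_{\bs{j}}:b}$ for some $\bs{j}$ must effectively isolate the target set $P_{b|c_{\bs{j}}}$ from the remainder of the backbone on at least one side. Otherwise a route would persist from $\bs{\alpha}$ through the retained intercommunity edges, across the backbone, and back through $c_{\bs{\beta}}$'s intercommunity edges to $\bs{\beta}$. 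By the definition $H_{\min}^* = \min_{\bs{j}} H_{\min}(k_b, P_{b|c_{\bs{j}}})$, any such set-isolating backbone cut requires at least $H_{\min}^*$ edges, and the single-edge threshold $\mc{C}_{\bs{xy}} \geq \mc{C}_{c:b}^m/H_{\min}^*$ then forces the backbone contribution to be at least $\mc{C}_{c:b}^m$.

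Combining these contributions shows that any cut exploiting community or backbone edges already reaches at least $\mc{C}_{c:b}^m$ in capacity, while community isolation attains exactly $\mc{C}_{c:b}^m$; consequently community isolation is a minimum cut and $\mc{C}^m(\mc{N}) = \mc{C}_{c:b}^m$. The converse upper-bound statement when the thresholds are violated is then just the unconditional inequality from the opening paragraph. The main obstacle I anticipate is formalizing the backbone argument in the hybrid case — a cut containing some but not all of the intercommunity edges of $c_{\bs{j}}$ together with some backbone edges — where one must verify that the residual backbone contribution still performs an effective set isolation on the reduced connection pattern, and that no cheaper alternative exists that simultaneously evades both the $H_{\min}^*$-sized backbone cut and the $k_{c_{\bs{j}}}$-sized community cut. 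This likely requires a short case analysis on the partition of $P_{b|c_{\bs{j}}}$ induced on each side of the cut, leaning on the disjointness of communities guaranteed by Definition~\ref{defin:RegMod}.
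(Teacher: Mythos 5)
Your upper bound and your community step coincide with the paper's own argument (the cut lemma plus $k_{c_{\bs{j}}}$-connectivity and the single-edge threshold), but your backbone step contains a genuine gap, and it sits exactly at the point you defer to a ``short case analysis''. It is not true that a cut which uses backbone edges without collecting all of $E_{c_{\bs{j}}:b}$ must isolate the whole target set $P_{b|c_{\bs{j}}}$: a valid cut may retain a single intercommunity edge, sever all the others (intercommunity edges carry no threshold in the theorem), and isolate on the backbone only the one node that the retained edge touches, which costs as few as $k_b$ backbone edges rather than $H_{\min}^*$. With the threshold $\mc{C}_{b}^{\min}=\mc{C}_{c:b}^m/H_{\min}^*$ that backbone contribution is only of order $k_b\,\mc{C}_{c:b}^m/H_{\min}^*$, strictly below $\mc{C}_{c:b}^m$ whenever $H_{\min}^*>k_b$ (precisely the regime the $H_{\min}^*$ threshold is meant to exploit), and the shortfall would have to be made up by the severed intercommunity edges, which can be arbitrarily weak when the intercommunity capacity is unevenly distributed over $P_{b|c_{\bs{j}}}$. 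So the inequality you assert for hybrid cuts does not follow from the stated hypotheses, and counting backbone edges alone cannot close it. Note also that the paper's own proof never attempts this hybrid accounting: through Lemmas 2--4 of the Supplementary Material it reduces the problem to conditions on minimum \emph{intra}-sub-network capacities, and it only ever lower-bounds cut-sets made \emph{exclusively} of backbone edges via the collective-node-isolation cardinality $H_{\min}^*$ --- a different route from the direct cut bookkeeping you propose.

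A second omission: your plan never addresses cuts whose cut-sets include edges of communities other than $c_{\bs{\alpha}}$ and $c_{\bs{\beta}}$, or their intercommunity edges, none of which obey any threshold. The paper disposes of these not by edge accounting but by a monotonicity argument (Lemma 4): the flooding capacity of $\mc{N}$ is lower-bounded by that of the reduced network $\mc{N}'$ containing only $c_{\bs{\alpha}}$, $c_{\bs{\beta}}$, the backbone and their intercommunity edges, while community isolation remains a valid cut of the full network, so it suffices to establish the claim on $\mc{N}'$. Your argument needs this (or an equivalent) reduction before the threshold-based counting can be applied, since otherwise the concluding ``combining these contributions'' step silently assumes every candidate minimum cut lives inside the end-user communities, their intercommunity edge sets, and the backbone.
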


\begin{proof}
Any modular network following the form of Definition~3 admits a star network as its quotient graph. In order to assert that the global-community capacity between any two end-users located in remote communities $\bs{i} =\{\bs{\alpha},\bs{\beta}\}$ is indeed the flooding capacity, we must reveal conditions for which all other possible cuts generate larger flooding capacities. Thanks to Lemma~\ref{lemma:GenThreshLemma} we know this condition is,
\begin{equation}
\underset{ c \in  \{ c_{\bs{\alpha}}, c_{\bs{\beta}}, b \} }{\min} \mc{C}_{c}^{*m} \geq \min\{ \mc{C}_{c_{\bs{\alpha}}:b}^m,\mc{C}_{c_{\bs{\beta}}:b}^m \} = \mc{C}_{c:b}^m, \label{eq:gen_cond}
\end{equation}
where $\mc{C}_{c:b}^m$ is the global community capacity which implicitly performs the minimization.
To satisfy the condition in Eq.~(\ref{eq:gen_cond}), it is sufficient to satisfy the set of equations,
\begin{equation}
 \mc{C}_{x}^{*m} \geq \mc{C}_{c:b}^m, \label{eq:spliced_cond} ~ \forall~x \in  \{ c_{\bs{\alpha}}, c_{\bs{\beta}}, b \}.
\end{equation}
Using this set of conditions, we are able to derive threshold capacities for each of the sub-networks of the modular structure to ensure the global community capacity is equal to the end-to-end capacity.

Let us first focus on satisfying this condition for the end-user communities, $c_{\bs{\alpha}}$ and $c_{\bs{\beta}}$. By definition, each of the communities in our idealized modular architecture adopt $k_{c_{\bs{j}}}$-connectivity, $\bs{j}\in\{\bs{\alpha},\bs{\beta}\}$. This means that the smallest possible cut between \textit{any two nodes} on either of the community networks (which contain end-users) collects exactly $k_{c_{\bs{j}}}$ edges. Let us also assume that there exists a single-edge threshold capacity for each community ${\mc{C}}_{c_{\bs{j}}}^{\min}$. Therefore, we can always say that the min-local community capacity $ \mc{C}_{c_{\bs{j}}}^{*m}$ will never be smaller than that which is generated by cutting $k_{c_{\bs{j}}}$ edges each of which have a minimum threshold capacity ${\mc{C}}_{c_{\bs{j}}}^{\min}$. 
That is, we can write
\begin{gather}
 \mc{C}_{c_{\bs{j}}}^{*m} \geq k_{c_{\bs{j}}} \mc{C}_{c_{\bs{j}}}^{\min} ,~\bs{j} \in \{\bs{\alpha},\bs{\beta}\},
\end{gather}
This lower-bound on the min-local capacity is achievable, since it is based on a valid cut on the communities. In order to satisfy Eq.~(\ref{eq:spliced_cond}) for each of communities, we must then demand
\begin{gather}
 \mc{C}_{c_{\bs{j}}}^{*m} \geq k_{c_{\bs{j}}} \mc{C}_{c_{\bs{j}}}^{\min}  \geq \mc{C}_{c:b}^m ,~\bs{j} \in \{\bs{\alpha},\bs{\beta}\}. 
\end{gather}
As a result, we derive a single-edge threshold capacity for edges within the local communities,
\begin{gather}
 \mc{C}_{c_{\bs{j}}}^{\min} = \frac{\mc{C}_{c:b}^m}{k_{c_{\bs{j}}}},~\bs{j} \in \{\bs{\alpha},\bs{\beta}\}.
\end{gather}
With this condition, we ensure that any valid cut performed exclusively on the local communities will always generate a larger multi-edge capacity than $\mc{C}_{c:b}^m$, and will not be the minimum cut.\\

We are now left to identify the single-edge constraint for the regular backbone network. While Eq.~(\ref{eq:gen_cond}) will supply a sufficient condition for the backbone network to ensure it does not compromise the global community capacity, the property of regularity lets us determine a more specific constraint. The backbone may possess many connections from the communities, meaning that the minimum number of edges in a cut-set performed exclusively on the backbone as potentially very large. The minimum cut-set size of a backbone cut depends totally on the network regularity, and the distribution of intercommunity connections, i.e.~the set of nodes $P_{b|c_{\bs{j}}}$ which tell us where the community $c_{\bs{j}}$ is directly connected to the backbone (for $\bs{j} \in \{\bs{\alpha},\bs{\beta}\}$). For regular networks, it is always possible to determine this minimum cut-set size via collective node isolation (see Section~\ref{sec:CommIso}). Hence, the minimum cardinality can be summarized by the function
\begin{equation}
| \tilde{C}_b | \geq H_{\min}^* \defeq \min_{\bs{j}\in\{\bs{\alpha},\bs{\beta}\}} H_{\min}(k_b, P_{b|c_{\bs{j}}})
\end{equation}
which chooses the minimum cut-set cardinality associated with either set of intercommunity connections. It then follows that, given some minimum single-edge capacity on the backbone network $\mc{C}_b^{\min}
$, the minimum possible multi-edge backbone capacity is given by 
\begin{equation}
\mc{C}_b^m \geq | \tilde{C}_b |\>\mc{C}_b^{\min}\geq   H_{\min}^*  \>\mc{C}_b^{\min}.
\end{equation}
In order the global-community capacity to remain a minimum cut, it must always be smaller than this lower-bound. Hence, we assert that,
\begin{equation}
\mc{C}_{c:b}^m \leq H_{\min}^*  \>\mc{C}_b^{\min}
\end{equation}
which leads to the required condition.
\end{proof}\\

It is important to note that these conditions hold for any end-user pair in remote communities; even when the user nodes possesses direct connections to the backbone. When this is the case, there will never exist a valid end-user cut that is exclusively made up of local community edges, since it is now necessary to also cut the direct connections to the backbone. 
Let $E_{\bs{x}} \defeq \{ (\bs{x},\bs{y})\in E~|~\bs{y}\in P\}$ be the set of all edges in the neighborhood of a node $\bs{x}$. We can identify the intercommunity edges which provide direct connections from a node $\bs{x}$ to the backbone via the edge set $E_{\bs{x}}\setminus E_{c_{\bs{x}}}$, i.e.~all the directly connected edges to $\bs{x}$ minus those which are community edges. Hence, we can never eliminate community-wide communication by means of a local community cut. If we impose the condition in Theorem~\ref{theorem:Cond1} \textit{anyway}, then this is sufficient to guarantee the global community capacity. Collecting $k_{c_{\bs{j}}}$ local community edges will automatically generate a multi-edge capacity which is at least as large as $\mc{C}_{c:b}^m$; hence the additional edges that one now needs to collect to consolidate the cut can only increase this multi-edge capacity. 

More precisely, the modification which minimizes the number of extra edges collected is achieved by additionally collecting the edges which connect the user-node directly to the backbone. We can denote the multi-edge capacity associated with cutting the user-connected intercommunity edges as
\begin{equation}
\mc{C}_{\bs{j}:b}^m \defeq \sum_{(\bs{x},\bs{y})\in E_{\bs{j}} \setminus E_{c_{\bs{j}}}} \mc{C}_{\bs{xy}}.
\end{equation}
The necessity of cutting additional intercommunity edges means that the min-local community capacity can \textit{never} be the flooding capacity; it can never be a valid minimum cut on its own, since the direct backbone connection means there will remain a route to the backbone (and thus to the other end-user). Instead, we perform a cut of $k_{c_{\bs{j}}}$ edges on the local community and cut these direct backbone connections $E_{\bs{x}}\setminus E_{c_{\bs{x}}}$. This results in a multi-edge capacity of 
\begin{align}
\mc{C}_{c_{\bs{j}}}^{*m} + \mc{C}_{\bs{j}:b}^m \geq  k_{c_{\bs{j}}} \>\mc{C}_{c_{\bs{j}}}^{\min} + \mc{C}_{\bs{j}:b}^m,~\forall \bs{j} \in \{ \bs{\alpha},\bs{\beta}\},
\end{align}
To ensure that this cut is never the minimum cut, we ask that 
\begin{align}
k_{c_{\bs{j}}} \>\mc{C}_{c_{\bs{j}}}^{\min} + \mc{C}_{\bs{j}:b}^m \geq \mc{C}_{c:b}^m, ~~\forall \bs{j} \in \{ \bs{\alpha},\bs{\beta}\},
\end{align}
is always true. Hence the necessity of cutting additional intercommunity edges leads to the modified condition on the local community threshold capacities,
\begin{align}
 k_{c_{\bs{j}}} \>\mc{C}_{c_{\bs{j}}}^{\min} \geq \mc{C}_{c:b}^m - \mc{C}_{\bs{j}:b}^m,~\forall \bs{j} \in \{ \bs{\alpha},\bs{\beta}\},
\end{align}
which is clearly a looser condition than that in Theorem~1. Therefore, Theorem~1 holds regardless of if the end-users are directly connected to the backbone or not.

\section{Application to Hybrid Quantum Networks}

With the main theorem from the main text now proven, it is possible to elucidate the emergence of Corollaries~1 and 2. These are simply applications of Theorem 1 in the context of fiber/satellite modular quantum networks, and ground-based free-space/fiber architectures. To assist the reader, we restate each corollary before providing their proofs.

\subsection{fiber/Satellite Configuration}

\begin{corollary}
Consider an ideal modular network of the form $\mc{N}^*$ introduced in Definition~3 in the main text, and assume optical-fiber communities networks $\mc{N}_{c_{\bs{\alpha}}}$, $\mc{N}_{c_{\bs{\beta}}}$ and a satellite-based backbone $\mc{N}_{b}$. Select any pair of end-users $\{\bs{\alpha},\bs{\beta}\}$ located in remote communities $\bs{\alpha} \in P_{c_\alpha}$ and $\bs{\beta} \in P_{c_\beta}$. 
There exists a maximum fiber-length in each community
\begin{equation}
d_{{{c_{\bs{j}}}}}^{\max} \defeq -\frac{1}{\gamma} \log_{10}\left(1 - 2^{-{\mc{C}_{c : b}^m}/{k_{c_{\bs{j}}} }}\right),
\end{equation}
and a maximum intersatellite separation in the backbone
\begin{equation}
z_{b}^{\max} \defeq \argmin_{z} \left| \log\left(\frac{H_{\min}^* \mc{B}_{F_{\sigma_{\text{\emph{p}}}}}(\eta) }{{\mc{C}_{c:b}^m}} \right) \right|. \label{eq:MaxIntSat}
\end{equation}
for which the network flooding capacity is equal to the global-community capacity, 
\begin{equation}
\mc{C}^m( \mc{N}) = \mc{C}_{c : b}^m.
\end{equation}

Otherwise, if any intersatellite links violate this condition ${\exists\> z_{\bs{xy}} > z_{{b}}^{\max}}, (\bs{x},\bs{y})\in E_b$ or the local community links are in violation, ${\exists \>d_{\bs{xy}} > d_{{{c_{\bs{j}}}}}^{\max}}, (\bs{x},\bs{y})\in E_{c_{\bs{j}}}$, for either $\bs{j} \in \{\bs{\alpha},\bs{\beta}\}$, then this becomes an upper-bound on the network flooding capacity, $\mc{C}^m( \mc{N}) \leq \mc{C}_{c : b}^m$. 
\label{corollary:Cor1}
\end{corollary}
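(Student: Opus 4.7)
The plan is to derive Corollary~\ref{corollary:Cor1} as a direct specialization of Theorem~\ref{theorem:Cond1} to the fiber/satellite setting, translating the abstract single-edge threshold capacities of Theorem~\ref{theorem:Cond1} into tangible physical link-length constraints in each sub-network. First I would invoke Theorem~\ref{theorem:Cond1} to fix the sufficient single-edge thresholds
\begin{equation}
\mc{C}_{c_{\bs{j}}}^{\min} = \frac{\mc{C}_{c:b}^m}{k_{c_{\bs{j}}}},\qquad \mc{C}_{b}^{\min} = \frac{\mc{C}_{c:b}^m}{H_{\min}^*},
\end{equation}
and then substitute the channel-type specific capacity formula appropriate to each sub-network, inverting the resulting monotone relationship between single-edge capacity and link length.

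For the fiber communities, each edge $(\bs{x},\bs{y})\in E_{c_{\bs{j}}}$ is a pure-loss channel of transmissivity $\eta_{\bs{xy}} = 10^{-\gamma d_{\bs{xy}}}$. By the achievable PLOB bound in Eq.~(\ref{eq:PLOBb}), the single-edge capacity is $\mc{B}(\eta) = -\log_2(1-10^{-\gamma d})$, which is monotonically decreasing in $d$. Enforcing $\mc{B}(\eta_{\bs{xy}}) \geq \mc{C}_{c_{\bs{j}}}^{\min}$ and solving analytically for the distance gives precisely the stated $d_{c_{\bs{j}}}^{\max}$. This is the only step for the community constraint and it is fully analytic.

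For the satellite backbone, each edge is an intersatellite lossy fading channel, whose capacity is given by $\mc{B}_{F_{\sigma_{\text{p}}}}(\eta)$ with $\eta = \eta_{\text{d}}(z)$ as in Eq.~(\ref{eq:InterSatCap_Fad}). The condition $\mc{B}_{F_{\sigma_{\text{p}}}}(\eta_{\text{d}}(z)) \geq \mc{C}_{b}^{\min}$ is again monotone in $z$ because $\eta_{\text{d}}(z)$ decreases in $z$ and $\mc{B}_{F_{\sigma_{\text{p}}}}$ is increasing in $\eta$. However, the presence of the correction factor $\Delta$ obstructs an analytic inversion, so we define $z_b^{\max}$ implicitly as the unique zero of $\log[H_{\min}^* \mc{B}_{F_{\sigma_{\text{p}}}}(\eta)/\mc{C}_{c:b}^m]$, equivalent to the $\argmin$ of its modulus as written in Eq.~(\ref{eq:MaxIntSat}). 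Once these two maxima are enforced, Theorem~\ref{theorem:Cond1} immediately yields $\mc{C}^m(\mc{N}) = \mc{C}_{c:b}^m$.

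The hard part will be making the converse direction precise. If instead some backbone or community link violates its bound, a valid cut through that weaker link may undercut $\mc{C}_{c:b}^m$; but because community isolation remains a valid end-to-end partition for users in distinct communities, $\mc{C}_{c:b}^m$ still serves as a valid upper bound on $\mc{C}^m(\mc{N})$ via Eq.~(\ref{eq:MPR_th}). I would close by noting that, unlike the fiber case, the intersatellite inequality cannot be saturated to a closed-form expression and must be evaluated numerically; the explicit analytic upper and lower bounds on $z_b^{\max}$ (obtained by replacing the fading channel with its pointing-error-free limit and its slow-detector average, respectively) are deferred to the Supplementary Material.
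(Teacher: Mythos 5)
Your proposal is correct and follows essentially the same route as the paper's own proof: invoke Theorem~\ref{theorem:Cond1} for the single-edge thresholds, substitute the PLOB bound for the fiber links to invert analytically for $d_{c_{\bs{j}}}^{\max}$, and equate the backbone threshold to $\mc{B}_{F_{\sigma_{\text{p}}}}[\eta_{\text{d}}(z)]$ (treated as achievable for pure-loss intersatellite links) with $z_b^{\max}$ obtained numerically via the argmin of the absolute log-ratio. The converse direction and the deferral of the analytic bounds on $z_b^{\max}$ also match the paper's treatment.
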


\begin{proof}
The proof follows directly from the use of Theorem~1 and the direct substitution of single-edge capacity formulae into its results. As gathered from Theorem~1, for an ideal modular network of this form we can ensure that the end-to-end capacity between the end-users is equal to the global-community capacity if the following single-edge threshold capacities are satisfied: $\mc{C}_{c_{\bs{j}}}^{\min} \defeq {\mc{C}_{c : b}^m}/{k_{c_{\bs{j}}} }$ and $\mc{C}_{b}^{\min} \defeq {\mc{C}_{c : b}^m}/{ H_{\min}^* }$.

Since the community sub-networks are consistent of fiber channels, we can equate the single-edge community threshold capacity to the precise expression of a bosonic pure-loss channel capacity (the PLOB bound). For fiber-channels, a minimum capacity threshold corresponds to a maximum fiber-length threshold, such that 
\begin{equation}
{\mc{C}_{c_{\bs{j}}}^{\min} = -\log_{2}(1-10^{-\gamma d_{c_{\bs{j}}}^{\max}})}.
\end{equation}
This can be then be rearranged to determine the maximum permitted fiber-length within the community,
\begin{align}
d_{{{c_{\bs{j}}}}}^{\max} &= -\frac{1}{\gamma} \log_{10}\left(1 - 2^{-{\mc{C}_{c_{\bs{j}}}^{\min}}} \right) =-\frac{1}{\gamma} \log_{10}\left(1 - 2^{-{\mc{C}_{c : b}^m}/{k_{c_{\bs{j}}} }}\right).
\end{align}

We may perform a similar procedure for the backbone network, which is constructed from intersatellite channels. Assuming negligible thermal contributions (see Section~\ref{sec:InterSat}) but non-negligible pointing errors, then we can relate the threshold capacity $\mc{C}_{b}^{\min}$ to the single-edge capacity expression from Eq.~(43) in the main text,
\begin{equation}
\mc{C}_{b}^{\min} = \frac{\mc{C}_{c:b}^{m} }{H_{\min}^*} \leq \mc{B}_{F_{\sigma_{\text{p}}}} [\eta_{\text{d}}(z)] = \frac{2a_R^2 \Delta(\eta_{\text{d}}(z), \sigma_{\text{p}})}{w_{\text{d}}^2(z) \ln 2} .
\end{equation}
In general settings this is an upper-bound, as it is an extension of the PLOB bound to an ensemble of lossy channels where the convexity properties of the relative entropy of entanglement (REE) are exploited \cite{CondChanSim, FS,SQC}. However, we reliably assume the intersatellite channels to be modeled as pure-loss channels, and thus can admit equality $\mc{C}_{b}^{\min} = \mc{B}_{F_{\sigma_{\text{p}}}} [\eta_{\text{d}}(z)]$.
We are now in a position to compute the maximum tolerable intersatellite separation $z_b^{\max}$. This is the same as asking: For what channel length $z$ does the following equality hold
\begin{equation}
\mc{C}_{c:b}^{m} = H_{\min}^* \mc{B}_{F_{\sigma_{\text{p}}}} [\eta_{\text{d}}(z)]. \label{eq:solve_argmin}
\end{equation}
Due to the complicated nature of the capacity function $\mc{B}_{F_{\sigma_{\text{p}}}} [\eta_{\text{d}}(z)]$ this is not expedient analytically. However it is easy to compute numerically. Indeed, finding the maximum intersatellite separation equates to finding the minimum argument of
\begin{equation}
z_{b}^{\max} = \argmin_{z} \left| \log\left(\frac{H_{\min}^* \mc{B}_{F_{\sigma_{\text{\emph{p}}}}}(\eta) }{{\mc{C}_{c:b}^m}} \right) \right|.
\end{equation}
Here we use the absolute log-ratio to compare the right and left hand-side of Eq.~(\ref{eq:solve_argmin}) and determine for what channel length $z_{b}^{\max}$ they are equivalent. This provides a more sensitive measure than the absolute difference ${| {H_{\min}^* \mc{B}_{F_{\sigma_{\text{\emph{p}}}}}(\eta) } - {{\mc{C}_{c:b}^m}} |}$ since this can become very small at longer channel lengths, and is thus more suitable for determining the maximum intersatellite separation numerically.
 \end{proof}

\subsection{Ground-Based Free-Space fiber Configuration}

\begin{corollary}
Consider an ideal modular network of the form $\mc{N}^*$ introduced in Definition~3 in the main text, and assume free-space community networks $\mc{N}_{c_{\bs{\alpha}}}$, $\mc{N}_{c_{\bs{\beta}}}$ and an optical-fiber backbone $\mc{N}_{b}$. Select any pair of end-users $ \{\bs{\alpha},\bs{\beta}\}$ located in remote communities $\bs{\alpha} \in P_{c_\alpha}$ and $\bs{\beta} \in P_{c_\beta}$. 
There exists a maximum free-space link length in each community
\begin{equation}
z_{c_{\bs{j}}}^{\max} \leq \argmin_{z} \left| \log\left(\frac{k_{c_{\bs{j}}} \mc{L}_{F_{\sigma}}(\eta,\bar{n}_{\bs{j}}) }{{\mc{C}_{c:b}^m}} \right) \right|, \label{eq:GFSmax}
\end{equation}
and a maximum fiber length in the backbone
\begin{equation}
d_{b}^{\max} \defeq -\frac{1}{\gamma} \log_{10}\left(1 - 2^{-{\mc{C}_{c : b}^m}/{H_{\min}^* }}\right),
\end{equation}
for which the network flooding capacity is equal to the global-community capacity, \begin{equation}
\mc{C}^m( \mc{N}) = \mc{C}_{c : b}^m.
\end{equation}
Otherwise, if any fiber links violate this condition ${\exists\> d_{\bs{xy}} > d_{{b}}^{\max}}, (\bs{x},\bs{y})\in E_b$ or the local community links are in violation, ${\exists \>z_{\bs{xy}} > z_{{{c_{\bs{j}}}}}^{\max}}, (\bs{x},\bs{y})\in E_{c_{\bs{j}}}$, for either $\bs{j} \in \{\bs{\alpha},\bs{\beta}\}$, then this becomes an upper-bound on the network flooding capacity, $\mc{C}^m( \mc{N}) \leq \mc{C}_{c : b}^m$. 
\label{corollary:Cor2}
\end{corollary}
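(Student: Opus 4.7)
The plan is to translate the single-edge capacity thresholds from Theorem~\ref{theorem:Cond1} into physically meaningful length thresholds in each sub-network, exploiting the explicit point-to-point capacity formulae for fiber and free-space channels collected in Section~\ref{sec:Prelims}. Since the modular architecture satisfies the hypotheses of Theorem~\ref{theorem:Cond1} (backbone regularity, $k_{c_{\bs{j}}}$-connected communities, star-like quotient), it is enough to solve the two single-edge constraints $\mc{C}_{\bs{xy}} \geq \mc{C}_{c_{\bs{j}}}^{\min}$ and $\mc{C}_{\bs{xy}} \geq \mc{C}_{b}^{\min}$ for link length in the appropriate medium, then invert the implications.

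First I would handle the backbone, where each channel is modelled as a pure-loss fiber with transmissivity $\eta(d) = 10^{-\gamma d}$. By the PLOB bound, Eq.~(\ref{eq:PLOBb}), the two-way assisted capacity is $-\log_2(1-10^{-\gamma d})$, which is both exact and monotonically decreasing in $d$. Setting this equal to the backbone threshold $\mc{C}_b^{\min} = \mc{C}_{c:b}^m / H_{\min}^*$ and inverting yields the closed form $d_b^{\max}$ stated in the corollary. Because the PLOB bound is achievable, the resulting constraint is sharp.

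Second, for the free-space community links I would use the single-edge bound $\mc{C}_{\bs{xy}} \leq \mc{L}_{F_\sigma}(\eta,\bar{n}_{\bs{j}})$ from Eq.~(\ref{eq:ThermCap}), valid throughout the weak-turbulence regime. Imposing the sufficient condition $\mc{L}_{F_\sigma}(\eta,\bar{n}_{\bs{j}}) \geq \mc{C}_{c_{\bs{j}}}^{\min} = \mc{C}_{c:b}^m / k_{c_{\bs{j}}}$ gives a length constraint, but now both the maximum transmissivity $\eta$ and the fading parameters $(\sigma,\gamma,r_0)$ depend on $z$ through the expressions in Section~\ref{sec:Turb}, so no analytic inversion is available. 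I would therefore characterise $z_{c_{\bs{j}}}^{\max}$ implicitly as the smallest $z$ at which the two sides agree, written compactly via the absolute log-ratio as in Eq.~(\ref{eq:GFSmax}); the log-ratio is preferable to a plain difference because both quantities decay rapidly with $z$. Crucially, since $\mc{L}_{F_\sigma}$ is only a tight upper bound rather than an achievable rate, the implication is that this value merely upper-bounds the true maximum permissible link length, which is why the corollary uses $\leq$ rather than equality.

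Finally, I would note the converse direction: if either length threshold is violated on some edge, the corresponding sub-network may admit a cut whose multi-edge capacity falls below $\mc{C}_{c:b}^m$, so the guarantee of Theorem~\ref{theorem:Cond1} collapses and $\mc{C}_{c:b}^m$ reverts to being only an upper bound on $\mc{C}^m(\mc{N})$, since community isolation remains a valid, although no longer minimal, cut. The main obstacle is purely computational rather than structural: inverting $\mc{L}_{F_\sigma}(\eta,\bar{n}_{\bs{j}})$ in $z$ is intricate because $\eta$, the wandering variance, and the atmospheric extinction all couple through the weak-turbulence expressions, so $z_{c_{\bs{j}}}^{\max}$ must be obtained numerically; by contrast, everything else is a direct substitution into Theorem~\ref{theorem:Cond1}.
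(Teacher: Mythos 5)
Your proposal is correct and follows essentially the same route as the paper's own proof: substitute the single-edge thresholds from Theorem~\ref{theorem:Cond1}, invert the exact PLOB expression $-\log_2(1-10^{-\gamma d})$ analytically for the fiber backbone, and characterise $z_{c_{\bs{j}}}^{\max}$ only implicitly (via the argmin of the absolute log-ratio) because the thermal-loss fading bound $\mc{L}_{F_{\sigma}}$ depends on $z$ through the weak-turbulence parameters and is not known to be achievable, which is exactly why the corollary states an inequality rather than an equality. Your remarks on the converse direction and on the log-ratio being numerically preferable to a plain difference also match the paper's reasoning.
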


\begin{proof}
Once again, a proof follows directly from Theorem~1 and the techniques used to prove the previous Corollary. For an ideal modular network of this form we can ensure that the end-to-end capacity between the end-users is equal to the global-community capacity if the following single-edge threshold capacities are satisfied: $\mc{C}_{c_{\bs{j}}}^{\min}= {\mc{C}_{c : b}^m}/{k_{c_{\bs{j}}} }$ and $\mc{C}_{b}^{\min} = {\mc{C}_{c : b}^m}/{ H_{\min}^* }$.

In this setting, the community sub-networks are consistent of ground-based free-space quantum channels. We focus on the regime of weak turbulence, such that channel lengths are limited to ${z \lesssim 1{\text{ km}}}$. For a community containing an end-user $\bs{j}\in\{\bs{\alpha},\bs{\beta}\}$ we can write
\begin{equation}
\mc{C}_{c_{\bs{j}}}^{\min} = \frac{\mc{C}_{c:b}^{m} }{k_{c_{\bs{j}}}} \leq \mc{L}_{F_{\sigma}} [\eta(z), \bar{n}_{\bs{j}}],
\end{equation}
where $\mc{L}_{F_{\sigma}} [\eta(z), \bar{n}_{\bs{j}}]$ is the single-edge capacity upper-bound associated with a ground-based free-space link, discussed in Eq.~(32) and Section IID~1 of the main text. This incorporates atmospheric fading dynamics, and free-space background noise $\bar{n}_{\bs{j}}$ which may be present in the community $c_{\bs{j}}$. Hence, determining the maximum free-space link permitted in an end-user community is equivalent to finding the smallest channel length $z$ for which the equality 
\begin{equation}
{\mc{C}_{c:b}^{m} } = {k_{c_{\bs{j}}}}\mc{L}_{F_{\sigma}} [\eta(z), \bar{n}_{\bs{j}}],
\end{equation}
is satisfied. As before, this can be carried out numerically by finding the minimum argument
\begin{equation}
z_{c_{\bs{j}}}^{\max} \leq \argmin_{z} \left| \log\left(\frac{k_{c_{\bs{j}}} \mc{L}_{F_{\sigma}}(\eta,\bar{n}_{\bs{j}}) }{{\mc{C}_{c:b}^m}} \right) \right|.
\end{equation}
This is an upper-bound on $z_{c_{\bs{j}}}^{\max}$, since it is not known whether $\mc{L}_{F_{\sigma}}(\eta,\bar{n}_{\bs{j}})$ is an achievable rate or not. Nonetheless, this single-edge upper bound has been shown to be tight, and therefore we can accurately utilize it in order to gain insight into the reliability of free-space links in a metropolitan network setting.  

For the fiber-backbone, we possess exact expressions for single-edge capacities. Therefore, to find the maximum fiber-length we can simply compare the backbone threshold capacity to the PLOB bound and arrive at the result
\begin{align}
d_{b}^{\max} 
&=-\frac{1}{\gamma} \log_{10}\left(1 - 2^{-{\mc{C}_{c : b}^m}/{H_{\min}^*} }\right).
\end{align}
This completes the proof.
\end{proof}

\section{Collective Node Isolation \label{sec:CommIso}}

\subsection{Definition and Motivation}

Consider a network based on an underlying undirected graph $\mc{N} = (P,E)$, and some collection of $n$-network nodes $\bff{I} = \{ \bs{i}_1, \bs{i}_2, \ldots \bs{i}_n\} \subset P$ within it. Here, we will define $\bff{I}$ as a set of \textit{target-nodes} that we are interested in. We define the task of \textit{Collective Node Isolation} as that of determining the smallest cut-set of edges $\tilde{C}_{\min} \subset E$ that need to be removed from the network in order to form a sub-graph $\mc{N}_{\bff{I}} = (P_{\bff{I}}, E_{\bff{I}})$ within which all the target nodes are contained, i.e.~$\bff{I} \subseteq P_{\bff{I}}$. Importantly, this sub-graph need not be exclusively consistent of target nodes, but can also possess additional nodes. This question is relevant as it emerges within an unweighted minimum-cut problem for distant collections of nodes on a highly-connected network. That is, given some disjoint collections of sender nodes $\bff{A}$ and receiver nodes $\bff{B}$, what is the minimum cut-set cardinality required to partition these collections of end-users?

Clearly, for a completely general network it is by no means obvious what this cut-set is. However, by asserting some form of connectivity constraints it is possible to gain some useful analytical insight. In particular we are interested in $k$-regular networks, relevant for the regular backbone networks studied within the main-text. The high level of connectivity guaranteed by regularity ensures that for given a pair of individual end-user nodes, the cut-set with the smallest cardinality (neglecting boundary effects) will always be found via nodal isolation. This is because regularity guarantees a high growth rate for cut-set sizes as one moves further away from either end-user; hence the closer one remains to either end-user, the smaller the cut-set will be. Regular networks with this property are defined as super-connected. 

Hence, collective node isolation can be used to identify minimum-cut set sizes on a super-connected graph when it is necessary to isolate a number of particular nodes, $\bff{I}$. This is a generalization of the work in Ref.~\cite{OPGQN} in which the focus is nodal isolation on weakly-regular graphs. In a modular network setting, collective node isolation is important for identifying minimum cut-set cardinalities when restricted to a particular sub-network of the global model. This is made clear via its application in the main text. 
In the following, we devise the general result for the cut-set size of collective node isolation on regular networks. 

\subsection{Minimum Cut-Set Cardinality}

Consider a $k$-regular network $\mc{N} = (P,E)$ and two specific disjoint collections of target nodes labelled $\{\bff{A},\bff{B}\} \subset P$. These collections are used to represented end-user connected nodes on an intermediate sub-network within a modular structure. We wish to derive an expression for the minimum-cut set size required to completely partition the collections.

Let us define two Kronecker-delta like functions which are useful in this context. First, we define a function which specifies whether a generic node $\bs{x} \in P$ is actually a target node from one of the collections,
\begin{equation}
\delta_{\bff{I}}(\bs{x}) \defeq
\begin{cases}
1, & \text{ if } \bs{x} \in \bff{I},\\
0, & \text{ otherwise},
\end{cases}
\text{ where } \bff{I} \in \{\bff{A},\bff{B}\}.
\end{equation}
Then, we define the following neighbor-sharing counting function. For a given non-target node $\bs{x} \in P\setminus \{\bff{A},\bff{B}\}$, this counts the total number of connections that $\bs{x}$ has to target-nodes,
\begin{equation}
F_{\bff{I}} (\bs{x}) = \sum_{\bs{y} \in N_{\bs{x}}} \delta_{\bff{I}}(\bs{y})
\end{equation}
where $N_{\bs{x}} \defeq \{ \bs{y} \in P~|~(\bs{x},\bs{y}) \in E \}$ defines the neighborhood nodes of the node $\bs{x}$.

Consider either collection of target-connected nodes, $\bff{I}\in\{\bff{A},\bff{B}\}$. A regular network is super-connected, hence the minimum cut-set cardinality is always achieved by neighborhood isolation. In this way, the largest set generated by neighborhood isolation occurs when all the users are sufficiently separated so that they do not share any edges or any neighbors. Then the cut-set has cardinality $|\tilde{C}| = k |\bff{I}|$. This is always an upper-bound on the minimum cut-set size. However, the potential for target-nodes sharing edges and sharing neighbors can diminish this cut-set size, since redundant edges may emerge. Therefore, we can introduce corrective terms which remove redundant edges from the cut-set (dependent on the distribution of target-nodes).

The first corrective factor removes all copies of edges that are directly shared between target-nodes, since they do not facilitate information flow outside of the partition,
\begin{equation}
\mc{S}_{\text{E}}(\bff{I}) = \sum_{\bs{i}\in\bff{I}; \bs{x}\in N_{\bs{i}}} \delta_{\bff{I}}(\bs{x}).
\end{equation}
The second corrective term accounts for the effect of non-target-nodes which are connected to multiple target-nodes. Let us define the set of all non-user nodes in the network as $P^{\prime} = P\setminus\{\bff{A},\bff{B}\}$. Then, we further define the set of non-target nodes that are also neighbors of target-nodes, given by
\begin{equation}
P_{\bff{I}}^{\prime} \defeq  P^{\prime} \bigcap_{\bs{i}\in \bff{I}} N_{\bs{i}}.
\end{equation}
When non-target nodes are connected to multiple targets, sometimes it is better to cut its external edges rather than the internal edges connected to the targets. This is because there may be less external edges which when removed are still capable of partitioning the target nodes (this detail is illustrated in Fig.~2 in the main-text). Overall, the correction is found by iterating over all the nodes in $\bs{x} \in P_{\bff{I}}^{\prime}$ and deciding whether a superior cut can be found,
\begin{equation}
\mc{S}_{\text{N}}(k,\bff{I}) = \sum_{\bs{x}\in P_{\bff{I}}^{\prime}} \max\{ 0, 2F_{\bff{I}} (\bs{x}) -k\}.
\end{equation}
Piecing these corrections together, we arrive at a completely general function that computes the minimum cut-set size for a target-node distribution $\bff{I}$ on a regular network,
\begin{gather}
 H_{\min}(k,\bff{I}) \defeq k|\bff{I}| -\mc{S}_{\text{E}}(\bff{I}) - \mc{S}_{\text{N}}(k,\bff{I}) . \label{eq:MinCutH}
\end{gather}

Hence, in the context of a minimum cut between two collections of target-nodes $\bff{A}, \bff{B}$, we can then simply choose the set which minimizes the cut-set size, $\min_{\bff{I}\in\{\bff{A},\bff{B}\}}  H_{\min}(k,\bff{I})$. 

\subsection{Weakly-Regular Neighborhood Isolation}
We can use the example of weakly-regular networks from Ref.~\cite{OPGQN} to show the generality of the previous expression. In that work, it was important to determine the minimum cut-set cardinality that could be achieved when one is not permitted to cut neighborhood edges of some potential end-users, $\bs{\alpha}$ and $\bs{\beta}$. In doing so, it was possible to derive conditions on the weakly-regular network for which the flooding capacity was always the minimum neighborhood capacity between the users. 
Interestingly, this is equivalent to asking: What is the minimum cut-set size related to collectively isolating a user neighborhood $\bff{A} = N_{\bs{\alpha}}$ or $\bff{B} = N_{\bs{\beta}}$ on a weakly regular network? We can show that the result in Eq.~(\ref{eq:MinCutH}) can reproduce the result found from this investigation. 

A weakly-regular network is a network architecture based on an undirected graph $\mc{N}_{\text{WR}}=(P,E)$ which has the specific connectivity properties. In a $(k,\bs{\Lambda})$-weakly-regular network, for any node $\bs{x}$, there is a multiset of values $\bs{\lambda}_{\bs{x}}$ which collects  the number of of common neighbors shared between $\bs{x}$ and each $\bs{y} \in N_{\bs{x}}$. That is,
\begin{equation}
\bs{\lambda}_{\bs{x}} \defeq \{ |N_{\bs{x}} \cap N_{\bs{y}}| ~|~\bs{y}\in N_{\bs{x}} \}.
\end{equation}
This is known as the \textit{adjacent commonality multiset}
A network is $(k,\bs{\Lambda})$-weakly-regular if each node is connected to exactly $k$ other nodes, and each adjacent commonality multiset belongs to the superset $\bs{\Lambda}$ such that $\bs{\lambda}_{\bs{x}} \in \bs{\Lambda}$ for all $\bs{x}\in P$. For more details on these kinds of network, please see Ref.~\cite{OPGQN}.

 
 Here, we focus on a scenario in which there is only one non-degenerate adjacent commonality multiset, i.e.~$\bs{\Lambda} = \{ \bs{\lambda} \}$, with the implicit understand that this can be extended. Hence, each node has $k$-neighbors, and the distribution of adjacent commonalities always follows $\bs{\lambda} = \{ \lambda_1,\ldots,\lambda_k\}$. As a result, every node $\bs{i}$ has $k$-neighbors, and shares a unique number of common neighbors with each of them $\lambda_j \in \bs{\lambda}$. Consider performing collective isolation of the $k$-element neighborhood of some node $\bs{i}$ in this network. The maximum cut-set size is of course 
\begin{equation}
k |\bff{I}| = k| N_{\bs{i}}| = k^2,
\end{equation}
but this must be reduced due to edge-sharing and neighbor-sharing corrective factors. Indeed, the $j^{\text{th}}$ neighbor is connected to exactly $\lambda_j$ other neighbors of $\bs{i}$, leading to the edge-sharing correction
\begin{equation}
\mc{S}_E ( N_{\bs{i}} ) = \sum_{j=1}^k \lambda_j.
\end{equation}
Meanwhile, clearly the entire neighborhood shares a single non-target node (the original user node $\bs{i}$), meaning that $P_{\bff{I}}^{\prime} = \{\bs{i} \}$. This leads to the neighbor-sharing correction
\begin{equation}
\mc{S}_{\text{N}}(k,N_{\bs{i}}) = \max\{ 0, 2F_{\bff{I}} -k\}  = k.
\end{equation}
As a result, the cut-set size is given by,
\begin{align}
 H_{\min}(k,N_{\bs{i}}) &= k|N_{\bs{i}}| - \sum_{j=1}^k \lambda_j - k = \sum_{j=1}^{k} (k-\lambda_j -1). 
\end{align}
This is the result reported in Ref.~\cite{OPGQN} where it was derived in a more direct fashion.

\subsection{Bounds on Collective Node Isolation for Backbone Cuts}

The function $ H_{\min}(k,\bff{I})$ can be used for any general distribution of target-nodes. However, in the modular setting it is very easy to write bounds on the minimum cut-set size for network cuts performed exclusively on the backbone. Given a $k$-regular backbone network, we can write the following bounds
\begin{equation}
k \leq   H_{\min}(k,\bff{I}) \leq k |\bff{I}|.
\end{equation}
The lower bound corresponds to a situation where all the intercommunity edges are connected to a single node on the backbone. In this case it is sufficient to simply isolate the connected node on the backbone. The upper-bound refers to a situation where the intercommunity edges are connected to the backbone in such a way that the target-nodes do not share any edges or neighbors. The cut-set size for all other distributions $\bff{I}$ fall within these bounds. Examples are illustrated for a Manhattan backbone network in Fig.~2 in the main text\\

\section{Considerations for interSatellite Networks \label{sec:InterSat}}

\subsection{Line-of-Sight Distance Limits for interSatellite Channels}
Consider two satellites in circular orbits arounds the Earth at altitudes. Satellites can only communicate with one another of they are within each other's line of sight. Hence, it's easy to identify a geometric upper bound on the intersatellite separation between two satellites. 

First, consider the two satellites at positions $A$ and $B$ to be at equivalent altitudes $h$. By drawing a chord AB tangential to the Earth's surface from one satellite to the other, we can identify the maximum line-of-sight separation, $z_{\text{sight}}^{\max}$ (see Fig.~\ref{fig:GeoE} for a geometrical insight). Label the centre of the Earth E, its radius $R_{\text{E}}$ and the point at which the chord touches the Earth's surface $S$. Denoting the angle $\angle{EAS}$ as $\alpha$, this will satisfy
\begin{equation}
\sin \alpha = \frac{R_{\text{E}}}{R_{\text{E}} + h}
\end{equation}
With this in hand, we find that the distance $AB = 2 AS$, since the triangle EBA is clearly isosceles. That is,
\begin{align}
z_{\text{sight}}^{\max} &= 2 (h + R_{\text{E}}) \cos \alpha 
= \frac{2h(h + 2R_{\text{E}}) }{h + R_{\text{E}} }.
\end{align}
For two satellites which are at different altitudes $h_1 \leq h_2$, this maximum distance is extended to
\begin{align}
z_{\text{sight}}^{\max} &= 
\frac{h_1(h_1 + 2R_{\text{E}}) }{h_1 + R_{\text{E}} } + \frac{h_2(h_2 + 2R_{\text{E}}) }{h_2 + R_{\text{E}} },
\end{align}
which follows intuitively from the previous geometrical considerations.

\begin{figure}
\includegraphics[width=0.6\linewidth]{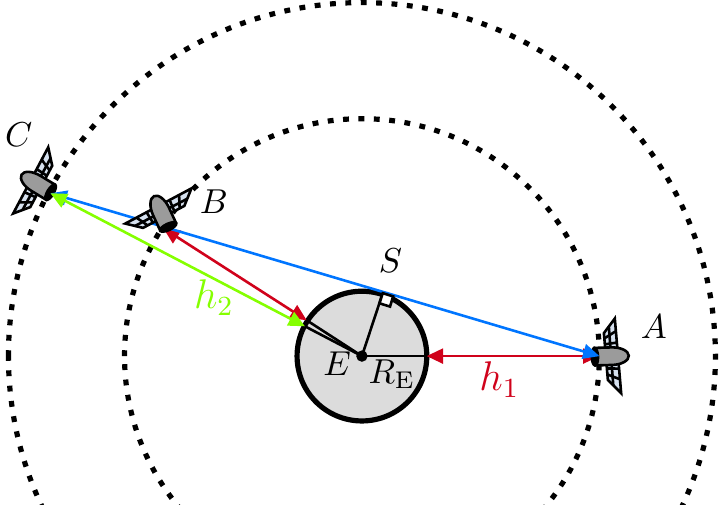}
\caption{The longest possible intersatellite quantum channel is limited by the line-of-sight separation of two satellites. }
\label{fig:GeoE}
\end{figure}

\subsection{Analytical Bounds for the Maximum interSatellite Separation}

It is possible to analytically upper and lower bound the maximum intersatellite separation within the backbone $z_{b}^{\max}$ from Corollary 1 in the main text. An upper-bound is found by considering a lack of pointing errors, which means the channel is no longer a fading channel but is instead the a fixed lossy channel with the maximum possible transmissivity. Therefore we can always write the upper-bound,
\begin{align}
z_{b}^{\max}  &\leq z_R \sqrt{\frac{2 a_R^2}{w_0^2 \ln\left[ \frac{\eta_{\text{{eff}}}}{\eta_{\text{{eff}}} -1 + 2^{-{ \mc{C}_{c : b}^m}/{ H_{\min}^* }}} \right]} - 1} .
\end{align}
This is easily derived using the ideal pure-loss single-edge capacity upper-bound from Theorem 1.

Meanwhile, we can find a lower-bound on the maximum intersatellite separation by considering the use of slow detectors. A slow detector at the receiver will not be able to resolve pointing errors, resulting in a lossy channel with fixed transmissivity averaged over the entire fading process. For intersatellite channels this is considered through the long-term spot size, by replacing the ideal diffraction limited spot size $w_d^2$ with $w_{\text{lt}}^2 = w_d^2 + \sigma_{\text{p}}^2$ into the capacity formula,
\begin{equation}
\mc{C} \leq \mc{B}_{\text{slow}}(\eta_{\text{lt}}) = \frac{2a_R^2}{w_{\text{lt}}^2 \text{ln} 2}.
\end{equation}
Interestingly, the rate in bits \textit{per channel use} via slow detection can be higher than that for fast detectors which actually resolve the fading dynamics. But do not be mistaken; the slower detection time severely limits the \text{operational rate} at which the channel can actually be used (or clock rate). As a result, the point-to-point communication rate via slow detection will be orders of magnitude smaller than those with fading-resolving setups.
Therefore, it is essential to explicitly consider the clock rate $\alpha$ (channel uses/second) when comparing fast and slow detector protocols \cite{FS,SQC}. In any case, the maximum intersatellite separation will be lower-bounded by
\begin{align}
z_{b}^{\max} &\geq \sqrt{\frac{2 a_R^2}{\big(\frac{w_0^2}{z_R^2} +\epsilon_{\text{p}}^2\big) \ln\left[ \frac{\eta_{\text{eff}}}{\eta_{\text{eff}} - 1 + 2^{-{ \tilde{\alpha} \mc{C}_{c : b}^m}/{  H_{\min}^* }}} \right]} - w_0^2},
\end{align}
where $\epsilon_{\text{p}} = 10^{-6}$ comes from the point error variance $\sigma_{\text{p}}^2 = (\epsilon_{\text{p}} z)^2$, and $\tilde{\alpha} = \alpha_{{c:b}}/ \alpha_{{b}}$ is the ratio between the clocks used by the intercommunity sub-network and the backbone network.


%

\end{widetext}
\end{document}